\documentclass[12pt,a4paper]{article} 

\usepackage[utf8]{inputenc}
\usepackage[english]{babel}
\usepackage[T1]{fontenc}
\usepackage{dirtytalk} 
\usepackage{ragged2e}
\usepackage{geometry}
\usepackage[table]{xcolor}
\definecolor{olive}{rgb}{0.3, 0.4, .1}
\definecolor{fore}{RGB}{249,242,215}
\definecolor{back}{RGB}{51,51,51}
\definecolor{title}{RGB}{255,0,90}
\definecolor{blackViolet}{RGB}{138,43,226}
\definecolor{gold}{rgb}{1.,0.84,0.}
\definecolor{JungleGreen}{cmyk}{0.99,0,0.52,0}
\definecolor{blackGreen}{cmyk}{0.85,0,0.33,0}
\definecolor{RawSienna}{cmyk}{0,0.72,1,0.45}
\definecolor{Magenta}{cmyk}{0,1,0,0}
\definecolor{wood}{RGB}{139,115,85}
\definecolor{dorange}{RGB}{255,127,0}
\definecolor{dolive}{RGB}{85,107,47}
\definecolor{drg}{RGB}{255,165,0}
\definecolor{dgreen}{rgb}{0,.5,0}
\definecolor{dblue}{rgb}{0,0,.5}
\definecolor{dred}{rgb}{0.5,0,.5}

\usepackage{graphicx}
\usepackage{rotating}
\usepackage{tikz}
\usetikzlibrary{decorations.markings}

\usepackage{gnuplottex}
\usepackage{float}

\usepackage{xspace}

\usepackage{titlesec}
\titlespacing*{\section}{0pt}{2ex}{2ex}
\titlespacing*{\subsection}{0pt}{2ex}{2ex} 
\titlespacing*{\subsubsection}{0pt}{2ex}{2ex}

\titleformat*{\section}{\large\bfseries}
\titleformat*{\subsection}{\large\bfseries}
\titleformat*{\subsubsection}{\large\bfseries}
\titleformat*{\paragraph}{\large\bfseries}
\titleformat*{\subparagraph}{\large\bfseries}

\usepackage[colorlinks,citecolor=blue,urlcolor=blue,linkcolor=blue]{hyperref}
\usepackage{cite}

\usepackage{colortbl}
\usepackage{dcolumn}
\usepackage{multirow}
\usepackage{longtable}
\usepackage{arydshln}
\usepackage{array}
\usepackage{tabularx, booktabs}
\newcolumntype{Y}{>{\centering\arraybackslash}X}

\usepackage{mathtools}
\usepackage{amsmath}
\usepackage{amsfonts}
\usepackage{amssymb} 
\usepackage{amsthm}
\usepackage{mathrsfs}
\usepackage{dsfont}
\usepackage{bm}
\usepackage{bbm}
\DeclareMathAlphabet{\mathpzc}{OT1}{pzc}{m}{it}
\newcommand\tvecpre{{\mathpalette\raiseT\top}}
\newcommand\raiseT[2]{\raisebox{-0.4ex}{$#1#2$}}
\newcommand\tvec{\mkern-2mu^\tvecpre\!}

\usepackage{stmaryrd}

\makeatletter
\newcommand\xLongLeftRightArrow[2][]{%
  \ext@arrow 0099{\LongLeftRightArrowfill@}{#1}{#2}}
\def\LongLeftRightArrowfill@{%
  \arrowfill@\Leftarrow\Relbar\Rightarrow}
\makeatother

\usepackage{simplewick}
\makeatletter
\ExplSyntaxOn
\RenewDocumentCommand{\acontraction}{ O{1ex} O{black} m m m m}{%
  \tl_if_empty:nTF{#1}{\def\wickoffset{1ex}}{\def\wickoffset{#1}}%
  {\color{#2}%
  \mathchoice
    {\acontraction@\displaystyle{#3}{#4}{#5}{#6}{\wickoffset}}%
    {\acontraction@\textstyle{#3}{#4}{#5}{#6}{\wickoffset}}%
    {\acontraction@\scriptstyle{#3}{#4}{#5}{#6}{\wickoffset}}%
    {\acontraction@\scriptscriptstyle{#3}{#4}{#5}{#6}{\wickoffset}}}}%
\RenewDocumentCommand{\bcontraction}{ O{1ex} O{black} m m m m}{%
  \tl_if_empty:nTF{#1}{\def\wickoffset{1ex}}{\def\wickoffset{#1}}%
  {\color{#2}%
  \mathchoice
    {\bcontraction@\displaystyle{#3}{#4}{#5}{#6}{\wickoffset}}%
    {\bcontraction@\textstyle{#3}{#4}{#5}{#6}{\wickoffset}}%
    {\bcontraction@\scriptstyle{#3}{#4}{#5}{#6}{\wickoffset}}%
    {\bcontraction@\scriptscriptstyle{#3}{#4}{#5}{#6}{\wickoffset}}}}%
\ExplSyntaxOff
\makeatletter

\newcommand\reallywidehat[1]{%
\savestack{\tmpbox}{\stretchto{%
  \scaleto{%
    \scalerel*[\widthof{\ensuremath{#1}}]{\kern-.6pt\bigwedge\kern-.6pt}%
    {\rule[-\textdepth/2]{1ex}{\textdepth}}
  }{\textdepth}%
}{0.5ex}}%
\stackon[1pt]{#1}{\tmpbox}%
}

\usepackage{physics} 
\usepackage{braket}

\newtheorem{conjecture}{Conjecture}[section]
\newtheorem{lemma}{Lemma}[section]
\newtheorem{corollary}{Corollary}[section]
\newtheorem{proposition}{Proposition}[section]
\newtheorem{definition}{Definition}[section]

\newtheorem{example}{Example}[section]
\newtheorem{remark}{Remark}[section]

\usepackage{algorithm}
\usepackage[noend]{algpseudocode}
\usepackage{listings}

\newcommand{\wdag}{{\vphantom{\dag}}}
\newcommand{\hatcal}[1]{\hat{\mathcal{#1}}}

\newcommand{\tcr}{\textcolor{red}}
\newcommand{\tcb}{\textcolor{blue}}

\if False
\documentclass[multi=my,crop]{standalone}

\usepackage{mathtools}
\usepackage[utf8]{inputenc}
\usepackage[french]{babel}
\usepackage[T1]{fontenc}
\usepackage{amsmath}
\usepackage{amsfonts}
\usepackage{bm}
\usepackage{stmaryrd}
\usepackage{amssymb}
\usepackage{mathrsfs}
\usepackage{bm}
\usepackage{graphicx}
\usepackage{ragged2e}
\usepackage{fancybox}
\usepackage{bbm}

\usepackage{tikz}
\usetikzlibrary{backgrounds}

\pgfdeclarelayer{bg}    
\pgfsetlayers{bg,main}
\usepackage{colortbl}
\usepackage{multimedia}
\usepackage{xcolor}
\usepackage{braket}

\usepackage{amsfonts}
\usepackage{ifthen}

\fi

\newcommand{\pdeex}[3]{
	\def\h{.5}
	\def\x{#1}
	\def\r{.5}
	\draw[line width=1pt]  
		(\x+\r, \h+\r)node[above] {$#2$}  
		arc (90:180:\r) 
		-- (\x, -\h) 
		arc (180:270:\r) node[below] {$#3$};
}

\newcommand{\pexci}[3]{
	\def\h{.5}
	\def\x{#1}
	\def\r{.5}
	\draw[line width=1pt] 
		(\x-\r, \h+\r) node[above] {$#2$}  
		arc (90:0:\r) 
		-- (\x, -\h) 
		arc (0:-90:\r) node[below] {$#3$};
}

\newcommand{\padaga}[5]{
	\def\x{#1}
	\def\e{.015}
	\def\r{.5}
	
	\ifthenelse{\equal{\detokenize{#2}}{\detokenize{occ}}}
		{ \draw[fill] (\x,-\e) arc (270:90:\e+\r) node[above] {$#3$}
			-- (\x,1+\e) arc (90:270:\r) 
			-- cycle; }
		{ \ifthenelse{\equal{\detokenize{$#2$}}{\detokenize{vir}}}	
			{ \draw[fill] (\x,\e)
				-- (\x-2*\e,\e)
				-- (\x-2*\e,\e)
				-- (\x-2*\e,-\e) arc (0:-90:2*\r) node[below] {$#3$}
				-- (\x-2*\r,-2*\r-\e) arc (-90:0:2*\r) --cycle;}
			{ \draw (\x-1,0) node[above] {$#3$};
			  \def\x{#1-1}}
		}
	
	\draw[fill] (\x,\e) 
		-- (\x+2,\e) 
		-- (\x+2,-\e) 
		-- (\x,-\e) 
		-- cycle;
	
	\ifthenelse{\equal{\detokenize{#4}}{\detokenize{occ}}}
		{ \draw[fill] (\x+2,\e) arc (-90:90:\r)  node[above] {$#5$}
			-- (\x+2,1+\e) arc (90:-90:\e+\r) 
			-- cycle; ;}
		{\ifthenelse{\equal{\detokenize{#4}}{\detokenize{vir}}}
			{\draw[fill] (\x+2,\e) 
			-- (\x+2*\e+2,\e) 
			-- (\x+2*\e+2,\e)
			-- (\x+2*\e+2,-\e) arc (180:270:2*\r) node[below] {$#5$}
			-- (\x+2*\r+2,-2*\r-\e) arc (270:180:2*\r) --cycle;}
			{\draw (\x+2,0) node[above] {$#5$};}
		}	
}

\newcommand{\adaga}[4]{
  \ifthenelse{\equal{\detokenize{#1}}{\detokenize{exci}}}
   { \draw[line width=1pt] (0,-1) arc(-90:0:1);
    \tikzset{shift={(1,0)}} }{}
  \ifthenelse{\equal{\detokenize{#1}}{\detokenize{deex}}}
   { \draw[line width=1pt] (0,1) arc(90:270:.5) -- (1,0);
    \tikzset{shift={(1,0)}} }{}

   \draw[line width=1pt] (0,0)node[above]{$#2$} -- (1.5,0)node[above]{$#4$};
   \tikzset{shift={(1.5,0)}}

  \ifthenelse{\equal{\detokenize{#3}}{\detokenize{deex}}}
   { \draw[line width=1pt] (0,0) arc(180:270:1);
    \tikzset{shift={(1,0)}} }{}
  \ifthenelse{\equal{\detokenize{#3}}{\detokenize{exci}}}
   { \draw[line width=1pt] (0,0) -- (1,0) arc(-90:90:.5);
    \tikzset{shift={(1,0)}} }{}

}

\newcommand{\exci}[2]{
   \draw[line width=1pt] 
 	     (0,-1) node[below right]{$#2$} arc (-90:0:1.5) 
 	  -- (1.5,.5) arc (0:90:.5) node[above right]{$#1$};
   \tikzset{shift={(1,0)}}
}

\newcommand{\deex}[2]{
   \draw[line width=1pt] 
   		(0,1) node[above left]{$#1$} arc (90:180:.5) 
   	  --(-.5,.5) arc (180:270:1.5)node[below left]{$#2$};
   \tikzset{shift={(1,0)}}
}

\newcommand{\exciend}[2]{
   \draw[line width=1pt] 
 	     (0,-1) node[below right]{$#2$} arc (-90:0:.5) 
 	  -- (.5,.5) arc (0:90:.5) node[above right]{$#1$};

}

\newcommand{\deexend}[2]{
   \draw[line width=1pt] 
   		(0,1) node[above left]{$#1$} arc (90:180:.5) 
   	  --(-.5,-.5) arc (180:270:.5)node[below left]{$#2$};
}

\if False
\usepackage{pgfplots}
\fi

\usetikzlibrary{calc}

\makeatletter
\newcommand\footnoteref[1]{\protected@xdef\@thefnmark{\ref{#1}}\@footnotemark}
\makeatother

\begin{document}
\title{\normalsize{\textbf{Is the TDDFRT representation of a molecular electronic transition unique?}}}

\date{} 

\maketitle

\vspace*{-2cm}

\noindent \begin{center}
\textit{by} Jérémy Morere\footnote{\label{note1}\textit{Université de Lorraine,} CNRS, LPCT, \textit{F-54000 Nancy, France}}{}$^,$\footnote{jeremy.morere@univ-lorraine.fr} \textit{and} Thibaud Etienne\footnoteref{note1}{}$^,$\footnote{thibaud.etienne@univ-lorraine.fr}\\
\end{center}

$\;$

\begin{abstract}
\noindent In this article a piece of the reference time-dependent density-functional response theory (TDDFRT) representation of molecular electronic transitions is studied through the one-body reduced difference density matrix. A first derivation route for that object has been reported in the literature based on the substitution of TDDFRT-related objects into an exact functional expression. We show in the text that the functional providing the exact object is not unique, and we question whether substituting the same TDDFRT-related objects into the alternative functional expressions leads to acceptable candidates for the approximate TDDFRT one-body reduced difference density matrix, which would question the unequivocality of the TDDFRT representation of molecular electronic transitions. We first directly address the problem using a novel diagrammatic language, inspired by Dyck languages, and specifically designed for dealing with this kind of issue. In the last part of the article we question whether it is possible or not to recast the problem in a more general framework — the equations-of-motion framework. Through all the text, the case of time-dependent Hartree-Fock is compared with that of TDDFRT to highlight what makes them so different when those questions are at stake. 
\\ $\;$ \\ 
\textit{Keywords: Time-dependent density-functional response theory, equation-of-motion framework, one-body reduced difference density matrix.}
\end{abstract}

$\;$

\section{Introduction}

The process by which a molecular system either absorbs or emits photons is ubiquitous in nature and has been extensively studied by both theoreticians and experimentalists. When the molecule undergoes a light-induced electronic transition, its electronic cloud is reorganized. Trying to provide accurate theoretical rationalization and predictions of the optical properties of molecules is a conceptually and technically difficult task and necessitates an insightful understanding of the physical processes at stake when light-matter interaction is concerned. Those efforts have already resulted in the creation of theories, and ``methods'' for computationally approaching the ground- and excited-state properties of complex molecular systems. Those approaches may involve wave functions \cite{maurice_configuration_1995-1,david_sherrill_configuration_1999,sekino_linear_1984,koch_coupled_1990}, electron densities \cite{hirata_configuration_1999,casida_time-dependent_1995,ziegler_derivation_2014,fromager_individual_2020}, one-body reduced density matrices \cite{pernal_time-dependent_2007-1}, or Green's functions \cite{rebolini_electronic_2013,leng_gw_2018,oddershede_polarization_1978,strinati_application_1988} as central objects of interest.

In this paper we will not try to contribute to a potential answer to the ``How is the electronic structure of a molecule reorganized upon light absorption or emission?'' question. We will rather discuss the \textit{representation} of this electronic structure reorganization in our theories. Before going further, we provide here what is meant by ``representation of a molecular electronic transition'' in this paper:

\begin{definition}
The representation of a molecular electronic transition is a quadruple of sets $(\mathcal{O},\mathcal{M}, \mathcal{R},\mathcal{P})$ relating the two electronic states involved in the electronic transition, either explicitly or implicitly. $\mathcal{O}$ is a non-empty set of objects — e.g., numbers, lists, matrix representations, sets, structures; $\mathcal{M}$ is a set of maps which are such that their argument(s) can be taken in $\mathcal{O}$; the $\mathcal{R}$ set is a set of relations — e.g., {\normalfont ``$\leq$'', ``$\subseteq$'', ``$=$''.} The $\mathcal{P}$ set is a set of propositions involving at least one element of $\mathcal{O}$, and optionally elements from $\mathcal{M}$ and/or $\mathcal{R}$.
\end{definition}

\noindent Such a representation is usually associated with the theory, method, or model used for computationally approaching the electronic transition properties. Of course, the choice of the approach that is appropriate for such a description will strongly vary according to the nature of the system of interest, but there are now plenty of qualitative and quantitative tools used for analyzing the electronic structure reorganization of molecular systems during the transition between two electronic states — a task which generally consists in post-processing electronic excited-state quantum-chemical calculations.\cite{dreuw_single-reference_2005,plasser_new_2014,savarese_metrics_2017} For example, there are qualitative and quantitative studies involving the analysis of the so-called ``exciton wave function'' \cite{luzanov_interpretation_1980,furche_density_2001,tretiak_density_2002,
martin_natural_2003,batista_et_martin_natural_2004,tretiak_exciton_2005,dreuw_single-reference_2005,
mayer_using_2007,surjan_natural_2007,wu_exciton_2008,luzanov_electron_2010,li_time-dependent_2011,plasser_analysis_2012,plasser_new_2014,pluhar_visualizing_2018,plasser_newbis_2014,bappler_exciton_2014,
mewes2015communication,li_particlehole_2015,li_particle-hole_2016,plasser2015statistical,
poidevin_truncated_2016,wenzel_physical_2016,
plasser2016entanglement,savarese_metrics_2017,plasser_detailed_2017,
mai_quantitative_2018,mewes_benchmarking_2018,
skomorowski_real_2018,park_low_2018}. Other studies involve one-body reduced density matrices and one-body reduced density functions, either from a qualitative
\cite{HeadGordonJPhysChem1995,dreuw_single-reference_2005,plasser_new_2014,plasser_newbis_2014,etienne_new_2014,ronca_charge-displacement_2014,ronca_density_2014,
pastore2017unveiling} or a quantitative perspective using indicators\cite{luzanov_interpretation_1980,peach_excitation_2008,luzanov_electron_2010,
le_bahers_qualitative_2011,
garcia_evaluating_2013,GuidoJChemTheoryComput2013,etienne_toward_2014,guido_effective_2014,plasser2015statistical,
wenzel_physical_2016,poidevin_truncated_2016,plasser2016entanglement,savarese_metrics_2017,pastore2017unveiling,
plasser_detailed_2017,mai_quantitative_2018,
barca_excitation_2018,mewes_benchmarking_2018,
campetella_quantifying_2019}. Those indicators are generally scalars, or vectors in $\mathbb{R}^3$.

In this context, one object is often at the center of the analysis: the (un)relaxed one-body reduced difference density matrix — two molecular electronic states are involved in an electronic transition; the one-body reduced difference density matrix corresponding to that transition is a sum of partial traces of the difference of the two state projectors — see Ref. \cite{etienne_towards_2021}. This object will be the object of interest in this study. On the other hand, when finite-dimensional systems are concerned, \textit{time-dependent density-functional response theory} (TDDFRT \cite{casida_time-dependent_1995,casida_time-dependent_2009,dreuw_single-reference_2005,hirata_configuration_1999,ferre_density-functional_2016}) is probably the most used electronic excited-state calculation method. In this contribution we will therefore use the \textit{unrelaxed} one-body reduced difference density matrix for discussing the TDDFRT representation of molecular electronic transitions — the potential conclusions one can draw about the unrelaxed matrix actually transfer to the relaxed one.

Before going further, we need to recall that TDDFRT electronic transition properties calculations are usually done using a reference, auxiliary one-determinant state, namely the Kohn-Sham state, issued from a density-functional theory (DFT) calculation of ground-state properties. In the Kohn-Sham hypothesis, that state is not the DFT ground state but rather a wave function such that the one-body reduced density function built with it is the DFT ground-state one-body reduced density function. We also recall that TDDFRT does not come with an ansatz for the electronic excited states.

There exists a \textit{reference} expression for the TDDFRT one-body reduced difference density matrix \cite{van2000geometric,furche_adiabatic_2002,ipatov_excited-state_2009,casida_time-dependent_2009,wang2021nactddfttimedependentdensityfunctional,villalobos2023lagrangian} whose structure is identical to the one we have for the Time-Dependent Hartree-Fock (TDHF) method \cite{mclachlan_time-dependent_1964-1}. That matrix is derived using a parallel with TDHF in \cite{ipatov_excited-state_2009}, i.e., by substituting a method--specific operator to the exact state-transfer operator and by substituting the exact ground state by a one-determinant reference state in an exact double-commutator expression of the difference of expectation value of an operator between two exact electronic states.

In this paper we start by proving that the exact, reference double-commutator expression of the difference of expectation value of an operator between two exact electronic states is not unique. This fact leads us to question whether the same substitutions that lead to the reference TDDFRT one-body reduced difference density matrix can be done in the alternative expressions we bring forward. In other words, we would like to see whether the matrices we obtain with these alternative constructions are (i) different from the reference matrix, and (ii) acceptable or unacceptable as approximations of the exact one-body reduced density matrix. We believe the question is important because if more than one matrix is acceptable for approximating the exact object, then it seems to us legitimate to question whether the TDDFRT representation of a molecular electronic transition is equivocal or not. We approach our problem in three ways: First, we directly address the problem by using a diagrammatic language that has been specifically designed for that purpose since we could not address the problem in first and second quantization. The language used here is, in fact, the diagrammatic extension of a generalization of Dyck languages. We have recently introduced what we believe is a strong connection between Dyck languages and Fock-space fermionic second quantization \cite{morere2026dycklanguagefermionicsecondI}, and extended and applied this approach to the cases of (nested) commutators \cite{morere2026dycklanguagefermionicsecondII}, which are of particular interest here. We finalize our diagrammatic extension of the language in this text, and apply it for solving our problem in its original formulation — i.e., using reference double-commutator expressions and the appropriate substitutions —, and show that only the reference matrix is symmetric, which is a strict requirement here. Things become more complicated in a second step when we notice that in fact our problem can be recast in a more general formulation where the reference double-commutator expressions are all replaced by symmetrized double-commutator expressions. Finally, we try to address the problem from a different perspective by questioning whether the \textit{equations-of-motion} framework \cite{herman1981analysis,herman1979analysis,mckoy1980equations,rowe2010nuclear,rowe_equations--motion_1968,rowe1966intepretation,yeager_equations_1975,lynch_excited_nodate} allows us to deal with our issue.

\section{Background, hypotheses, and notation}

\subsection{Notation}

\noindent Through all the text, $\mathbb{K}$ denotes either the $\mathbb{C}$ or the $\mathbb{R}$ field. Bold Greek letters and bold, upper case letters from Latin alphabet will denote matrices with more than one column. For instance, whereas $\textbf{x}_{ia}$ will denote the $ia$ component of vector \textbf{x} where $ia$ is assimilated to an individual integer, $({\textbf{X}})_{i,a}$ will denote the $i\times a$ element of matrix ${\textbf{X}}$ and $(\textbf{A})_{ia,jb}$ will denote the $(ia)\times (jb)$ element of matrix $\textbf{A}$, where again $ia$ and $jb$ are assimilated to individual integers.

Through all the text, the ``$\top$'' superscript will denote the \textit{transpose} of a matrix. Zero column vectors with $n$ components will be denoted $\textbf{0}_{n\times 1}$ while $n\times n$ zero (respectively, identity) matrices will be denoted $\textbf{0}_n$ (respectively, $\textbf{I}_n$). 

\subsection{Background and hypotheses}

\subsubsection{Generalities}

In this contribution we first deal with $N$--electron molecular systems described using one-body wave functions that are normalized to unity and mutually orthogonal. Those wave functions are gathered into a set:
$$B\coloneqq \left\lbrace \phi_r \, : \, r\in\mathbb{N}\right\rbrace.$$
\noindent In the absence of external field, any $N$--electron molecular system is here assumed to be describable by a time-independent, non-degenerate, Hermitian electronic Hamiltonian denoted $\hat{H}$. The $(M+1)$ first eigenstates of $\hat{H}$ are considered, and gathered in the
$$S\coloneqq \left\lbrace \Psi_i \, : \, i\in\llbracket 0, M\rrbracket\right\rbrace$$
\noindent orthonormal set. We also define
$$S_\mathbb{K} \coloneqq \mathrm{span}_\mathbb{K}S = \left\lbrace \sum _{i\in \llbracket 0, M\rrbracket} \lambda _i \Psi _i \, : \, \forall i \in \llbracket 0, M\rrbracket, \, \lambda_i \in \mathbb{K}\right\rbrace.$$ 
\noindent We can consider the
$$C \coloneqq \left\lbrace \bigwedge_{j=1}^N \phi _{i_j} \, : \, \forall j \in \llbracket 1, N \rrbracket, \, \phi_{i_j} \in {B} \right\rbrace$$
\noindent set, and ${C}_\mathbb{K} \coloneqq \mathrm{span}_\mathbb{K} {C}$. In the definition of $C$, the ``$\bigwedge$'' symbol denotes the totally antisymmetrized tensor product. If all the elements of $S$ can be written as linear combinations of totally antisymmetrized tensor products of elements of $B$, we have
$$\forall i\in\llbracket 0, M \rrbracket, \, \Psi_i \in {C}_{\mathbb{K}}.$$
\noindent We also recall the definition of \textit{endomorphism}, that will be useful in this paper.
\begin{definition}[Endomorphism of a vector space]
Let $V$ be a vector space. Any linear transformation from $V$ to itself is called {\normalfont endomorphism}. The set of all endormorphisms of $V$ is denoted $\mathrm{End}(V)$.
\end{definition}

\subsubsection{One-body reduced difference density matrix}

The one-body reduced difference density matrix element is defined as the difference between an excited state one-body reduced density matrix and the ground state one-body reduced density matrix. 

Let $n$ be a natural integer belonging to $\llbracket 1, M \rrbracket$. An element of the one-body reduced difference density matrix in $B$ corresponding to the $\Psi_0 \rightarrow \Psi_n$ electronic transition — say the $r\times s$ element — is then defined as
\begin{equation}\label{eq:1DDM}
(\boldsymbol{\gamma}^\Delta_{0\rightarrow n})_{r,s} = \bra{\Psi_{n}}\hat{s}^\dag\hat{r}\ket{\Psi_{n}} - \bra{\Psi_{0}}\hat{s}^\dag\hat{r}\ket{\Psi_{0}},
\end{equation}
where $\hat{s}^\dag$ is the fermionic creation operator corresponding to the $\phi_s$ spin-orbital, and $\hat{r}$ is the fermionic annihilation operator corresponding to the $\phi_r$ spin-orbital.

The importance of that matrix is readily seen when considering the difference of expectation value of any symmetric one-body operator $\hat{O}_1$ — assumed here to belong to $\mathrm{End}(S_\mathbb{K})$ — relatively to $\Psi_n$ and $\Psi_0$, which reads
$$\bra{\Psi_{n}}\hat{O}_1\ket{\Psi_{n}} - \bra{\Psi_{0}}\hat{O}_1\ket{\Psi_{0}} = \mathrm{tr}(\boldsymbol{\gamma}^\Delta_{0\rightarrow n}\textbf{O}_1),$$
where $\textbf{O}_1$ is the matrix representation of $\hat{O}_1$ in ${B}$, i.e.,
$$\textbf{O}_1 \coloneqq \mathcal{M}(\hat{O}_1,{B}).$$
Using the exact $\Psi_0 \rightarrow \Psi_n$ state-transfer operator, i.e.,
\begin{equation}\label{eq:def:exactSTO}
{\hat{T}}_{0\rightarrow n} \coloneqq \ket{\Psi_n}\bra{\Psi_0},
\end{equation}
the $r\times s$ element of the one-body reduced difference density matrix can be re-expressed as
\begin{equation}\label{eq:EOM_1DDM}
(\boldsymbol{\gamma}^\Delta_{0\rightarrow n})_{r,s} 
=
\left\langle \Psi_{0} \left| \left[ {\hat{T}}^\dag_{0\rightarrow n},\left[ \hat{s}^\dag\hat{r} , {\hat{T}}_{0\rightarrow n} \right] \right] \right| \Psi_{0} \right\rangle.
\end{equation}

\subsubsection{An important remark}

\noindent When dealing with approximate excited-state calculation methods we will actually work with a finite number of real-valued spin-orbitals that are normalized to unity and mutually orthogonal:
$$\mathcal{B} \coloneqq (\varphi_r)_{r\in\llbracket 1, L \rrbracket},$$
where $L$ is a natural integer strictly greater than $N$. We will also have to consider the
$$\mathcal{C} \coloneqq \left\lbrace \bigwedge_{j=1}^N \varphi _{i_j} \, : \, \forall j \in \llbracket 1, N \rrbracket, \, \varphi_{i_j} \in \mathcal{B} \right\rbrace$$
set, and its real span, i.e., $\mathcal{C}_\mathbb{R} \coloneqq \mathrm{span}_\mathbb{R} \mathcal{C}$.

In what follows we will need to compare propositions involving some expressions such as ``$\bra{\Psi_{n}}\hat{s}^\dag\hat{r}\ket{\Psi_{n}}$'', with $r$ and $s$ pointing at spin-orbitals from $B$, with propositions involving expressions in which the fermionic second quantization operators written using the same symbols actually do point to elements of the finite-dimensional $\mathcal{B}$. The two situations will be easily distinguished because we will use upper-cased ``$\Psi$'' symbols for the exact case and lower-cased ``$\psi$'' symbols for the approached case. Similarly, we will use italic letters for basis, operators and functionals in the exact case — e.g., ``$B\,$'', ``$\hat{O}\,$'', ``$\hat{T}\,$'', ``$F\,$'', and stylized letters for basis, operators and functionals in the approached case — e.g., ``$\mathcal{B}\,$'', ``$\hat{\mathcal{O}}\,$'', ``$\hatcal{T}\,$'', ``$\mathcal{F}\,$''.

\subsubsection{TDHF and TDDFRT response equation}

In what precedes we considered electronic transitions from the ground state to \textit{definite} arrival states. For the sake of readability of the formulas, when working out expressions in the TDHF or the TDDFRT framework, we will rather consider \textit{any} electronic transition — i.e., an electronic transition from the ground state to a non-definite arrival state. The central equation — \textit{vide infra} — for these methods admit some vector solutions,
\begin{eqnarray*}
\left(
\begin{array}{c}
\textbf{x}\\\textbf{y}
\end{array}
\right),
\end{eqnarray*}
that we assume have real-valued components, with $\textbf{x}\in \mathbb{R}^{N\times(L-N)}$ and $\textbf{y}\in \mathbb{R}^{N\times(L-N)}$. The \textbf{x} and \textbf{y} vectors satisfy
$$\textbf{x}\tvec\textbf{x}-\textbf{y}\tvec\textbf{y}=1,$$
and their components are recast into two real $(N\times (L-N))$ rectangular matrices:
\begin{align*}
\forall (i,a) \in \llbracket 1, N \rrbracket \times \llbracket (N+1),L\rrbracket, \, \textbf{x}_{ia} & = (\textbf{X})_{i,a-N},\\
  \textbf{y}_{ia} &= (\textbf{Y})_{i,a-N}. 
\end{align*}
Deriving the reference one-body reduced difference density matrix for the TDHF or TDDFRT methods has been reported\cite{ipatov_excited-state_2009} to consist in replacing in \eqref{eq:EOM_1DDM} the ground state, i.e., $\Psi_0$, by the
\begin{equation}\label{eq:def:psi0}
\psi_\star \coloneqq \bigwedge_{j=1}^N \varphi_j
\end{equation}
one-determinant $N$--body reference state, and by replacing ${\hat{T}}_{0\rightarrow n}$ by ${\hat{\mathcal{T}}}$, an endomorphism on $\mathcal{C}_\mathbb{R}$ defined as
\begin{equation}\label{eq:def:Tcurved}
\mathcal{\hat{T}} \coloneqq
\sum_{i=1}^{N}\sum_{a=N+1}^{L}
 (\textbf{x}_{ia} \hat{a}^\dag\hat{i}^\wdag
-\textbf{y}_{ia} \hat{i}^\dag\hat{a}^\wdag).
\end{equation}
Doing so, one obtains what we will call the \textit{reference} one-body reduced difference density matrix, $\tilde{\boldsymbol{\gamma}}^\Delta$, i.e.,
\begin{equation}\label{eq:RPA_1DDM}
\tilde{\boldsymbol{\gamma}}^\Delta =
\left(-\textbf{X}\textbf{X}^\top - \textbf{Y}\textbf{Y}^\top\right) 
\oplus 
\left( \textbf{X}^\top\textbf{X} + \textbf{Y}^\top\textbf{Y}\right).
\end{equation}
Complete and detailed derivation of this result is given in \cite{etienne_comprehensive_2021}. The $\tilde{\boldsymbol{\gamma}}^\Delta$ matrix has two required properties for being physically sound: It is symmetric, and its trace is equal to zero. 

We notice that the elements of $\mathcal{B}$ are actually the Hartree-Fock (respectively, Kohn-Sham) orbitals when dealing with TDHF (respectively, TDDFRT). The elements of $\mathcal{B}$ have actually been ordered such that the Hartree-Fock (respectively, Kohn-Sham) $N$--body wave function is the antisymmetric tensor product of the $N$ first entries of $\mathcal{B}$ — see \eqref{eq:def:psi0}. It is because of the structure of $\mathcal{B}$ that $\tilde{\boldsymbol{\gamma}}^\Delta$ has this block-diagonal structure, with the North-West — sometimes called \textit{occupied-occupied} — block and the South-East — sometimes called \textit{virtual-virtual} — block being non-zero. The North-West block is $(N\times N)$--dimensional, while the South-East block is $((L-N)\times (L-N))$--dimensional.

\section{Statement of the problem}
\begin{definition}[Exact elementary functional]\label{def:FnA}
Let $n$ a natural integer belonging to $\llbracket 1,M\rrbracket$. The {\normalfont exact elementary functional} corresponding to the $\Psi_0\longrightarrow \Psi_n$ transition is the linear form, denoted $F_n$, and defined as
\begin{align*}
\mathrm{End}(S_\mathbb{K})&\longrightarrow\mathbb{K}\\
\hat{A}&\longmapsto F_n[\hat{A}]\coloneqq \braket{\Psi_0 |[\hat{T}_{0\rightarrow n}^\dag,[\hat{A},\hat{T}_{0\rightarrow n}]]|\Psi_0}.
\end{align*}
\end{definition}
\begin{definition}[Approached elementary functional]\label{def:FA}
The {\normalfont approached elementary functional} is the linear form, denoted $\mathcal{F}$, and defined as
\begin{align*}
\mathrm{End}(\mathcal{C}_\mathbb{R}) &\longrightarrow \mathbb{R}\\
\hatcal{A}&\longmapsto\mathcal{F}[\hatcal{A}]\coloneqq \braket{\psi_\star |[\hatcal{T}^\dag,[\hatcal{A},\hatcal{T}]]|\psi_\star}.
\end{align*} 
\end{definition}
\noindent According to Definition \ref{def:FA} and the construction done in the preceding paragraph, we have that the \textit{reference} one-body reduced difference density matrix for the TDHF and the TDDFRT methods is actually such that
\begin{align}\label{eq:EOMisFA}
\forall (r,s)\in\llbracket 1, L \rrbracket^2,\, (\tilde{\boldsymbol{\gamma}}^\Delta)_{r,s} = \mathcal{F}[\hat{s}^\dag\hat{r}],
\end{align}
\begin{proposition}\label{prop:recu_EOM_simple}
Let $n$ be a natural integer belonging to $\llbracket 1, M\rrbracket$. For every $(r,s)$ couple of integers — both belonging to $\mathbb{N}^2$ — there exists an infinite sequence of operators, $(\hat{O}_k^\wdag)_{k\in\mathbb{N}}^\wdag$, defined as
\begin{equation*}
\hat{O}_0 := \hat{s}^\dag\hat{r} \;\, \mathrm{and} \;\, \forall k\in\mathbb{N},\, \hat{O}_{k+1} \coloneqq -\dfrac{1}{2} \left[ \hat{{T}}_{0\rightarrow n}^\dag, \left[\hat{O}_k,  \hat{{T}}_{0\rightarrow n} \right] \right]
\end{equation*}
such that
$$\forall k \in \mathbb{N},\, (\boldsymbol{\gamma}^\Delta_{0\rightarrow n})_{r,s} = F_n[\hat{O}_k].$$
\end{proposition}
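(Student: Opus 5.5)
The plan is to reduce everything to a closed-form expression for the exact elementary functional $F_n$, and then argue by induction on $k$. Write $\hat{P}_0 \coloneqq \ket{\Psi_0}\bra{\Psi_0}$ and $\hat{P}_n \coloneqq \ket{\Psi_n}\bra{\Psi_n}$, and set $A_{pq} \coloneqq \braket{\Psi_p|\hat{A}|\Psi_q}$. Let $\mathcal{D}_n$ denote the double-commutator map
\[
\mathcal{D}_n(\hat{A}) \coloneqq [\hat{T}_{0\rightarrow n}^\dag,[\hat{A},\hat{T}_{0\rightarrow n}]],
\]
so that, by Definition \ref{def:FnA}, $F_n[\hat{A}] = \braket{\Psi_0|\mathcal{D}_n(\hat{A})|\Psi_0}$.

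First, expand $\mathcal{D}_n(\hat{A})$ using \eqref{eq:def:exactSTO} and the orthonormality of $S$. Expanding the four terms of the double commutator gives
\[
\mathcal{D}_n(\hat{A}) = A_{nn}\hat{P}_0 - \hat{P}_0\hat{A} - \hat{A}\hat{P}_n + A_{00}\hat{P}_n .
\]
Taking the $\Psi_0$ expectation value, the last two terms vanish because $\braket{\Psi_n|\Psi_0}=0$ for $n\geq 1$. This leaves the key identity
\[
F_n[\hat{A}] = A_{nn} - A_{00}.
\]
For $\hat{A}=\hat{s}^\dag\hat{r}$ this is just \eqref{eq:1DDM}, which recovers \eqref{eq:EOM_1DDM}.

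Second, compute the diagonal elements of $\mathcal{D}_n(\hat{A})$ in the states $\Psi_0$ and $\Psi_n$ from the same expansion:
\[
\braket{\Psi_n|\mathcal{D}_n(\hat{A})|\Psi_n} = A_{00}-A_{nn}, \qquad \braket{\Psi_0|\mathcal{D}_n(\hat{A})|\Psi_0} = A_{nn}-A_{00}.
\]
Applying the identity of the first step to the operator $\mathcal{D}_n(\hat{A})$ then yields
\[
F_n\bigl[\mathcal{D}_n(\hat{A})\bigr] = -2\,F_n[\hat{A}].
\]
Since $\hat{O}_{k+1} = -\tfrac12\,\mathcal{D}_n(\hat{O}_k)$ and $F_n$ is linear, we get $F_n[\hat{O}_{k+1}] = F_n[\hat{O}_k]$ for every $k$. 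Induction on $k$, with base case $F_n[\hat{O}_0]=(\boldsymbol{\gamma}^\Delta_{0\rightarrow n})_{r,s}$ given by \eqref{eq:EOM_1DDM}, concludes the proof.

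I expect the main obstacle to be the domain issue rather than the algebra. The operator $\hat{s}^\dag\hat{r}$ need not map $S_\mathbb{K}$ into itself, so strictly speaking $\hat{O}_0$ may not lie in $\mathrm{End}(S_\mathbb{K})$, where $F_n$ is defined. I would handle this by noting that every step above uses only matrix elements $\braket{\Psi_p|\cdot|\Psi_q}$ with $p,q\in\{0,n\}$, together with products against the projectors $\hat{P}_0$ and $\hat{P}_n$. The formula $F_n[\hat{A}]=A_{nn}-A_{00}$ therefore holds for any operator on the $N$-electron space. Alternatively, one can replace $\hat{s}^\dag\hat{r}$ by its compression to $S_\mathbb{K}$, which has the same relevant matrix elements. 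After the first step, each $\hat{O}_k$ with $k\geq1$ has range in $\mathrm{span}(\Psi_0,\Psi_n)$, so the sequence is well defined.
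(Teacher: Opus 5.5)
Your proof is correct and follows the paper's induction. The only difference is that you make explicit, through the closed form $\mathcal{D}_n(\hat{A}) = A_{nn}\hat{P}_0 - \hat{P}_0\hat{A} - \hat{A}\hat{P}_n + A_{00}\hat{P}_n$, the identity $\langle\Psi_n|\mathcal{D}_n(\hat{A})|\Psi_n\rangle = -\langle\Psi_0|\mathcal{D}_n(\hat{A})|\Psi_0\rangle$, which the paper merely ``notices'' in its step case.

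One side remark of yours is false, though nothing depends on it. For $k\geq 1$ the range of $\hat{O}_k$ need not lie in $\mathrm{span}(\Psi_0,\Psi_n)$: the term $-\hat{O}_{k-1}\hat{P}_n$ contributes $\hat{O}_{k-1}\Psi_n$, which for $k=1$ is already $\hat{s}^\dag\hat{r}\Psi_n$. Well-definedness should instead rest on your observation that $F_n[\hat{A}] = A_{nn}-A_{00}$ makes sense for any operator on the $N$-electron space. Alternatively, after compression, it follows from the closure of $\mathrm{End}(S_\mathbb{K})$ under commutators with $\hat{T}_{0\rightarrow n}$ and $\hat{T}_{0\rightarrow n}^\dag$.
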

\begin{proof}
\noindent According to Definition \ref{def:FnA}, and to formulas \eqref{eq:1DDM} and \eqref{eq:EOM_1DDM}, we have
\begin{align*}
\forall n \in \llbracket 1, M \rrbracket, \, \forall (r,s)\in\mathbb{N}^2,\, (\boldsymbol{\gamma}^\Delta_{0\rightarrow n})_{r,s} &= \bra{\Psi_{n}}\hat{s}^\dag\hat{r}\ket{\Psi_{n}} - \bra{\Psi_{0}}\hat{s}^\dag\hat{r}\ket{\Psi_{0}} \\
&= F_n[\hat{s}^\dag\hat{r}].
\end{align*}
\noindent In the conditions of Proposition \ref{prop:recu_EOM_simple}, it is therefore sufficient to prove
$$\forall k \in \mathbb{N},\, F_n[\hat{O}_k] = F_n[\hat{s}^\dag\hat{r}]$$
for proving Proposition \ref{prop:recu_EOM_simple}. We will proceed by induction. Proving the base case ($k=0$) is immediate due to the definition of $\hat{O}_0$. Assuming that, for some natural integer $k$, we accept the 
\begin{equation}\label{eq:stepcaseEOM_simple}
F_n[\hat{O}_k] = F_n[\hat{s}^\dag\hat{r}]
\end{equation}
relation, we can show that
\begin{equation}\label{eq:stepcaseEOM_simplebis}
F_n[\hat{O}_{k+1}] = F_n[\hat{s}^\dag\hat{r}]
\end{equation}
can also be accepted — this is the step case. Indeed, we have
$$F_n[\hat{O}_{k+1}] = -\dfrac{1}{2}\braket{\Psi_0|[\hat{T}_{0\rightarrow n}^\dag,[[\hat{T}_{0\rightarrow n}^\dag,[\hat{O}_{k},\hat{T}_{0\rightarrow n}]],\hat{T}_{0\rightarrow n}]]|\Psi_0}$$
which, when developed, gives
$$F_n[\hat{O}_{k+1}] = -\dfrac{1}{2}\braket{\Psi_n|[\hat{T}_{0\rightarrow n}^\dag,[\hat{O}_{k},\hat{T}_{0\rightarrow n}]]|\Psi_n} - \left(-\dfrac{1}{2} \braket{\Psi_0|[\hat{T}_{0\rightarrow n}^\dag,[\hat{O}_{k},\hat{T}_{0\rightarrow n}]]|\Psi_0}\right).$$
Noticing that the sum of the two terms in the right-hand side of the last equality gives $F_n[\hat{O}_k]$, and invoking \eqref{eq:stepcaseEOM_simple}, we see that we are endowed for accepting \eqref{eq:stepcaseEOM_simplebis}. This concludes the proof.
\end{proof}
\noindent We now build $(\tilde{\boldsymbol{\gamma}}^\Delta_{k})_{k\in\mathbb{N}}^\wdag$, an infinite sequence of $L\times L$ matrices that are such that, for every $(r,s)$ couple of natural integers — both belonging to $\llbracket 1, L \rrbracket$ —, there is an infinite sequence of operators, $(\hatcal{O}_\ell^\wdag)^\wdag_{\ell\in\mathbb{N}}$, defined as
\begin{equation}\label{eq:relation_recu_RPA}
\hatcal{O}_0 := \hat{s}^\dag\hat{r} \;\, \mathrm{and} \;\, \forall \ell \in \mathbb{N},\, \hatcal{O}_{\ell+1} \coloneqq -\dfrac{1}{2} \left[ \hat{\mathcal{T}}^\dag, \left[\hatcal{O}_\ell,  \hat{\mathcal{T}}\right] \right]
\end{equation}
such that
\begin{equation*}
\forall k \in \mathbb{N},\, (\tilde{\boldsymbol{\gamma}}^\Delta_{k})_{r,s} \coloneqq \mathcal{F}[\hatcal{O}_k].
\end{equation*}
When dealing with the exact case, we saw with Proposition \ref{prop:recu_EOM_simple} that the choice for the argument of $F_n$ — as long as we use an element of the infinite sequence built in the proposition — is arbitrary. For building the \textit{reference} one-body reduced difference density matrix for the TDHF and TDDFRT methods we took \eqref{eq:EOM_1DDM} and replaced $\Psi_0$ by $\psi_\star$, and $\hat{T}_{0\rightarrow n}$ by $\hatcal{T}$. That is to say, according to \eqref{eq:EOMisFA}, we actually see that each \textit{reference} one-body reduced difference density matrix element value for the TDHF and TDDFRT methods is the $\mathcal{F}$ functional value with the first element of each $\hatcal{O}$ operator sequence built in \eqref{eq:relation_recu_RPA} as argument for $\mathcal{F}$. We wonder whether this choice is arbitrary, i.e., if replacing $\Psi_0$ by $\psi_\star$, and $\hat{T}_{0\rightarrow n}$ by $\hatcal{T}$ in other functionals than $F_n[\hat{O}_0]$ — which is equivalent to selecting other possible choices for the argument of $\mathcal{F}$ — would lead to matrices that one cannot reject based on physical criteria. This would question the uniqueness of the representation of TDHF and/or TDDFRT molecular electronic transitions.

\section{First approach — Directly addressing the problem}

\begin{proposition}\label{prop:gamma1nonsymmetric}
The $\tilde{\boldsymbol{\gamma}}^\Delta_{1}$ matrix is not symmetric.
\end{proposition}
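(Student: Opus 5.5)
The plan is to show non-symmetry through an explicit counterexample. To guide the choice of example, I would first work out the general structure of $\tilde{\boldsymbol{\gamma}}^\Delta_{1}$. By \eqref{eq:relation_recu_RPA} and Definition \ref{def:FA},
\begin{equation*}
(\tilde{\boldsymbol{\gamma}}^\Delta_{1})_{r,s} = -\dfrac{1}{2}\braket{\psi_\star|[\hatcal{T}^\dag,[[\hatcal{T}^\dag,[\hat{s}^\dag\hat{r},\hatcal{T}]],\hatcal{T}]]|\psi_\star}.
\end{equation*}
This is a quadruply nested commutator containing two $\hatcal{T}$'s and two $\hatcal{T}^\dag$'s. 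Expanding it gives a signed sum of products of four single excitations or de-excitations and one $\hat{s}^\dag\hat{r}$, sandwiched between $\psi_\star$.

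The first structural step is a rank-parity argument. Every term of $\hatcal{T}$ and $\hatcal{T}^\dag$ changes the occupied-to-virtual excitation rank by $\pm 1$. Four such factors therefore preserve parity. Since $\psi_\star$ has rank zero, only $\hat{s}^\dag\hat{r}$ with $r,s$ both occupied or both virtual can contribute. Hence $\tilde{\boldsymbol{\gamma}}^\Delta_{1}$ is block diagonal, exactly like \eqref{eq:RPA_1DDM}, and any asymmetry must be sought inside the occupied--occupied or the virtual--virtual block.

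Next I would evaluate the virtual--virtual block, with $r=b$ and $s=c$, in closed form. I would use Wick's theorem relative to $\psi_\star$, or equivalently the diagrammatic pairing language of the paper, in which each excitation $\hat{a}^\dag\hat{i}$ must be closed by a de-excitation. The outcome should be a matrix polynomial of degree four in $\textbf{X}$ and $\textbf{Y}$. I expect it to contain, besides symmetric combinations such as $\textbf{X}^\top\textbf{X}\textbf{X}^\top\textbf{X}$, mixed products such as $\textbf{X}^\top\textbf{X}\textbf{Y}^\top\textbf{Y}$ whose transposes $\textbf{Y}^\top\textbf{Y}\textbf{X}^\top\textbf{X}$ do not occur with the same coefficient. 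The asymmetry comes from the fact that $\hatcal{T}$ is neither Hermitian nor anti-Hermitian when $\textbf{x}\neq\textbf{y}$, so the ordering $\hatcal{T}^\dag\hatcal{T}^\dag\hatcal{T}\hatcal{T}$ in the nesting is not symmetric under transposition.

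To finish, I would take the smallest case in which a block has an off-diagonal entry: $N=1$ and $L=3$, so that the occupied orbital is $i=1$ and the virtual orbitals are $a=2,3$. Then $\psi_\star=\varphi_1$, and everything reduces to $3\times 3$ one-particle matrices. Indeed, $\hatcal{T}$ acts as $\varphi_1\mapsto x_2\varphi_2+x_3\varphi_3$ and $\varphi_a\mapsto -y_a\varphi_1$, and $\hat{s}^\dag\hat{r}$ acts as the matrix unit $\varphi_r\mapsto\varphi_s$. I would compute $(\tilde{\boldsymbol{\gamma}}^\Delta_{1})_{2,3}$ and $(\tilde{\boldsymbol{\gamma}}^\Delta_{1})_{3,2}$ directly as products of these small matrices. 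I would then choose real values, for instance with $\textbf{x}$ and $\textbf{y}$ not parallel and $\textbf{x}^\top\textbf{x}-\textbf{y}^\top\textbf{y}=1$, for which the two entries differ.

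The main obstacle is the bookkeeping: correctly signing and collecting the many terms of the quadruple commutator. A single sign error could either spuriously produce asymmetry or cancel it. For that reason I would cross-check the general block formula against the explicit $N=1$, $L=3$ matrix computation. If the $N=1$ case turned out to be accidentally symmetric, I would move to $N=2$ and examine the occupied--occupied block instead.
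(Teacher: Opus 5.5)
Your plan is sound and would prove the statement, but it takes a different route from the paper. The paper uses the Wick-theorem machinery of Ref.~\cite{etienne_comprehensive_2021} to obtain the full closed form \eqref{eq:gamma_1} for arbitrary $N$ and $L$. It then reads off non-symmetry from the fact that $\textbf{Y}\textbf{Y}^\top\textbf{X}\textbf{X}^\top$ occurs in the North-West block while its transpose does not. The South-East block similarly contains $\textbf{X}^\top\textbf{X}\textbf{Y}^\top\textbf{Y}$ but not $\textbf{Y}^\top\textbf{Y}\textbf{X}^\top\textbf{X}$, which is exactly the term you anticipated. You instead look for an explicit numerical witness in the smallest nontrivial case.

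The paper's route gives the complete matrix for all $N$ and $L$. Yours gives an airtight existence statement, which the monomial argument needs tacitly: $\tilde{\boldsymbol{\gamma}}^\Delta_1$ \emph{is} symmetric for some data, namely $\textbf{Y}=\textbf{0}_{N\times(L-N)}$, or whenever $N=1$ and $L=2$. So the proposition can only mean ``not symmetric in general''.

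Your one-particle reduction is also not special to $N=1$. Every operator involved is one-body. Nested commutators of one-body operators are one-body, with coefficient matrix given by the corresponding matrix commutators. Finally, $\mathcal{F}$ of a one-body operator is the trace of its coefficient matrix over the occupied orbitals. Carrying this through gives $\tilde{\boldsymbol{\gamma}}^\Delta_1-(\tilde{\boldsymbol{\gamma}}^\Delta_1)^\top=[\textbf{X}\textbf{X}^\top,\textbf{Y}\textbf{Y}^\top]\oplus[\textbf{X}^\top\textbf{X},\textbf{Y}^\top\textbf{Y}]$ for every $N$ and $L$, consistent with \eqref{eq:gamma_1}.

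One caution about your choice of numbers. For $N=1$, write $x_a\coloneqq\textbf{x}_{1a}$ and $y_a\coloneqq\textbf{y}_{1a}$. The identity above then gives $(\tilde{\boldsymbol{\gamma}}^\Delta_1)_{2,3}-(\tilde{\boldsymbol{\gamma}}^\Delta_1)_{3,2}=(x_2y_2+x_3y_3)(x_2y_3-x_3y_2)$. So requiring $\textbf{x}$ and $\textbf{y}$ to be non-parallel is not enough: orthogonal $\textbf{x},\textbf{y}$ give a symmetric matrix.

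Taking $(x_2,x_3)=(1,1)$ and $(y_2,y_3)=(1,0)$ respects $\textbf{x}^\top\textbf{x}-\textbf{y}^\top\textbf{y}=1$. It yields $(\tilde{\boldsymbol{\gamma}}^\Delta_1)_{2,3}=5/2$ versus $(\tilde{\boldsymbol{\gamma}}^\Delta_1)_{3,2}=7/2$, so your $N=2$ fallback is unnecessary. Padding $\textbf{X}$ and $\textbf{Y}$ with zeros carries this witness to any $N\geq 1$ and $L\geq N+2$.
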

\begin{proof}
Applying the derivation method introduced in Ref. \cite{etienne_comprehensive_2021}, we obtain
\begin{align}\label{eq:gamma_1}
\tilde{\boldsymbol{\gamma}}^\Delta_{1} = \dfrac{1}{2}
&\left(
-2\textbf{X} \textbf{X}^\top \textbf{X} \textbf{X}^\top
-2\textbf{Y} \textbf{Y}^\top \textbf{X} \textbf{X}^\top
-2\textbf{Y} \textbf{Y}^\top \textbf{Y} \textbf{Y}^\top
- \textbf{Y} \textbf{X}^\top \textbf{X} \textbf{Y}^\top
- \textbf{X} \textbf{Y}^\top \textbf{Y} \textbf{X}^\top\right)\nonumber\\
\oplus &\left(
 2\textbf{X}^\top \textbf{X} \textbf{X}^\top \textbf{X} 
+2\textbf{X}^\top \textbf{X} \textbf{Y}^\top \textbf{Y}
+2\textbf{Y}^\top \textbf{Y} \textbf{Y}^\top \textbf{Y}
+ \textbf{Y}^\top \textbf{X} \textbf{X}^\top \textbf{Y}
+ \textbf{X}^\top \textbf{Y} \textbf{Y}^\top \textbf{X}
\right)
\end{align}
\noindent which is not symmetric. Indeed, the $\textbf{Y} \textbf{Y}^\top \textbf{X} \textbf{X}^\top$ matrix, which appears in the North-West block, is not symmetric, and $\textbf{X} \textbf{X}^\top \textbf{Y} \textbf{Y}^\top$ is absent of the North-West block.
\end{proof}

\subsection{Fermi-vacuum diagrammatic evaluation of expectation values of nested commutators}

The difficulty we have for addressing our problem is that we are approaching the matrix representation of one operator — the one-body reduced difference density operator, belonging to $\mathrm{End}(\mathrm{span}_\mathbb{R}\mathcal{B})$ — in a basis of one-body states, namely $\mathcal{B}$, by evaluating the expectation value of another operator, element of $\mathrm{End}(\mathcal{C}_\mathbb{R})$, relatively to the $\psi_\star$ $N$--body state. Therefore, directly addressing the problem in first and second quantization became rapidly materially intractable for us, and we had to move toward a diagrammatic language framework introduced in \cite{morere2026dycklanguagefermionicsecondII} and extended below for the purpose of our derivations.

In what follows we call ``excitation operator'' (respectively, ``deexcitation operator'') any operator, written $\hat{E}_p^q$ (respectively, $\hat{D}^p_q$) and defined as $\hat{q}^\dag\hat{p}$ (respectively, $\hat{p}^\dag\hat{q}$), where $p$ belongs to $\llbracket 1, N \rrbracket$ and $q$ belongs to $\llbracket (N+1),L\rrbracket$.
  
In this paragraph we are interested in a particular type of \textit{conditioned} nested commutators — see Rule IV.1 and Rule IV.5 in Ref. \cite{morere2026dycklanguagefermionicsecondII} —, that we will name ``1--CNC''. 

\begin{definition}[1--CNC] A {\normalfont 1--CNC} is a conditioned nested commutator in which every slot is occupied by a (de)excitation operator except one slot in the innermost commutator, which is occupied by a creation-annihilation operator chain — e.g., ``$\hat{s}^\dag\hat{r}\,$'' — in which each operator points at a spinorbital with a non-definite occupation number relatively to the Fermi vacuum. \end{definition}

\subsubsection{A short reminder}

The diagrammatic representation of such a 1--CNC involves opening brackets for representing deexcitation operators, closing brackets for representing excitation operators, and a dash for representing the $\hat{s}^\dag\hat{r}\,$ chain. This representation can be used to ``develop'' the 1--CNC — see Rule IV.6 in Ref. \cite{morere2026dycklanguagefermionicsecondII}. The diagrammatic representation and development of one 1--CNC is given in Example \ref{ex:list}.

\begin{example}\label{ex:list}
Let $(i,j)$ be a couple of integers, both belonging to $\llbracket 1,N\rrbracket$. 
Let $(a,b)$ be a couple of integers, both belonging to $\llbracket N+1,L\rrbracket$. 
Let $\hat{s}^\dag$ and $\hat{r}$ be two second quantization operators, each pointing at a spin-orbital with non-definite occupation number relatively to the Fermi vacuum. Consider the 
\begin{equation*}\label{eq:commu_D-A-E}
\left[ \left[ \hat{D}_a^i, \hat{s}^\dag\hat{r} \right], \hat{E}_j^b \right]
\end{equation*}
nested-commutator. The diagrammatic development of this {\normalfont 1--CNC} is illustrated in {\normalfont Figure \ref{fig:dev_commu_D-A-E}}.

\begin{figure}[h!]
\begin{center}
\begin{tikzpicture}[scale=0.7]
\draw (-.5,1.75) -- ++(-.25,0) -- ++(0,-3.5) -- ++(.25,0);
\draw (1,0) node {$\copyright$};
\tcr{\pdeex{0}{\hat{i}^\dag}{\hat{a}^\wdag}}\padaga{3}{}{\hat{s}^\dag}{}{\hat{r}^\wdag} 
\draw (4.5,1.75) -- ++(.25,0) -- ++(0,-3.5) -- ++(-.25,0);
\tikzset{shift={(4,0)}}
\draw (1.5,0) node {$\copyright$};
\tikzset{shift={(2,0)}}
\tcb{\pexci{1}{\hat{j}^\wdag}{\hat{b}^\dag}}

\draw[->,thick] (2,0) --node[above] {Development} ++(2.5,0);
\tikzset{shift={(6,2.5)}}

\tcb{\exci{\hat{j}}{\hat{b}^\dag}}\tikzset{shift={(1,0)}}
\tcr{\deex{\hat{i}^\dag}{\hat{a}^\wdag}}\tikzset{shift={(1,0)}}
\adaga{exci}{\hat{s}^\dag}{none}{\hat{r}^\wdag}
\tikzset{shift={(2.5,0)}}
\draw (-1.25,0) node {$\bigoplus$};
\tcr{\deexend{\hat{i}^\dag}{\hat{a}^\wdag}}
\adaga{exci}{\hat{s}^\dag}{deex}{\hat{r}^\wdag}
\tcb{\exciend{\hat{j}^\wdag}{\hat{b}^\dag}}
\tikzset{shift={(-10,-5)}}
\draw (-1,0) node {$\bigoplus$};
\adaga{none}{\hat{s}^\dag}{exci}{\hat{r}^\wdag}
\tcr{\deex{\hat{i}^\dag}{\hat{a}^\wdag}}\tikzset{shift={(1,0)}}
\tcb{\exciend{\hat{j}^\wdag}{\hat{b}^\dag}}
\tikzset{shift={(2.5,0)}}
\draw (-.75,0) node {$\bigoplus$};
\tcb{\exci{\hat{j}^\wdag}{\hat{b}^\dag}}\tikzset{shift={(1,0)}}
\adaga{deex}{\hat{s}^\dag}{exci}{\hat{r}^\wdag}
\tcr{\deex{\hat{i}^\dag}{\hat{a}^\wdag}}\tikzset{shift={(1,0)}}
\end{tikzpicture}
\end{center}
\caption{Diagrammatic development of the $[ [\hat{D}_a^i, \hat{s}^\dag\hat{r} ], \hat{E}_j^b ]$ 1--CNC. The diagrammatic representation of the $\hat{D}_a^i$ deexcitation operator is highlighted in red and the diagrammatic representation of the $\hat{E}_j^b$ excitation operator is highlighted in blue.}
\label{fig:dev_commu_D-A-E}
\end{figure}
\end{example}

\noindent The diagrammatic simplification of a 1--CNC consists in successively connecting the opening and the closing brackets to the left or to the right of the dash — this is extensively introduced and illustrated in paragraph IV.3 of Ref. \cite{morere2026dycklanguagefermionicsecondII}. Each sub-diagram obtained in the diagrammatic development of a 1--CNC corresponds exactly to one operator in the diagrammatic simplification of that 1--CNC. For instance, the diagrammatic development of $[ [\hat{D}_a^i, \hat{s}^\dag\hat{r} ], \hat{E}_j^b ]$ reported in Figure \ref{fig:dev_commu_D-A-E} involves four sub-diagrams — see Rule IV.3 of Ref. \cite{morere2026dycklanguagefermionicsecondII}. The diagrammatic simplification of the $[ [\hat{D}_a^i, \hat{s}^\dag\hat{r} ], \hat{E}_j^b ]$ 1--CNC then reads
$$[ [\hat{D}_a^i, \hat{s}^\dag\hat{r} ], \hat{E}_j^b ] = -\delta_{i,j}\delta_{a,s}\hat{b}^\dag\hat{r} + \delta_{a,s}\delta_{r,b}\hat{i}^\dag\hat{j} - \delta_{r,i}\delta_{a,b}\hat{s}^\dag\hat{j} + \delta_{j,s}\delta_{r,i}\hat{b}^\dag\hat{a},$$
where $-\delta_{i,j}\delta_{a,s}\hat{b}^\dag\hat{r}$ corresponds to the top-left sub-diagram at the right of Figure \ref{fig:dev_commu_D-A-E},
$\delta_{a,s}\delta_{r,b}\hat{i}^\dag\hat{j}$ corresponds to the top-right sub-diagram, $- \delta_{r,i}\delta_{a,b}\hat{s}^\dag\hat{j}$ corresponds to the bottom-left sub-diagram, and $\delta_{j,s}\delta_{r,i}\hat{b}^\dag\hat{a}$ corresponds to the bottom-right sub-diagram.

\subsubsection{Necessary conditions for expectation values of 1--CNC's to be non-zero}

Each sub-diagram in the diagrammatic simplification of a 1--CNC can be uniquely defined by two sequences: an \textit{operator} sequence and a \textit{connection} sequence.

\begin{definition}[Operator sequence of a 1--CNC]\label{def:seq_operator}
The {\normalfont{operator sequence of a 1--CNC}} is the sequence of (de)excitation operators encountered from the innermost to the outermost commutator.
\end{definition}

\noindent Knowing that we are dealing with 1--CNC's, any sub-diagram that can be generated during a 1--CNC development necessarily involves a curved line with a left and a right extremity.

\begin{definition}[Connection sequence of a sub-diagram corresponding to a 1--CNC]\label{def:seq_connection}
Let $\hatcal{A}$ be a {\normalfont 1--CNC}. Let $O({\hatcal{A}})$ be the corresponding operator sequence. The {\normalfont connection sequence of a sub-diagram} obtained when diagrammatically developing $\hatcal{A}$ is a sequence of one-letter elements, with letters taken in the $\{\text{L},\text{R}\}$ set. The $n^\text{th}$ element of a {{connection sequence}} is an ``R'' (respectively, ``L'') if the diagrammatic representation of the $n^\text{th}$ element of $O({\hatcal{A}})$ is placed at the right (respectively, left) of the dash in the considered sub-diagram.
\end{definition}

\noindent In Example \ref{ex:list}, the operator sequence corresponding to the $[ [\hat{D}_a^i, \hat{s}^\dag\hat{r} ], \hat{E}_j^b ]$ 1--CNC is $(\hat{D}_a^i, \hat{E}_j^b)$. The diagrammatic development of that 1--CNC involves four sub-diagrams, placed at the right of Figure \ref{fig:dev_commu_D-A-E}.

In the $(\hat{D}_a^i, \hat{E}_j^b)$ operator sequence, the $\hat{D}_a^i$ operator comes \textit{first}. Following the definition of a connection sequence — see Definition \ref{def:seq_connection} — if, in a given sub-diagram, the diagrammatic representation of the $\hat{D}_a^i$ operator is at the left (respectively, right) of the dash, the \textit{first} element of the connection sequence of that sub-diagram is an ``\textit{L}'' (respectively, ``\textit{R}'').

In the $(\hat{D}_a^i, \hat{E}_j^b)$ operator sequence, the $\hat{E}_j^b$ operator comes at the \textit{second} place. Following the definition of a connection sequence — see Definition \ref{def:seq_connection} — if, in a given sub-diagram, the diagrammatic representation of the $\hat{E}_j^b$ operator is at the left (respectively, right) of the dash, the \textit{second} element of the connection sequence of that sub-diagram is an ``\textit{L}'' (respectively, ``\textit{R}'').

Applying these rules to the diagrammatic development of the $[ [\hat{D}_a^i, \hat{s}^\dag\hat{r} ], \hat{E}_j^b ]$ 1--CNC reported in Figure \ref{fig:dev_commu_D-A-E} we find that we can attribute the following connection sequences to the four sub-diagrams in the right part of the figure: (\textit{L},\textit{L}) for the top-left sub-diagram, (\textit{L},\textit{R}) for the top-right sub-diagram, (\textit{R},\textit{R}) for the bottom-left sub-diagram, (\textit{R},\textit{L}) for the bottom-right sub-diagram.

\begin{proposition}\label{prop:nullity_criterion_sub_diagram}
If both extremities of a sub-diagram obtained from the diagrammatic development of a {\normalfont 1--CNC} are not in the upper half-plane, then the Fermi-vacuum expectation value of the corresponding operator is equal to zero.
\end{proposition}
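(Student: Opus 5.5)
Each sub-diagram in the development of a 1--CNC stands for a single operator: a product of Kronecker deltas times a creation--annihilation pair $\hat{p}^\dag\hat{q}$. Its left extremity carries the surviving creation operator and its right extremity the surviving annihilation operator (in Figure~\ref{fig:dev_commu_D-A-E}, $\hat{b}^\dag\hat{r}$, $\hat{i}^\dag\hat{j}$, $\hat{s}^\dag\hat{j}$, $\hat{b}^\dag\hat{a}$). The plan is therefore to reduce the statement to the Fermi-vacuum expectation value of $\hat{p}^\dag\hat{q}$ and to read the vertical position of each extremity as an occupation label.

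\textbf{Step 1: what the half-planes encode.} I would first check, from the conventions of Ref.~\cite{morere2026dycklanguagefermionicsecondII}, how each building block is drawn.
\begin{itemize}
\item An excitation operator $\hat{E}_j^b=\hat{b}^\dag\hat{j}$ has its occupied index ($\hat{j}$) drawn in the upper half-plane and its virtual index ($\hat{b}^\dag$) in the lower half-plane.
\item A deexcitation $\hat{D}_a^i=\hat{i}^\dag\hat{a}$ likewise has its occupied index $\hat{i}^\dag$ up and its virtual index $\hat{a}$ down.
\item The indefinite chain $\hat{s}^\dag\hat{r}$ is drawn on the horizontal axis.
\end{itemize}
When a bracket is connected to the dash, the index it contracts (through a Kronecker delta) is absorbed, and its other index becomes the new extremity. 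This follows from the elementary commutator $[\hat{p}^\dag\hat{q},\hat{u}^\dag\hat{v}]=\delta_{q,u}\hat{p}^\dag\hat{v}-\delta_{v,p}\hat{u}^\dag\hat{q}$. I would then prove by induction on the number of (de)excitation operators (the depth of the 1--CNC) the following invariant: in every sub-diagram, an extremity lies in the upper half-plane iff it carries an occupied index ($\le N$), in the lower half-plane iff it carries a virtual index ($>N$), and on the axis iff it is still the original indefinite $\hat{s}^\dag$ or $\hat{r}$. The inductive step is a four-case check (connection $L$ or $R$, operator excitation or deexcitation). In each case Rule IV.6 places the new extremity at the height of the uncontracted index of the added bracket.

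\textbf{Step 2: the vacuum evaluation.} Let $\psi_\star$ be the Fermi vacuum. Then $\braket{\psi_\star|\hat{p}^\dag\hat{q}|\psi_\star}=\delta_{p,q}\,\theta(p\le N)$: it is nonzero only if $p=q$ and that index is occupied. If either extremity lies in the lower half-plane, the corresponding index is virtual. A virtual annihilator kills $\psi_\star$ on the right, and a virtual creator kills $\bra{\psi_\star}$ on the left, so the value is zero.

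\textbf{Step 3: the on-axis case.} If an extremity lies on the axis, the operator still carries an indefinite $\hat{s}^\dag$ or $\hat{r}$. Here I would adopt the paper's reading of the proposition, namely that ``in the upper half-plane'' is required for both extremities. For the expectation value to be nonzero, the indefinite index must be forced occupied and equal to the other extremity. The statement as a necessary condition then amounts to saying that a nonzero contribution requires both extremities up.

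I expect this boundary case to be the main obstacle. Strictly, $\braket{\psi_\star|\hat{s}^\dag\hat{r}|\psi_\star}=\delta_{r,s}\theta(r\le N)$ need not vanish. My plan is to show from the rules that every sub-diagram of a 1--CNC with at least one commutator has both extremities off the axis, since each commutator consumes one of $\hat{s}^\dag,\hat{r}$ at the chosen side. I would try to establish this from Rule IV.3 and check that the counting of sub-diagrams in Figure~\ref{fig:dev_commu_D-A-E} is consistent with it. If that fails, I would restrict the proposition to extremities off the axis.
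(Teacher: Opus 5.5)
Your Steps 1 and 2 are sound and are essentially the paper's argument. The paper simply asserts that the upper half-plane corresponds to occupied spin-orbitals. It then uses the fact that $\hat{p}^\dag\hat{q}$ has a nonzero Fermi-vacuum expectation value only when both indices are occupied. Your induction is more than is needed: each extremity is either an index of a single bracket, whose half-plane is fixed by the drawing convention, or an end of the dash.

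The gap is in Step 3. The claim you intend to prove there is that every sub-diagram of a 1--CNC with at least one commutator has both extremities off the axis. This is false. Each commutator attaches a bracket on one side only. So any sub-diagram whose connection sequence contains only $L$'s (respectively, only $R$'s) keeps the original $\hat{r}$ (respectively, $\hat{s}^\dag$) as an extremity.

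\begin{itemize}
\item Already at depth one, $[\hat{D}_a^i,\hat{s}^\dag\hat{r}]=\delta_{a,s}\hat{i}^\dag\hat{r}-\delta_{r,i}\hat{s}^\dag\hat{a}$, and every sub-diagram has a free dash end.
\item In Figure~\ref{fig:dev_commu_D-A-E} this is the case for $-\delta_{i,j}\delta_{a,s}\hat{b}^\dag\hat{r}$, with sequence $(L,L)$, and for $-\delta_{r,i}\delta_{a,b}\hat{s}^\dag\hat{j}$, with sequence $(R,R)$.
\item The $(R,R)$ term has the nonzero expectation value $-\delta_{r,i}\delta_{a,b}\delta_{s,j}n_rn_s$. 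This is one of the two surviving terms in the paper's example.
\end{itemize}

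So under your three-way classification (up, down, on the axis) the proposition is actually false. Your fallback of restricting it to off-axis extremities proves a strictly weaker statement. That statement discards contributing terms and no longer supports Corollary~\ref{remark:occupancy_rs}, whose even case includes zero $R$ (or zero $L$) entries, i.e.\ a free dash end.

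The missing idea is that a free end of the dash is not a third location. In the paper's convention (see the proof of Proposition~\ref{prop:change_half_plan}), the extremities of the dash may bend into either half-plane, and the bend records the occupation of the indefinite index. Split $\hat{r}$ (or $\hat{s}^\dag$) into its occupied and virtual parts, equivalently insert $n_r+(1-n_r)=1$. This replaces such a sub-diagram by two: one with that end bent up (occupied index) and one with it bent down (virtual index).

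With this reading, ``upper half-plane if and only if occupied index'' holds for every extremity, including the dash ends. The on-axis case disappears, and your Step 2 finishes the proof exactly as the paper does. The factors $n_r$, $n_s$, $1-n_r$, $1-n_s$ in the paper's example are the bookkeeping of these bends.
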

\begin{proof}
The operator corresponding to a sub-diagram obtained from the diagrammatic development of a 1--CNC involves a creation-annihilation chain. These two operators correspond to the extremities in the sub-diagram. The Fermi-vacuum expectation value of a creation-annihilation chain may not be equal to zero if both operators point to an occupied spin-orbital. Therefore, for the expectation value to be different from zero, both extremities have to be in the upper half-plane --- which corresponds to the occupied spin-orbital space.
\end{proof}

\begin{proposition}\label{prop:change_half_plan}
A sub-diagram obtained from the diagrammatic development of a {\normalfont 1--CNC}, is composed of two (possibly empty) alternating sequences of opening and closing brackets, one on the left and one the right side of the dash.
\end{proposition}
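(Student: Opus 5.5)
The proof will go by induction on the depth of the 1--CNC, i.e., on the length $m$ of its operator sequence $O(\hatcal{A})=(\hat{Q}_1,\dots,\hat{Q}_m)$, following the recursive construction of the development given by Rule IV.6 of Ref. \cite{morere2026dycklanguagefermionicsecondII}. The invariant to carry along is stronger than the statement: for every sub-diagram produced at depth $m$, (i) all brackets placed on the left of the dash, read from the dash outwards, alternate between opening and closing brackets, and the same holds on the right; (ii) the bracket farthest from the dash on each side is the one that determines the half-plane in which the corresponding free extremity of the curved line sits; (iii) the bracket type that may legally be attached next on a given side is fixed by the half-plane of that extremity: an extremity in the upper (occupied) half-plane can only receive a closing bracket (excitation), and one in the lower (virtual) half-plane only an opening bracket (deexcitation).

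For the base case $m=0$, the sub-diagram is the bare dash representing $\hat{s}^\dag\hat{r}$, and both sequences are empty, hence trivially alternating. For the step, take a sub-diagram $\Sigma$ at depth $m$ satisfying the invariant and a further commutator with $\hat{Q}_{m+1}$. By Rule IV.6, $[\,\cdot\,,\hat{Q}_{m+1}]$ contributes the sub-diagrams in which $\hat{Q}_{m+1}$ is attached on the left or on the right of $\Sigma$. The point is that a (de)excitation operator can only be connected, with a non-vanishing Kronecker delta, to the extremity whose spin-orbital index has the complementary occupation character. A deexcitation $\hat{D}^p_q=\hat{p}^\dag\hat{q}$ contracts with the extremity carrying a virtual index and replaces it by an occupied one, and excitations do the reverse; this is read off from the simplification rules, for instance the four terms in Example \ref{ex:list}. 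Consequently every connection flips the half-plane of the extremity on that side. By (iii), the newly attached bracket is of the type opposite to the previous outermost bracket on that side, so alternation (i) is preserved, and (ii)--(iii) are updated consistently. A side on which no bracket is ever attached stays empty, which accounts for the "possibly empty" clause.

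The main obstacle is making rigorous the claim that a bracket of the same type as the current outermost one cannot be attached on a side, i.e., that the corresponding sub-diagram is absent from the development rather than merely zero after evaluation. I would first try to settle this directly from the definition of the development rules in Ref. \cite{morere2026dycklanguagefermionicsecondII}, where a connection requires matching an opening bracket to a closing extremity or vice versa, which is exactly the Dyck matching condition. Failing that, I would show that any such sub-diagram carries a factor $\delta_{p,q}$ with $p\le N<q$ and therefore vanishes identically, so it can be discarded from the development.
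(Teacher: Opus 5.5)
Your conclusion is correct, but one ingredient of your invariant is false as stated. Invariant (iii) holds only on the \emph{left} of the dash, and so does the sentence claiming that a deexcitation always consumes a virtual extremity and leaves an occupied one (with excitations doing the reverse). On the left, the new operator's annihilator is contracted with the chain's creator. On the right, it is the new operator's \emph{creator} that is contracted with the chain's annihilator, so the roles are exchanged. A deexcitation $\hat{i}^\dag\hat{a}$ attached on the right needs an occupied extremity and leaves the virtual $\hat{a}$. An excitation $\hat{b}^\dag\hat{j}$ attached on the right needs a virtual extremity and leaves the occupied $\hat{j}$.

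Example~\ref{ex:list} shows exactly this in the term $\delta_{a,s}\delta_{r,b}\hat{i}^\dag\hat{j}$. Likewise, the proof of Proposition~\ref{prop:odd_bracket} relies on a rightmost \emph{closing} bracket putting the right extremity in the upper half-plane, which your occupation bookkeeping would contradict. Alternation nevertheless survives. On each side, the admissible bracket type is a bijective, side-dependent function of the extremity's half-plane, and every attachment flips that half-plane. You should state (iii) separately for the two sides. You should also note that the dash's own extremities have no fixed half-plane, so the first bracket on either side is unconstrained.

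The paper's proof needs no induction and no half-plane bookkeeping. It uses the fact that consecutive brackets on a given side are directly connected to each other. It then applies the connection rule of Ref.~\cite{morere2026dycklanguagefermionicsecondII}: only extremities pointing in opposite directions can be joined. Opening (respectively, closing) brackets have both extremities pointing right (respectively, left), so two brackets of the same kind can never be adjacent, while the dash can connect to either kind.

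Your fallback for the ``main obstacle'' is precisely the algebraic content of that rule. Used pairwise, it already finishes the proof, independently of the side. Contracting two deexcitations $\hat{i}^\dag\hat{a}$ and $\hat{j}^\dag\hat{b}$ can only produce $\delta_{a,j}$ or $\delta_{b,i}$. Contracting two excitations $\hat{a}^\dag\hat{i}$ and $\hat{b}^\dag\hat{j}$ can only produce $\delta_{i,b}$ or $\delta_{j,a}$. All four vanish identically, since each pairs an occupied index with a virtual one. Your occupation-tracking invariant is what Propositions~\ref{prop:odd_bracket} and~\ref{prop:even_bracket} need, not this one.
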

\begin{proof}
During the construction of the sub-diagram, two extremities can be connected if they point to different directions — see, in Ref. \cite{morere2026dycklanguagefermionicsecondII}, paragraph II.2.2, together with Rule IV.2 and Proposition II.1. Since the extremities of opening brackets point to the right, whereas those of closing brackets point to the left, successive connected brackets must alternate between opening and closing brackets. The extremities of the dash can bend in any direction and can be connected to both opening and closing brackets.
\end{proof}

\begin{proposition}\label{prop:odd_bracket}
Consider a sub-diagram, obtained from the diagrammatic development of a {\normalfont 1--CNC}, with a connection sequence containing an odd number of right (respectively, left) connections. The Fermi-vacuum expectation value of the operator corresponding to that sub-diagram may be different from zero if the first bracket connected to the right (respectively, left) extremity of the dash is a closing (respectively, opening) bracket. 
\end{proposition}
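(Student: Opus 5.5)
The argument combines Proposition \ref{prop:nullity_criterion_sub_diagram} with Proposition \ref{prop:change_half_plan}. I treat the right side of the dash; the left side is the mirror image. By Proposition \ref{prop:nullity_criterion_sub_diagram}, a non-zero Fermi-vacuum expectation value requires the free extremity on the right to lie in the upper (occupied) half-plane.

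First I fix the geometry of the brackets from Figure \ref{fig:dev_commu_D-A-E} and the conventions of Ref.~\cite{morere2026dycklanguagefermionicsecondII}. A deexcitation $\hat{D}^i_a=\hat{i}^\dag\hat{a}$, drawn as an opening bracket, has its upper extremity carrying the occupied index $i$ and its lower extremity carrying the virtual index $a$. An excitation $\hat{E}^b_j=\hat{b}^\dag\hat{j}$, drawn as a closing bracket, has the occupied index $j$ up and the virtual index $b$ down. Each bracket therefore joins the upper half-plane to the lower one. When a bracket is inserted into the chain, the free extremity passes through it and switches half-planes: an extremity entering from the upper half-plane leaves through the lower one, and vice versa.

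Next I track the right extremity of the dash. By Proposition \ref{prop:change_half_plan}, the brackets on the right form an alternating sequence. Suppose there are $m$ of them, with $m$ odd by hypothesis. The extremity of the dash attaches to the first bracket through one of that bracket's two ends, and each subsequent bracket flips the half-plane. The outermost free end therefore lies in the half-plane opposite to the one at which the dash entered the first bracket, since $m$ is odd.

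The crucial step is to show that the type of the first bracket determines which end the dash uses. A closing bracket has its extremities pointing left, so the dash's right extremity, bending toward it, connects at the lower (virtual) end. After an odd number of flips, the free end is then in the upper half-plane, which satisfies Proposition \ref{prop:nullity_criterion_sub_diagram}. If the first bracket were an opening bracket, the dash would connect at its upper end and the free end would land in the lower half-plane, giving zero expectation value. The left case is identical after swapping the roles of opening and closing brackets.

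The main obstacle is making this connection rule rigorous. The claim that a closing bracket adjacent to the dash on the right necessarily connects through its lower (virtual) end must be checked against Rule IV.2 and the contraction rules of Ref.~\cite{morere2026dycklanguagefermionicsecondII}. I would verify it on the four sub-diagrams of Figure \ref{fig:dev_commu_D-A-E}, whose expansion into operators is given explicitly, and then argue that it holds in general.
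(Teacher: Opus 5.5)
Your route differs from the paper's. The paper argues backward from the outermost bracket. By Proposition \ref{prop:nullity_criterion_sub_diagram} the rightmost bracket must be a closing one, since an opening bracket there would leave a virtual extremity free. The alternation of Proposition \ref{prop:change_half_plan}, with an odd number of brackets, then forces the first bracket to have the same nature as the last. You argue forward from the dash by counting half-plane flips. That parity count does not need alternation, but it does need to know through which end the dash enters the first bracket, and that decisive step is not established. Your reason, that a closing bracket's extremities point left, applies equally to both of its ends, so it cannot single out the lower one. For the opening-bracket case you give no reason at all. As written, the proof is incomplete exactly where you suspected.

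The missing idea is the creation/annihilation structure behind the drawing. The right extremity of the dash is the annihilator $\hat{r}$, and $[\hat{s}^\dag\hat{r},\hat{p}^\dag\hat{q}]=\delta_{r,p}\hat{s}^\dag\hat{q}-\delta_{s,q}\hat{p}^\dag\hat{r}$. So a bracket placed on the right is always contracted through its creator and leaves its annihilator free. This gives the following rules on the right side:
\begin{itemize}
\item $\hat{E}^b_j=\hat{b}^\dag\hat{j}$ is entered at $\hat{b}^\dag$ (lower) and exits at $\hat{j}$ (upper).
\item $\hat{D}^i_a=\hat{i}^\dag\hat{a}$ is entered at $\hat{i}^\dag$ (upper) and exits at $\hat{a}$ (lower).
\end{itemize}
The terms $\delta_{a,s}\delta_{r,b}\hat{i}^\dag\hat{j}$ and $-\delta_{r,i}\delta_{a,b}\hat{s}^\dag\hat{j}$ of Example \ref{ex:list} confirm this. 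On the left, the creator $\hat{s}^\dag$ contracts with the bracket's annihilator, which gives your mirror statement.

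Iterating the same identity also justifies your tacit assumption that every connection stays in one half-plane. The annihilator of one bracket always meets the creator of the next, and a Kronecker delta between an occupied and a virtual index vanishes. With this lemma stated and proved, your parity argument is complete. The paper's proof uses the same fact implicitly, applied at the last bracket (a closing bracket there leaves its occupied end free). Neither route avoids it; the paper needs it only once and lets Proposition \ref{prop:change_half_plan} do the rest.
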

\begin{proof}
According to {\normalfont Proposition \ref{prop:nullity_criterion_sub_diagram}}, for the Fermi-vacuum expectation value of the operator corresponding to the sub-diagram to be susceptible of being different from zero, the sub-diagram's right extremity has to be in the upper half-plane. Therefore, the last bracket on the right of the sub-diagram has to be a closing one, as it is the only possibility that leads to a right extremity in the upper half-plane.

According to {\normalfont Proposition \ref{prop:change_half_plan}}, for an odd number of brackets at the right of the dash, the fact that the rightmost bracket is a closing one implies that the first bracket connected to the dash is a closing bracket too.

The argument is also valid for the left extremity, but in that case the ``closing'' word has to be changed into ``opening'' in what precedes in the proof.
\end{proof}
\begin{remark}
When, in {\normalfont Proposition \ref{prop:odd_bracket}}, we say that ``the Fermi-vacuum expectation value of the operator {\normalfont[...]} may different from zero'', we mean that the condition reported in the proposition is a necessary but not sufficient condition — i.e., it is possible that when assigning numbers to the operator indices, the chosen combination of indices leads to a zero expectation value.
\end{remark}
\begin{proposition}\label{prop:even_bracket}
Consider a sub-diagram, obtained from the diagrammatic development of a {\normalfont 1--CNC}, with a connection sequence containing an even number of right (respectively, left) connections. The Fermi-vacuum expectation value of the operator corresponding to that sub-diagram may be different from zero if the first bracket connected to the right (respectively, left) extremity of the dash is a opening (respectively, closing) bracket. 
\end{proposition}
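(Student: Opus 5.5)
The argument mirrors the proof of Proposition \ref{prop:odd_bracket}, with the parity reversed. I treat the right side of the dash first; the left side follows by exchanging ``opening'' and ``closing''.

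First, by Proposition \ref{prop:nullity_criterion_sub_diagram}, if the Fermi-vacuum expectation value of the operator attached to the sub-diagram is to be non-zero, the right extremity of the sub-diagram must lie in the upper half-plane. If the connection sequence contains at least one ``R'', some brackets sit to the right of the dash, and the right extremity of the sub-diagram is the free end of the rightmost bracket. As in the proof of Proposition \ref{prop:odd_bracket}, the rightmost bracket must then be a closing bracket, because only a closing bracket leaves its free extremity in the upper half-plane.

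Second, Proposition \ref{prop:change_half_plan} says the brackets to the right of the dash form an alternating sequence of opening and closing brackets. Label them $1,\dots,2m$ starting from the one connected to the dash. Alternation forces bracket $k$ to have the same type as bracket $1$ when $k$ is odd, and the opposite type when $k$ is even. Since $2m$ is even, the rightmost bracket has the opposite type to the first one. The rightmost bracket is closing, so the first bracket connected to the right extremity of the dash must be an opening bracket, which is the claim.

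For the left side, the same reasoning applies with an even number of ``L'' entries. By Proposition \ref{prop:nullity_criterion_sub_diagram}, the leftmost bracket must be an opening bracket so that its extremity lies in the upper half-plane. The even-length alternation then forces the bracket adjacent to the dash to be a closing bracket.

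The one delicate point is the case of zero connections on a given side, since zero is even. In that case there is no ``first bracket'', and the statement is vacuous for that side. Non-nullity then hinges on the dash extremity itself lying in the upper half-plane, which concerns the indices of $\hat{s}^\dag\hat{r}$ and not the brackets. I would state this convention explicitly and otherwise assume $m\geq 1$. As in the accompanying remark, the condition is only necessary, not sufficient.
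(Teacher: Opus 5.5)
Your proof is correct and matches the paper's own argument. The paper likewise reuses the proof of Proposition \ref{prop:odd_bracket}: the outermost bracket's nature is fixed by Proposition \ref{prop:nullity_criterion_sub_diagram}, and an even-length alternating sequence (Proposition \ref{prop:change_half_plan}) has first and last brackets of opposite nature. Your explicit remark on the zero-bracket case is a harmless clarification that the paper leaves implicit.
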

\begin{proof}
The proof follows the same reasoning as for {\normalfont Proposition \ref{prop:odd_bracket}}. The difference is that an even number of connections on a side of the dash implies that, for that side, the first and last brackets are of different nature (closing vs. opening).
\end{proof}

\begin{corollary}\label{remark:occupancy_rs}
Consider a sub-diagram corresponding to an operator whose Fermi-vacuum expectation value is susceptible of being different from zero. If there is an odd (respectively, even) number of ``R'' entries in its connection sequence, then the right extremity of the dash bends in the lower (respectively, upper) half-plane. Similarly, if there is an odd (respectively, even) number of ``L'' entries in its connection sequence, then the left extremity of the dash bends in the lower (respectively, upper) half-plane.
\end{corollary}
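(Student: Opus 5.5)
We will read the corollary off Propositions \ref{prop:odd_bracket}, \ref{prop:even_bracket} and \ref{prop:change_half_plan}, treating the right side of the dash first; the left side is the mirror image. Throughout we assume the sub-diagram corresponds to an operator whose Fermi-vacuum expectation value may be non-zero, so the necessary conditions of those propositions hold.

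The first step is to fix the geometric conventions from the diagrammatic language. An opening bracket represents a deexcitation operator and its extremities point to the right. A closing bracket represents an excitation operator and its extremities point to the left. An extremity connecting to the dash from the right must therefore come from a bracket placed to the right of the dash and must point to the left. The upper half-plane corresponds to occupied spin-orbitals and the lower half-plane to virtual ones. In the (de)excitation operators, the occupied index ($\hat{i}^\dag$ in $\hat{D}^i_a$, $\hat{j}$ in $\hat{E}^b_j$) is drawn in the upper half-plane and the virtual index in the lower half-plane. We then check, on the $(\hat{s}^\dag,\hat{r})$ lines of Figure \ref{fig:dev_commu_D-A-E}, which half-plane the dash's extremity bends into when it is contracted with the first bracket on its right:
\begin{itemize}
\item[] if that bracket is a closing bracket $\hat{E}^b_j$, the contraction is $\delta_{r,b}$, so $\hat{r}$ is virtual and the extremity bends down;
\item[] if that bracket is an opening bracket $\hat{D}^i_a$, the contraction is $\delta_{r,i}$, so $\hat{r}$ is occupied and the extremity bends up.
\end{itemize}
The figure confirms this: connection sequence $(L,R)$ gives $\delta_{r,b}$, while $(R,R)$ and $(R,L)$ give $\delta_{r,i}$.

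The second step counts the ``R'' entries. By Definition \ref{def:seq_connection}, their number is exactly the number of brackets placed to the right of the dash in the sub-diagram. If this number is odd, Proposition \ref{prop:odd_bracket} forces the first bracket connected to the right extremity of the dash to be a closing one, so that extremity bends into the lower half-plane. If it is even and non-zero, Proposition \ref{prop:even_bracket} forces that bracket to be an opening one, so the extremity bends into the upper half-plane. The degenerate even case of zero ``R'' entries has to be handled separately. There the right extremity of the dash is itself the extremity of the whole sub-diagram, and Proposition \ref{prop:nullity_criterion_sub_diagram} requires it to lie in the upper half-plane, which agrees with the claim. The left side is treated identically, with opening and closing exchanged, using the ``L'' clauses of Propositions \ref{prop:odd_bracket} and \ref{prop:even_bracket}.

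The main obstacle is making the first step rigorous rather than reading it off a picture. We must justify, from the construction rules of Ref. \cite{morere2026dycklanguagefermionicsecondII} (Rule IV.2 and Proposition II.1), that a dash extremity contracted with a closing bracket necessarily ends up in the lower half-plane, and with an opening bracket in the upper half-plane. The cleanest route is algebraic. Contracting $\hat{r}$ with the creator of an excitation operator $\hat{E}^b_j$ gives $\delta_{r,b}$ with $b$ virtual, and contracting it with the creator of a deexcitation operator $\hat{D}^i_a$ gives $\delta_{r,i}$ with $i$ occupied. The same holds for $\hat{s}^\dag$ with the annihilators: $\delta_{s,j}$ with $j$ occupied against $\hat{E}^b_j$, and $\delta_{s,a}$ with $a$ virtual against $\hat{D}^i_a$. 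This fixes the occupancy of $r$ and $s$ directly and makes the half-plane statement a rephrasing of it.
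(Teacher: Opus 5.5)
Your proof is correct and follows the paper's route. The paper just declares the corollary a direct consequence of Propositions \ref{prop:odd_bracket} and \ref{prop:even_bracket}, and you make explicit what it leaves implicit: the type of the first bracket connected to the dash fixes the occupancy of $r$ (resp.\ $s$) through the contraction $\delta_{r,b}$ versus $\delta_{r,i}$ (resp.\ $\delta_{s,a}$ versus $\delta_{s,j}$), and the zero-count case has to be settled by Proposition \ref{prop:nullity_criterion_sub_diagram}. One harmless slip: you say a bracket connecting to the dash from the right must point to the left, but that is not so, since the dash's extremities can bend either way and connect to an opening bracket (as your own $(R,R)$ check shows); nothing else in your argument depends on that sentence.
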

\begin{proof}
This is a direct consequence of propositions {\normalfont \ref{prop:odd_bracket} and \ref{prop:even_bracket}}.
\end{proof}

\begin{corollary}\label{remark:number_crossing}
Consider a sub-diagram corresponding to an operator whose Fermi-vacuum expectation value is susceptible of being different from zero. If there are $2m$ ``R'' (respectively, ``L'') entries in its connection sequence — with $m$ being a natural integer —, then there are $m$ crossings on the right (respectively, left) of the dash. If there are $(2m+1)$``R'' (respectively, ``L'') entries in its connection sequence — with $m$ being a natural integer —, then there are $m$ crossings on the right (respectively, left) of the dash. 
\end{corollary}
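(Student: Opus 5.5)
We prove Corollary \ref{remark:number_crossing} for the right side of the dash; the left side follows by the mirror argument, exchanging the roles of opening and closing brackets as in the proofs of Propositions \ref{prop:odd_bracket} and \ref{prop:even_bracket}. The number of ``R'' entries in the connection sequence equals the number of brackets placed on the right of the dash. By Proposition \ref{prop:change_half_plan}, these brackets form an alternating sequence of opening and closing brackets, chained successively from the right extremity of the dash to the right extremity of the sub-diagram.

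The first step is to make precise what a ``crossing'' is in the pictures. When two successive brackets are connected, the curved line passes from one half-plane to the other. A crossing is then a full excursion that leaves the upper half-plane (occupied space) and comes back to it, which corresponds to one opening--closing (or closing--opening) pair of consecutive brackets joined across the horizontal axis. So we count crossings by splitting the alternating bracket word into consecutive pairs and counting how many pairs sit strictly between the dash extremity and the outer extremity.

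For $2m$ right connections with non-zero expectation value, Proposition \ref{prop:even_bracket} says the word starts with an opening bracket. Proposition \ref{prop:nullity_criterion_sub_diagram} then forces it to end with a closing bracket, so the right extremity lies in the upper half-plane. The word is therefore $(\text{opening},\text{closing})^m$, which splits into exactly $m$ pairs, each giving one crossing. Corollary \ref{remark:occupancy_rs} fixes the dash's right extremity in the upper half-plane in this case, so no extra crossing is attached to the dash itself.

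For $2m+1$ right connections, Proposition \ref{prop:odd_bracket} says the word starts with a closing bracket, and by Corollary \ref{remark:occupancy_rs} the dash extremity bends into the lower half-plane. That first closing bracket only brings the line back up from the dash's lower extremity; it completes the dash's bend rather than creating a new crossing. The remaining $2m$ brackets again form $(\text{opening},\text{closing})^m$, giving $m$ crossings.

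The main obstacle is that ``crossing'' is never formally defined in the paper. The argument depends on adopting the convention above, namely that the single bracket absorbing the dash's lower bend in the odd case is not counted. I would state this convention explicitly and check it against the sub-diagrams of Figure \ref{fig:dev_commu_D-A-E}, where $m=0$ and one bracket sits on each side.
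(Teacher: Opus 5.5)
Your derivation of the bracket words is correct, and it is exactly what the paper's one-line proof (``a direct consequence of Propositions~\ref{prop:change_half_plan}, \ref{prop:odd_bracket} and \ref{prop:even_bracket}'') relies on. On the right of the dash the word is $(\text{opening},\text{closing})^m$ for $2m$ entries and $\text{closing}\,(\text{opening},\text{closing})^m$ for $2m+1$ entries, and the left side is the mirror image.

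The gap is in the last step, where you define ``crossing'' as an excursion below the axis. That is not what the paper counts. In the paper a crossing is a point where the connected line intersects itself. This is what produces the sign $(-1)^\ell$ in the proof of Proposition~\ref{prop:existYYYYXX}, and you can see one in the third row of Figure~\ref{fig:ev_Psi0_1CNC}, where the arc of the opening bracket cuts the line leaving the dash.

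Your supporting description is also geometrically wrong:
\begin{itemize}
\item Connecting two successive brackets does \emph{not} move the line into the other half-plane. A connection is a contraction of two occupied indices (upper half-plane) or two virtual ones (lower half-plane); it is each bracket that spans both half-planes.
\item In the even case the first crossing sits precisely at the junction of the upward-bent dash extremity with the first opening bracket. So ``no crossing is attached to the dash'' is false.
\end{itemize}
As written, you prove a statement about a quantity of your own definition that happens to have the same count. Your proposed check would not detect this if restricted to the one-bracket-per-side sub-diagrams, since both readings give zero there. The informative case is the $(R,R)$ sub-diagram of Figure~\ref{fig:dev_commu_D-A-E}, with $m=1$, connected in the third row of Figure~\ref{fig:ev_Psi0_1CNC}.

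The repair keeps your bracket words and changes only what you count. By Proposition~\ref{prop:change_half_plan}, every junction joins a right-pointing extremity to a left-pointing one. If the two extremities face each other, the line closes directly. If they point away from each other, the line must loop back and cross itself, as in that third row.

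On the right of the dash this sorts the junctions cleanly:
\begin{itemize}
\item Every lower-half-plane junction is of the facing type: the bottom of an opening bracket to the bottom of the next closing bracket, or, in the odd case, the downward-bent dash end to the first closing bracket. These create no crossing.
\item Every upper-half-plane junction attaches the right-pointing top of an opening bracket to the left-pointing extremity just before it, namely the upward-bent dash end or the top of the preceding closing bracket. Each of these creates one crossing.
\end{itemize}
Hence the number of crossings on the right equals the number of opening brackets there, which your words show is $m$ for both $2m$ and $2m+1$ entries. On the left the same argument counts closing brackets, again giving $m$. This correspondence is the content the paper leaves implicit in ``direct consequence''.
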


\begin{proof}
This is a direct consequence of propositions {\normalfont \ref{prop:change_half_plan}, \ref{prop:odd_bracket}, and \ref{prop:even_bracket}}.
\end{proof}

\begin{example}
In {\normalfont Example \ref{ex:list}}, the top-left and bottom-right sub-diagrams in {\normalfont Figure \ref{fig:dev_commu_D-A-E}} do not have both extremities in the upper half-plane. Hence, according to {\normalfont Proposition \ref{prop:nullity_criterion_sub_diagram}} the Fermi-vacuum expectation value of their corresponding operator is equal to zero.  

Moreover we can notice that the connection sequence $(L,R)$ {\normalfont(}respectively, $(R,R)${\normalfont)}, corresponding to the top-right (respectively, bottom-left) diagram contains an odd (respectively, even) number of ``R'' and ``L''.  Then, according to {\normalfont Corollary \ref{remark:occupancy_rs}}, both extremities of the dash must bend to the lower (respectively, upper) half-plane for the Fermi-vacuum expectation value of the corresponding operator to be susceptible of being non-zero. 

Accordingly, we can provide the value of the Fermi-vacuum expectation value of the $[ [\hat{D}_a^i, \hat{s}^\dag\hat{r} ], \hat{E}_j^b ]$ {\normalfont 1--CNC} as
\begin{equation*}
\left\langle \Psi_{0} \left| \left[ \left[ \hat{D}_a^i, \hat{s}^\dag\hat{r} \right], \hat{E}_j^b \right] \right| \Psi_{0} \right\rangle = \delta_{r,b}\delta_{i,j}\delta_{a, s}(1-n_r)(1-n_s) - \delta_{r, i}\delta_{a, b}\delta_{s, j}n_rn_s.
\end{equation*}
The diagrammatic evaluation of the Fermi-vacuum expectation value of $[ [\hat{D}_a^i, \hat{s}^\dag\hat{r} ], \hat{E}_j^b ]$ is provided in {\normalfont Figure \ref{fig:ev_Psi0_1CNC}}.
\begin{figure}[h!]
\begin{center}
\begin{tikzpicture}[scale=0.7]
\draw[->,thick] (0,0) --node[above] {No allowed} node[below] {connection} ++(2.5,0);
\tikzset{shift={(-7,0)}}
\exci{\hat{j}}{\hat{b}^\dag}\deex{\hat{i}^\dag}{\hat{a}}\adaga{exci}{\hat{s}^\dag}{none}{\hat{r}}

\tikzset{shift={(2.5,-5)}}

\draw[->,thick] (0,0) --node[above] {Connection of} node[below] {$\hat{i}^\dag$ and $\hat{j}$} ++(2.5,0);
\tikzset{shift={(-7,0)}}
\deexend{\hat{i}^\dag}{\hat{a}}\adaga{exci}{\hat{s}^\dag}{deex}{\hat{r}}\exciend{\hat{j}}{\hat{b}^\dag}

\tikzset{shift={(8,0)}}

\deexend{\hat{i}^\dag}{\hat{a}}\adaga{exci}{\hat{s}^\dag}{deex}{\hat{r}}\exciend{\hat{j}}{\hat{b}^\dag}
\draw[line width=1pt] (0,1) -- ++(-3.5,0);

\tikzset{shift={(-8,-5)}}

\draw[->,thick] (0,0) --node[above] {Connection of} node[below] {$\hat{s}^\dag$ and $\hat{j}$} ++(2.5,0);
\tikzset{shift={(-7,0)}}
\adaga{none}{\hat{s}^\dag}{exci}{\hat{r}}\deex{\hat{i}^\dag}{\hat{a}}\exciend{\hat{j}}{\hat{b}^\dag}

\tikzset{shift={(8,0)}}

\draw[line width=1pt] (-.5,1) -- (-.5,.5) arc(180:270:.5);
\draw[line width=1pt] (0,0)node[above]{$\hat{s}^\dag$} -- (1.5,0)node[above]{$\hat{r}^\wdag$};
\tikzset{shift={(1.5,0)}}
\draw[line width=1pt] (0,0) -- (1,0) arc(-90:90:.5);
\draw[line width=1pt] (1,1) node[above left]{$\hat{i}^\dag$} arc (90:180:.5) 
          --(.5,.5) arc (180:270:1.5)node[below left]{$\hat{a}^\wdag$};
\draw[line width=1pt] 
             (2,-1)  node[below right]{$\hat{b}^\dag$} arc (-90:0:.5) 
          -- (2.5,1) node[above right]{$\hat{j}^\wdag$} arc (0:180:2.25 and 1);

\tikzset{shift={(-6,-5)}}

\draw[->,thick] (0,0) --node[above] {No allowed} node[below] {connection} ++(2.5,0);
\tikzset{shift={(-7.5,0)}}
\exci{\hat{j}}{\hat{b}^\dag}\adaga{deex}{\hat{s}^\dag}{exci}{\hat{r}}\deex{\hat{i}^\dag}{\hat{a}}
\end{tikzpicture}
\end{center}
\caption{Diagrammatic evaluation of the Fermi-vacuum expectation value of $[ [\hat{D}_a^i, \hat{s}^\dag\hat{r} ], \hat{E}_j^b ]$. Among the four sub-diagrams obtained in Figure \ref{fig:dev_commu_D-A-E}, only two can be well-fully connected.}
\label{fig:ev_Psi0_1CNC}
\end{figure}
\end{example}
\noindent In Proposition \ref{prop:non_nullity} below, we first consider the \textit{nature} (closing vs. opening) of the brackets in (sub-)diagrams, and the \textit{nature} (excitation vs. deexcitation) of the operators at stake. 

\begin{definition}[Undressed operator sequence of a 1--CNC]\label{def:seq_undressedoperator}
The {\normalfont{undressed operator sequence of a 1--CNC}} is the sequence of non-indexed (de)excitation operators encountered from the innermost to the outermost commutator.
\end{definition}

\begin{proposition}\label{prop:non_nullity}
For a given connection sequence, there exists a unique undressed operator sequence which is such that, when dressed, generates together with the connection sequence a sub-diagram corresponding to an operator which has an expectation value relatively to the Fermi vacuum that is susceptible of being non-zero.
\end{proposition}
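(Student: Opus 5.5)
The plan is to show that, once the connection sequence is fixed, the nature (opening vs. closing bracket, i.e. deexcitation vs. excitation) of every operator in the undressed operator sequence is forced. The two sides of the dash will be treated independently. Consider the sub-sequence of operators whose connection-sequence entry is ``R'', listed in their order of appearance in the operator sequence (innermost to outermost commutator). During the diagrammatic development, successive brackets on the right are attached further and further from the dash. Hence the first ``R'' operator is the bracket connected directly to the right extremity of the dash, the second is connected to that one, and so on. The same holds on the left for the ``L'' entries.

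First, on the right side I will use Proposition \ref{prop:change_half_plan}: the brackets form an alternating chain of opening and closing brackets, so the whole chain is fixed as soon as the nature of the first bracket (the one attached to the dash) is fixed. Next, by Proposition \ref{prop:odd_bracket} (odd number of ``R'') or Proposition \ref{prop:even_bracket} (even number of ``R''), a non-zero Fermi-vacuum expectation value requires that first bracket to be a closing bracket or an opening bracket, respectively. The number of ``R'' entries can be read off the connection sequence. Therefore every right-side bracket is forced: a closing bracket, which is an excitation $\hat{E}$, or an opening bracket, which is a deexcitation $\hat{D}$. The left side is handled symmetrically using the ``L'' entries. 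Since each position of the operator sequence carries exactly one letter, this determines the nature of every element of the undressed operator sequence, which gives \emph{uniqueness}.

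For \emph{existence}, I will check that the sequence just constructed does produce an admissible sub-diagram. The alternating chains are valid connections by Proposition \ref{prop:change_half_plan}. Their outer ends terminate in the upper half-plane, because the parity argument in the proof of Proposition \ref{prop:odd_bracket} can be run in reverse. It then remains to check that the dash extremities can bend the way Corollary \ref{remark:occupancy_rs} requires, and this holds because the dash extremities may connect to either kind of bracket. Consequently the necessary condition of Proposition \ref{prop:nullity_criterion_sub_diagram} is met, so the expectation value is \emph{susceptible} of being non-zero. The empty-side case (no ``R'' or no ``L'') is trivial: no brackets appear on that side and the dash extremity must itself lie in the upper half-plane, which matches the even case of Corollary \ref{remark:occupancy_rs}.

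The main obstacle I expect is justifying rigorously that the order of the ``R'' (resp. ``L'') entries in the connection sequence coincides with the order in which brackets are attached outward from the dash. This is what links the operator sequence (innermost to outermost) to positions in the alternating chain. I would argue it from the development rule (Rule IV.6 of Ref. \cite{morere2026dycklanguagefermionicsecondII}): each outer commutator places its operator's bracket at the far left or far right of the existing diagram. The $k^\text{th}$ right-placed operator is therefore the $k^\text{th}$ bracket from the dash on the right, as Example \ref{ex:list} illustrates.
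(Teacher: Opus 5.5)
Your proposal is correct and follows essentially the paper's argument. The paper fixes the outermost brackets (leftmost opening, rightmost closing) via Proposition \ref{prop:nullity_criterion_sub_diagram} and propagates inward using the alternation of Proposition \ref{prop:change_half_plan}, whereas you anchor at the dash-adjacent brackets through Propositions \ref{prop:odd_bracket} and \ref{prop:even_bracket}, which are derived from exactly those two facts; your explicit existence check and your justification that the order of the ``R'' (resp.\ ``L'') entries matches distance from the dash spell out what the paper leaves implicit in ``the connection sequence imposes the order of appearance of the brackets''.
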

\begin{proof}
The nature (opening or closing) of the leftmost and rightmost brackets in a sub-diagram corresponding to an operator whose Fermi-vacuum expectation value is susceptible of being different from zero is imposed by {\normalfont Proposition \ref{prop:nullity_criterion_sub_diagram}} --- the leftmost bracket is an opening one, and the rightmost bracket is a closing one. The nature (opening or closing) of all the other brackets is fully determined from {\normalfont Proposition \ref{prop:change_half_plan}}. The connection sequence imposes the order of appearance of the brackets in the sub-diagram. In other words, the connection sequence uniquely determines the nature and order of every bracket that have to be involved in the sub-diagram to lead to an operator whose expectation value relatively to the Fermi vacuum may be non-zero.
Since every opening (respectively, closing) bracket corresponds to a deexcitation (respectively, an excitation) operator, the connection sequence determines a unique undressed operator sequence which, when dressed, generates together with the connection sequence a sub-diagram corresponding to an operator which has an expectation value relatively to the Fermi vacuum that is susceptible of being non-zero.
\end{proof}

\subsection{Application to the TDHF/TDDFRT case}

We set
$$\hat{\mathcal{T}}_x \coloneqq \sum_{i=1}^{N}\sum_{a=N+1}^{L}
 \textbf{x}_{ia} \hat{a}^\dag\hat{i} , $$
and
$$ \hat{\mathcal{T}}_y \coloneqq \sum_{i=1}^{N}\sum_{a=N+1}^{L}
 \textbf{y}_{ia} \hat{i}^\dag\hat{a}.$$
The operator present in the $m^\text{th}$ commutator of a 1--CNC — commutators are counted from the innermost to the outermost one — reads
\begin{equation}\label{eq:nomenclatureT}
\mathcal{\hat{T}}  = 
\sum_{i_m=1}^{N}\sum_{a_m=N+1}^{L}
 (\textbf{x}_{i_ma_m} \hat{a}_m^\dag\hat{i}_m^\wdag
-\textbf{y}_{i_ma_m} \hat{i}_m^\dag\hat{a}_m^\wdag)
=
\mathcal{\hat{T}}_x- \mathcal{\hat{T}}_y
\end{equation}
\begin{equation}\label{eq:nomenclatureTdag}
\mathcal{\hat{T}}^\dag = 
\sum_{j_m=1}^{N}\sum_{b_m=N+1}^{L} 
 (\textbf{x}_{j_mb_m} \hat{j}^\dag_m\hat{b}_m^\wdag
-\textbf{y}_{j_mb_m} \hat{b}_m^\dag\hat{j}_m^\wdag)
=
\mathcal{\hat{T}}_x^\dag- \mathcal{\hat{T}}_y^\dag
\end{equation}
In what follows, and consistently with what we have already done in \eqref{eq:nomenclatureT} and \eqref{eq:nomenclatureTdag}, for the operators present in the $m^\text{th}$ commutator, we use the following nomenclature: ``$i$'' and ``$a$'' are used if the operator is issued from a $\hat{\mathcal{T}}$ operator; ``$j$'' and ``$b$'' are used if the operator is issued from a $\hat{\mathcal{T}}^\dag$ operator; ``$l$'' and ``$c$'' are used if the operator is issued from a non-explicited ($\hat{\mathcal{T}}$ or $\hat{\mathcal{T}}^\dag$) operator. The explicit cases are summarized in Figure \ref{fig:nomenclature_bracket}.

\begin{figure}[h!]
\begin{center}
\begin{tikzpicture}[scale=1]

\node (Op)  at (-2,0)  {a)};
\node (Tx)  at (-.5,0)  {$\hat{\mathcal{T}}_x$};
\node (Ty)  at (1.5,0)  {$\hat{\mathcal{T}}_y$};
\node (Tyt) at (3,0)    {$\hat{\mathcal{T}}_y^\dag$};
\node (Txt) at (5,0)    {$\hat{\mathcal{T}}_x^\dag$};

\draw[dashed] (-2,-.5) -- (5.5,-.5);
\tikzset{shift={(0,-2.5)}}
\node (diag)  at (-2,0)  {b)};
\pexci{0}{\hat{i}_m^\wdag}{\hat{a}_m^\dag}
\pdeex{1}{\hat{i}_m^\dag}{\hat{a}_m^\wdag}
\pexci{3.5}{\hat{j}_m^\wdag}{\hat{b}_m^\dag}
\pdeex{4.5}{\hat{j}_m^\dag}{\hat{b}_m^\wdag}

\draw[dashed] (-2,-2) -- (5.5,-2);
\tikzset{shift={(0,-2.5)}}
\node (mat)  at (-2,0)  {c)};
\node (x)  at (-.5,0)  {$\textbf{x}_{i_ma_m}$};
\node (y)  at (1.5,0)  {$\textbf{y}_{i_ma_m}$};
\node (yt) at (3,0)    {$\textbf{y}_{j_mb_m}$};
\node (xt) at (5,0)    {$\textbf{x}_{j_mb_m}$};

\end{tikzpicture}
\end{center}
\caption{a) Operator in the $m^\text{th}$ commutator, b) contribution to the diagrammatic representation, c) coefficient contribution to the matrix representation.}
\label{fig:nomenclature_bracket}
\end{figure}

\begin{lemma}\label{lemma:conditioning}
For every $k$ in $\mathbb{N}$, any $\mathcal{F}[\hat{\mathcal{O}}_k]$ functional value can be decomposed into a sum of non-zero real multiple of {\normalfont 1--CNC}'s Fermi-vacuum expectation values.
\end{lemma}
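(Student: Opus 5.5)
The argument is by unfolding the recursive definition \eqref{eq:relation_recu_RPA} and expanding by multilinearity. A simple induction on $k$ shows that
\begin{equation*}
\hatcal{O}_k = \left(-\tfrac{1}{2}\right)^k \left[\hatcal{T}^\dag,\left[\left[\hatcal{T}^\dag,\left[\cdots\left[\hatcal{T}^\dag,\left[\hat{s}^\dag\hat{r},\hatcal{T}\right]\right]\cdots,\hatcal{T}\right]\right]\right],\hatcal{T}\right]\right],
\end{equation*}
that is, a nested commutator of depth $2k$ whose innermost slot contains $\hat{s}^\dag\hat{r}$. The other slots are occupied alternately by $\hatcal{T}$ (right slots) and $\hatcal{T}^\dag$ (left slots). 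Applying $\mathcal{F}$ adds two more commutators of the same type, so
\begin{equation*}
\mathcal{F}[\hatcal{O}_k]=(-2)^{k+1}\cdot\tfrac{-1}{2}\cdot\ldots
\end{equation*}
More precisely, $\mathcal{F}[\hatcal{O}_k]=(-2)\,\braket{\psi_\star|\hatcal{O}_{k+1}|\psi_\star}$, a nonzero real multiple of the $\psi_\star$-expectation value of a depth-$(2k+2)$ nested commutator with innermost slot $\hat{s}^\dag\hat{r}$.

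Next, substitute \eqref{eq:nomenclatureT} and \eqref{eq:nomenclatureTdag} into every slot, using the $m$-indexed nomenclature so that distinct slots carry independent summation indices $(i_m,a_m)$ or $(j_m,b_m)$. The commutator is bilinear and the expectation value is linear. Expanding therefore gives a finite sum over choices of $\hatcal{T}_x$ or $-\hatcal{T}_y$ in each $\hatcal{T}$ slot, and of $\hatcal{T}_x^\dag$ or $-\hatcal{T}_y^\dag$ in each $\hatcal{T}^\dag$ slot, and over all index values. Each term has a real coefficient: a sign, times the power of $-\tfrac12$, times a product of $\textbf{x}$ and $\textbf{y}$ components (Figure \ref{fig:nomenclature_bracket}c). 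This coefficient multiplies the expectation value of a nested commutator whose non-innermost slots each contain a single $\hat{a}_m^\dag\hat{i}_m$, $\hat{i}_m^\dag\hat{a}_m$, $\hat{j}_m^\dag\hat{b}_m$ or $\hat{b}_m^\dag\hat{j}_m$, i.e. an excitation or a deexcitation operator in the paper's sense. The innermost chain $\hat{s}^\dag\hat{r}$ has $r,s\in\llbracket 1,L\rrbracket$ arbitrary, so its occupation relative to $\psi_\star$ is non-definite. Each such commutator is therefore a 1--CNC, and $\psi_\star$ is the Fermi vacuum.

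Finally, discard the terms whose coefficient vanishes, either because some $\textbf{x}$ or $\textbf{y}$ component is zero or because the whole product is zero. Zero terms contribute nothing, so what remains is a sum of nonzero real multiples of 1--CNC Fermi-vacuum expectation values, as claimed. If every coefficient vanishes, the sum is empty and equals $0$, which we take to be consistent with the statement. Because the decomposition need not be unique, we only need existence.

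The main obstacle is bookkeeping rather than mathematics. We must check that the conditioning requirements of Rules IV.1 and IV.5 of Ref. \cite{morere2026dycklanguagefermionicsecondII} hold for every expanded term, so that each term really is a 1--CNC. This means every non-innermost slot must hold exactly one (de)excitation operator and never a sum, which is guaranteed by expanding fully before evaluating, with independent indices per commutator level. It also means the nesting pattern, which alternates between left and right slots, must be of the admissible form. I would first check the pattern against Example \ref{ex:list}, where $[[\hat{D},\hat{s}^\dag\hat{r}],\hat{E}]$ is accepted with mixed left and right slots.
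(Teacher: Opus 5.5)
You have a genuine gap at the step ``each such commutator is therefore a 1--CNC''. Being \emph{conditioned} (Rule IV.1 of the cited reference) is an orientation requirement. Each deexcitation operator (opening bracket) must occupy the \emph{left} slot of its commutator, and each excitation operator (closing bracket) the \emph{right} slot. That is what $[[\hat{D}_a^i,\hat{s}^\dag\hat{r}],\hat{E}_j^b]$ in the paper's example illustrates; it does not license arbitrary slot placement.

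Your expansion keeps every $\hatcal{T}$ in a right slot and every $\hatcal{T}^\dag$ in a left slot. So whenever you select $-\hatcal{T}_y$ you put a deexcitation $\hat{i}_m^\dag\hat{a}_m$ in a right slot, and whenever you select $-\hatcal{T}_y^\dag$ you put an excitation $\hat{b}_m^\dag\hat{j}_m$ in a left slot. Already for $k=0$, the terms $[\hatcal{T}_x^\dag,[\hat{s}^\dag\hat{r},-\hatcal{T}_y]]$ and $[-\hatcal{T}_y^\dag,[\hat{s}^\dag\hat{r},\hatcal{T}_x]]$ are of this kind. They are not 1--CNC's, so the diagrammatic machinery (brackets on the prescribed sides of the dash, connection sequences) does not apply to them. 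Ensuring a single operator per slot, or checking on which side the inner commutator sits, does not fix this. Once the slots are correctly oriented, the inner commutator may legitimately sit on either side.

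The missing step, which is the entire content of the paper's proof, is to reorient by antisymmetry at every level, with $\hat{C}$ the inner (possibly nested) commutator:
\[ [\hat{C},\hatcal{T}_x-\hatcal{T}_y]=[\hat{C},\hatcal{T}_x]+[\hatcal{T}_y,\hat{C}], \qquad [\hatcal{T}_x^\dag-\hatcal{T}_y^\dag,\hat{C}]=[\hatcal{T}_x^\dag,\hat{C}]+[\hat{C},\hatcal{T}_y^\dag]. \]
The swap absorbs the minus sign carried by $\hatcal{T}_y$ and $\hatcal{T}_y^\dag$. Every deexcitation then lands on the left and every excitation on the right, and the prefactors remain nonzero reals because only signs change.

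With this inserted, the rest of your argument goes through:
\begin{itemize}
\item unfolding the recursion, with $\mathcal{F}[\hatcal{O}_k]=-2\langle\psi_\star|\hatcal{O}_{k+1}|\psi_\star\rangle$;
\item the full multilinear expansion with independent indices per level;
\item discarding vanishing coefficients.
\end{itemize}
The unfinished line $(-2)^{k+1}\cdot\tfrac{-1}{2}\cdot\ldots$ should simply be deleted.
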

\begin{proof}
\noindent For every $k$ in $\mathbb{N}$, any commutator in $\hat{\mathcal{O}}_k$ involving a $\mathcal{\hat{T}}$ operator has this operator placed in the right slot of the commutator — see \eqref{eq:relation_recu_RPA} — and therefore reads
\begin{equation}\label{eq:comCTxTy}
\left[\hat{C},\hat{\mathcal{T}}_x-\hat{\mathcal{T}}_y\right] = \left[\hat{C},\hat{\mathcal{T}}_x\right] + \left[\hat{\mathcal{T}}_y,\hat{C}\right],
\end{equation}
where $\hat{C}$ is another operator, possibly a nested commutator. We know that $\hat{\mathcal{T}}_x$ contains excitation operators and $\hat{\mathcal{T}}_y$ contains deexcitation operators. For that reason, we see that the two commutators in the right-hand side of \eqref{eq:comCTxTy} are readily \textit{conditioned} — see Rule IV.1 in \cite{morere2026dycklanguagefermionicsecondII}.

For every $k$ in $\mathbb{N}$, any commutator in $\hat{\mathcal{O}}_k$ involving a $\mathcal{\hat{T}}^\dag$ operator has this operator placed in the left slot of the commutator — see \eqref{eq:relation_recu_RPA} — and therefore reads 
\begin{equation}\label{eq:comCTxTydag}
\left[\hat{\mathcal{T}}_x^\dag-\hat{\mathcal{T}}_y^\dag,\hat{C}\right] = \left[\hat{\mathcal{T}}_x^\dag,\hat{C}\right] + \left[\hat{C},\hat{\mathcal{T}}_y^\dag\right]
\end{equation}
where $\hat{C}$ is another operator, possibly a nested commutator. We know that $\hat{\mathcal{T}}_x^\dag$ contains deexcitation operators and $\hat{\mathcal{T}}_y^\dag$ contains excitation operators. For that reason, we see that the two commutators in the right-hand side of \eqref{eq:comCTxTydag} are readily \textit{conditioned} — see Rule IV.1 in \cite{morere2026dycklanguagefermionicsecondII}. This completes the proof.
\end{proof}

\begin{remark}
In what precedes, we used the $(r,s)$ couple of integers for pointing at couples of spin-orbitals in $\mathcal{B}$ whose Fermi-vacuum occupation number was non-definite. In the context of propositions {\normalfont \ref{prop:existYYYYXX}} and {\normalfont \ref{prop:absenceXXYY}}, the same couple of integers will be used for pointing at non-definite couples of spin-orbitals taken among those in $\mathcal{B}$ who have a Fermi-vacuum occupation number equal to one.
\end{remark}
\begin{proposition}\label{prop:existYYYYXX}
For every natural integer $k$ strictly greater than one, the North-West — i.e., the occupied--occupied — block of $\tilde{\boldsymbol{\gamma}}^\Delta_{k}$ contains a strictly negative multiple of
\begin{equation}\label{eq:YYYYXX}
{\normalfont\underbrace{\textbf{Y}\textbf{Y}^\top \cdots \textbf{Y}\textbf{Y}^\top}_{\textbf{Y}\textbf{Y}^\top \,\text{appears}\, k\,\text{times}} \textbf{X} \textbf{X}^\top ,}
\end{equation}
which is not symmetric.
\end{proposition}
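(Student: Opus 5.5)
The plan is to expand $\mathcal{F}[\hatcal{O}_k]$ with Lemma \ref{lemma:conditioning} and to track one family of terms. Unfolding \eqref{eq:relation_recu_RPA} gives
$$\mathcal{F}[\hatcal{O}_k]=\left(-\tfrac12\right)^k\braket{\psi_\star|[\hatcal{T}^\dag,[\cdots[\hatcal{T}^\dag,[\hat{s}^\dag\hat{r},\hatcal{T}]],\cdots,\hatcal{T}]]|\psi_\star}.$$
This is a nested commutator with $2(k+1)$ slots: $(k+1)$ copies of $\hatcal{T}$ and $(k+1)$ copies of $\hatcal{T}^\dag$, alternating from the innermost commutator outward. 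Splitting every $\hatcal{T}$ and $\hatcal{T}^\dag$ through \eqref{eq:comCTxTy} and \eqref{eq:comCTxTydag} gives a sum of 1--CNC's. Each 1--CNC is labelled by a word in $\{\hat{\mathcal{T}}_x,\hat{\mathcal{T}}_y,\hat{\mathcal{T}}_y^\dag,\hat{\mathcal{T}}_x^\dag\}$, and each carries a sign coming from the factors $(-1)$ in front of $\hat{\mathcal{T}}_y$ and $\hat{\mathcal{T}}_y^\dag$ and from the slot reversals. I then want to identify exactly which 1--CNC's, together with which connection sequences, produce the coefficient product that yields the matrix \eqref{eq:YYYYXX}, with $r,s$ both occupied.

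Reading the matrix product \eqref{eq:YYYYXX} as an index chain $r=i\to\cdots\to s$, it needs $2k+2$ factors: $k$ factors of $\textbf{y}_{i a}$ from $\hat{\mathcal{T}}_y$, $k$ factors of $\textbf{y}_{jb}$ from $\hat{\mathcal{T}}_y^\dag$, one factor $\textbf{x}_{ia}$ from $\hat{\mathcal{T}}_x$, and one factor $\textbf{x}_{jb}$ from $\hat{\mathcal{T}}_x^\dag$. Reading Figure \ref{fig:nomenclature_bracket}, this means $k$ opening brackets from the $\hat{\mathcal{T}}_y$'s and $k$ closing brackets from the $\hat{\mathcal{T}}_y^\dag$'s. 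I then use Proposition \ref{prop:change_half_plan}, Corollary \ref{remark:occupancy_rs} (both dash extremities must bend upward since $r,s$ are occupied, so each side carries an even number of brackets) and Proposition \ref{prop:non_nullity}. Together these should show that the index pattern of \eqref{eq:YYYYXX} forces a unique arrangement. The $\textbf{X}\textbf{X}^\top$ pair must sit adjacent to the dash on one fixed side; I expect the $\hat{s}$ side, the one that reproduces the contraction $\textbf{x}_{s a}\textbf{x}_{ia}$ at the end of the chain. The alternating $\textbf{Y}\textbf{Y}^\top$ pairs must form the remaining chain, and their ordering must be compatible with the nesting order.

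The main obstacle is the sign and multiplicity count. Several 1--CNC's, or several connection sequences within one 1--CNC, may contribute the same matrix product. My first step is to show that the connection sequence is essentially forced, so that all contributions share one sign. That sign is $(-1/2)^k$, times $(-1)^{2k}$ from the $\textbf{y}$ factors, times the sign of the fully connected diagram. The diagram sign I would compute from the number of crossings given by Corollary \ref{remark:number_crossing}, together with the sign conventions of Ref. \cite{morere2026dycklanguagefermionicsecondII}. For that diagram sign I would check the $k=1$ case against \eqref{eq:gamma_1}, which gives coefficient $-1$. I would then argue by induction that each additional pair $(\hatcal{T}^\dag,\hatcal{T})$, taken as $(\hat{\mathcal{T}}_y^\dag,\hat{\mathcal{T}}_y)$, multiplies the coefficient by a positive factor. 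The count of connection sequences contributes a positive integer, $2$ per level as in the $k=1$ case. Combined with the sign analysis, this gives a strictly negative total coefficient.

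For non-symmetry, I would show by the same uniqueness analysis that the transposed product $\textbf{X}\textbf{X}^\top\textbf{Y}\textbf{Y}^\top\cdots\textbf{Y}\textbf{Y}^\top$ cannot appear. It would require the $\hat{\mathcal{T}}_x$/$\hat{\mathcal{T}}_x^\dag$ brackets to be placed on the opposite side of the dash, and those positions give a vanishing expectation value, just as $\textbf{X}\textbf{X}^\top\textbf{Y}\textbf{Y}^\top$ is absent in Proposition \ref{prop:gamma1nonsymmetric}. The block is then not symmetric, because \eqref{eq:YYYYXX} is generically non-symmetric for generic $\textbf{X},\textbf{Y}$.
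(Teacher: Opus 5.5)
You follow the paper's route. You split $\mathcal{F}[\hat{\mathcal{O}}_k]$ into 1--CNC's and single out the one with $\hat{\mathcal{T}}_x,\hat{\mathcal{T}}_x^\dag$ in the two innermost slots and $(\hat{\mathcal{T}}_y,\hat{\mathcal{T}}_y^\dag)$ pairs outside. You then read the sign off the crossings and the $(-\tfrac12)^k$ of the recursion. You also correctly anticipate that the $\textbf{X}\textbf{X}^\top$ pair sits next to the dash on the $\hat{s}^\dag$ side.

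\textbf{The gap.} Your non-cancellation step rests on the claim that the contributing arrangement is essentially forced, and it is not. With the bracket types fixed by Proposition~\ref{prop:non_nullity}, already for $k=1$ both $(L,L,R,R)$ and $(R,R,L,L)$ produce $\textbf{Y}\textbf{Y}^\top\textbf{X}\textbf{X}^\top$; this is where the factor $2$ in \eqref{eq:gamma_1} comes from. For $k=2$ the four sequences $(L,L,R,R,R,R)$, $(R,R,L,L,R,R)$, $(R,R,R,R,L,L)$ and $(L,L,R,L,L,R)$ all produce $(\textbf{Y}\textbf{Y}^\top)^2\textbf{X}\textbf{X}^\top$, and in general there are $2^k$ such sequences. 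So ``forced, hence one sign'' does not go through. Your fallback induction has no mechanism either. The level-$(k+1)$ diagrams are not just the level-$k$ ones with a $(\hat{\mathcal{T}}_y,\hat{\mathcal{T}}_y^\dag)$ pair appended; the last two $k=2$ sequences above are new. A matrix-level relation between $\tilde{\boldsymbol{\gamma}}^\Delta_{k+1}$ and $\tilde{\boldsymbol{\gamma}}^\Delta_k$ is exactly what the paper leaves as a conjecture. Checking $k=1$ against \eqref{eq:gamma_1} says nothing about $k>1$.

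\textbf{The missing idea.} Uniqueness is not needed, because every contributing diagram carries the same sign:
\begin{itemize}
\item Since $r$ and $s$ are occupied, Corollary~\ref{remark:occupancy_rs} forces an even number of $L$ entries and of $R$ entries in every contributing connection sequence.
\item Corollary~\ref{remark:number_crossing} then gives exactly $\#L/2+\#R/2=k+1$ crossings, whatever the arrangement.
\item The $(-1)$ in front of each $\textbf{y}$ is absorbed by the slot reversals in \eqref{eq:comCTxTy} and \eqref{eq:comCTxTydag}.
\item Hence every North-West contribution carries $(-\tfrac12)^k(-1)^{k+1}=-2^{-k}$.
\end{itemize}
Exhibiting one contributing sub-diagram then gives a strictly negative coefficient with no possible cancellation. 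A suitable choice is $C=(L,L,R,R,\ldots,R,R)$ with the operator sequence you describe. The $2^k$ count makes the coefficient exactly $-1$, but you do not need that.

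\textbf{Two smaller points.}
\begin{itemize}
\item Your split ``one $\textbf{x}$ from $\hat{\mathcal{T}}_x$, one from $\hat{\mathcal{T}}_x^\dag$'' does not follow from the shape of the product alone. It needs the excitation/deexcitation balance, which forces equally many $\hat{\mathcal{T}}_x$ and $\hat{\mathcal{T}}_x^\dag$.
\item The absence of the transpose is Proposition~\ref{prop:absenceXXYY}, not part of this statement, which only needs \eqref{eq:YYYYXX} itself to be non-symmetric. There, the real obstruction is that the innermost slot holds a $\hat{\mathcal{T}}$, which must sit adjacent to the dash. It is not ``a vanishing expectation value'' caused by the positions.
\end{itemize}
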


\begin{proof}
Let $k$ be a natural integer strictly greater than one. The $\mathcal{F}[\hat{\mathcal{O}}_k]$ number is the Fermi-vacuum expectation value of a nested commutator written using ``$\hat{\mathcal{T}}$'' and ``$\hat{\mathcal{T}}^\dag$''. That nested commutator can be developed as a sum of nested commutators involving solely $\hat{s}^\dag \hat{r}$ and $\hat{\mathcal{T}}_x^\wdag$, $\hat{\mathcal{T}}_x^\dag$, $\hat{\mathcal{T}}_y^\wdag$, and $\hat{\mathcal{T}}_y^\dag$. One of these nested commutators is
\begin{equation}\label{1CNC_YYYYXX}
\underbrace{\left[\left[\hat{\mathcal{T}}_y,\cdots\left[\left[\hat{\mathcal{T}}_y,\right.\right.\right.\right.}_{\text{``}\left[\left[\hat{\mathcal{T}}_y,\right.\right.\text{''} \; \mathrm{appears} \; k \; \mathrm{times}}\left[\hat{\mathcal{T}}_x^\dag,\left[\hat{s}^\dag\hat{r}^\wdag,\hat{\mathcal{T}}_x\right]\right]\underbrace{\left.\left.\left],\hat{\mathcal{T}}_y^\dag\right]\cdots\,\right],\hat{\mathcal{T}}_y^\dag\right]}_{\text{``}\left],\hat{\mathcal{T}}_y^\dag\right]\text{''} \; \mathrm{appears} \; k \; \mathrm{times}}.
\end{equation}
Due to the bilinearity of commutators, we can factorize the \textbf{x} and \textbf{y} coefficients coming from the $\hat{\mathcal{T}}_x^\wdag$, $\hat{\mathcal{T}}_x^\dag$, $\hat{\mathcal{T}}_y^\wdag$, and $\hat{\mathcal{T}}_y^\dag$ operators.

The operator sequence corresponding to the remaining 1--CNC is
\begin{equation}\label{eq:OseqYYXX}
O = (\hat{E}_{i_1}^{a_1}, \hat{D}_{b_2}^{j_2}, \hat{D}_{a_3}^{i_3}, \hat{E}_{j_4}^{b_4}, \ldots, \hat{D}_{a_{2k+1}}^{i_{2k+1}}, \hat{E}_{j_{2k+2}}^{b_{2k+2}}),
\end{equation}
where, after the second element, we have a repeating deexcitation-excitation motif until the end of the sequence.
Consider the 
\begin{equation}\label{eq:CseqYYXX}
C = (\textit{L},\textit{L},\textit{R},\textit{R},\ldots,\textit{R},\textit{R}) 
\end{equation}
connection sequence. The sub-diagram generated by $C$ and $O$ is represented in Figure \ref{fig:subdiag_YYYYXX} and corresponds to the
\begin{equation*}
{(-1)^{\ell}}\delta_{b_2, a_1}\delta_{i_1, s}\delta_{r, i_3}\delta_{a_3, b_4}\cdots\delta_{a_{2k+1}, b_{2k+2}}\hat{j}_2^\dag\hat{j}_{2k+2}^\wdag
\end{equation*}
operator, where $\ell$ is the number of crossings in the sub-diagram — \textit{vide infra}.
\begin{figure}[h!]
\begin{center}
\begin{tikzpicture}
\deexend{\hat{j}_{2}^\dag}{\hat{b}_{2}^\wdag}
\exci{\hat{i}_{1}^\wdag}{\hat{a}_{1}^\dag}
\adaga{deex}{\hat{s}^\dag}{exci}{\hat{r}^\wdag}\deex{\hat{i}_3^\dag}{\hat{a}_3^\wdag}
\draw[dashed] (0,-1) -- (1.5,-1);
\tikzset{shift={(1.5,0)}}
\exciend{\hat{j}_{2k+2}^\wdag}{\hat{b}_{2k+2}^\dag}
\end{tikzpicture}
\end{center}
\caption{Sub-diagram generated by the $O$ operator sequence from \eqref{eq:OseqYYXX}, and the $C$ connection sequence from \eqref{eq:CseqYYXX}.}
\label{fig:subdiag_YYYYXX}
\end{figure}
As a consequence of what precedes, in the expansion and simplification of \eqref{1CNC_YYYYXX}, the
\begin{align}\label{eq:decompositionYYYYXX}
\hspace*{-.5cm}(-1)^{\ell}
\sum_{\substack{(i_u)_{u\in \mathcal{I}}\\(j_u)_{u\in \mathcal{P}}}}^{\llbracket 1,N\rrbracket}
\sum_{\substack{(a_u)_{u\in \mathcal{I}}\\(b_u)_{u\in \mathcal{P}}}}^{\llbracket (N+1),L\rrbracket}
&((\textbf{X})_{i_1,a_1-N}^\wdag
(\textbf{X}^\top)_{j_2,b_2-N}
(\textbf{Y})_{i_3,a_3-N}^\wdag
(\textbf{Y}^\top)_{j_4,b_4-N}
\cdots
(\textbf{Y})_{i_{2k+1},a_{2k+1}-N}^\wdag
(\textbf{Y}^\top)_{b_{2k+2}-N,j_{2k+2}}\nonumber\\
&\times\delta_{b_2, a_1}\delta_{i_1, s}\delta_{r, i_3}\delta_{a_3, b_4}\cdots\delta_{a_{2k+1}, b_{2k+2}}\hat{j}_2^\dag\hat{j}_{2k+2}^\wdag)
\end{align}
term will necessarily appear, where $\mathcal{P} \coloneqq (2z+2)_{z\in \llbracket 0, k\rrbracket}$ and $\mathcal{I} \coloneqq (2z+1)_{z\in \llbracket 0, k\rrbracket}$, and
$$\sum_{\substack{(i_u)_{u\in \mathcal{I}}\\(j_u)_{u\in \mathcal{P}}}}^{\llbracket 1,N\rrbracket} = \sum _{i_1 = 1}^N \sum_{i_3 = 1}^N \cdots \sum_{i_{2k+1}=1}^N \sum _{j_2 = 1}^N \sum_{j_4 = 1}^N \cdots \sum_{j_{2k+2}=1}^N$$
and
$$\sum_{\substack{(a_u)_{u\in \mathcal{I}}\\(b_u)_{u\in \mathcal{P}}}}^{\llbracket N+1,L\rrbracket} = \sum _{a_1 = N+1}^L \sum_{a_3 = N+1}^L \cdots \sum_{a_{2k+1}=N+1}^L \sum _{b_2 = N+1}^L \sum_{b_4 = N+1}^L \cdots \sum_{b_{2k+2}=N+1}^L$$
The Fermi-vacuum expectation value of \eqref{eq:decompositionYYYYXX} is equal to
\begin{equation*}
\hspace*{-.5cm}(-1)^{\ell}
\sum_{(j_u)_{u\in \mathcal{P}^*}}^{\llbracket 1,N\rrbracket}
\sum_{(b_u)_{u\in \mathcal{P}}}^{\llbracket (N+1),L\rrbracket}
(\textbf{Y})_{r,b_4-N}^\wdag(\textbf{Y}^\top)_{b_4-N,j_4}
\cdots
(\textbf{Y})_{j_{2k},b_{2k+2}-N}^\wdag(\textbf{Y}^\top)_{b_{2k+2}-N,j_{2k+2}}
(\textbf{X})_{j_{2k+2},b_2-N}^\wdag(\textbf{X}^\top)_{b_2-N,s} ,
\end{equation*}
i.e.,
$$(-1)^{\ell}(\textbf{Y}\textbf{Y}^\top \cdots \textbf{Y}\textbf{Y}^\top\textbf{X}\textbf{X}^\top)_{r,s}.$$
In the last step of this derivation, we have used $\mathcal{P}^* \coloneqq (2z+2)_{z\in \llbracket 1, k\rrbracket}$.

There is an even number of ``R'' and ``L'' in the connection sequence. Hence, due to Corollary \ref{remark:number_crossing} the subdiagram contains $\ell=(k+1)$ crossings. Consequently, the signature associated to the corresponding operator is equal to $(-1)^{k+1}$.

On the other hand, the recurrence relation \eqref{eq:relation_recu_RPA} introduces a $(-1)$ factor at each step, resulting in a $(-1)^{k}$ factor in the evaluation of ${\mathcal{F}}[\hat{\mathcal{O}}_k]$.

As a consequence, the $\textbf{Y}\textbf{Y}^\top \cdots \textbf{Y}\textbf{Y}^\top\textbf{X}\textbf{X}^\top$ matrix is present in the North-West block of $\tilde{{\gamma}}^\Delta_{k}$ with a negative sign.


More generally, for every $(\textbf{Y}\textbf{Y}^\top \cdots \textbf{Y}\textbf{Y}^\top\textbf{X}\textbf{X}^\top)_{r,s}$ contribution to the $r\times s$ element of $\tilde{\boldsymbol{\gamma}}^\Delta_{k}$, there is a sub-diagram which leads to that contribution. Corollary \ref{remark:occupancy_rs} tells us that the two extremities of the dash bend to the upper half-plane, and that the connection sequence will also contain an even number of ``$R$'' and of ``$L$''. Therefore, each $\textbf{Y}\textbf{Y}^\top \cdots \textbf{Y}\textbf{Y}^\top\textbf{X}\textbf{X}^\top$ contribution in the North-West block of the $\tilde{\boldsymbol{\gamma}}^\Delta_{k}$ matrix is affected by a negative factor. For that reason there cannot be two contributions, scalar multiple of $\textbf{Y}\textbf{Y}^\top \cdots \textbf{Y}\textbf{Y}^\top\textbf{X}\textbf{X}^\top$, that will cancel each other. 
\end{proof}

\begin{proposition}\label{prop:absenceXXYY}
For every natural integer $k$ strictly greater than one, the North-West — i.e., the occupied--occupied — block of $\tilde{\boldsymbol{\gamma}}^\Delta_{k}$ does not contain any scalar multiple of
\begin{equation}\label{eq:XXYY}
{\normalfont\textbf{X} \textbf{X}^\top\underbrace{\textbf{Y}\textbf{Y}^\top \cdots \textbf{Y}\textbf{Y}^\top}_{\textbf{Y}\textbf{Y}^\top \,\text{appears}\, k\,\text{times}}.}
\end{equation} 
\end{proposition}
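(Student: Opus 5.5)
The plan is to mirror the proof of Proposition \ref{prop:existYYYYXX}, but in the opposite direction. I will show that no admissible pair (operator sequence, connection sequence) can produce the ordered product $\textbf{X}\textbf{X}^\top\textbf{Y}\textbf{Y}^\top\cdots\textbf{Y}\textbf{Y}^\top$ as a contribution to $(\tilde{\boldsymbol{\gamma}}^\Delta_k)_{r,s}$ with $r,s$ occupied.

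By Lemma \ref{lemma:conditioning}, $\mathcal{F}[\hat{\mathcal{O}}_k]$ is a sum of non-zero real multiples of Fermi-vacuum expectation values of 1--CNC's. Each 1--CNC is built from $\hat{s}^\dag\hat{r}$ and $2k+2$ operators taken among $\hat{\mathcal{T}}_x,\hat{\mathcal{T}}_y,\hat{\mathcal{T}}_x^\dag,\hat{\mathcal{T}}_y^\dag$. By \eqref{eq:relation_recu_RPA}, together with the outer $\hat{\mathcal{T}}^\dag,\hat{\mathcal{T}}$ of $\mathcal{F}$, the odd positions of the operator sequence come from $\hat{\mathcal{T}}$ and the even positions from $\hat{\mathcal{T}}^\dag$. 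I will encode these constraints through Figure \ref{fig:nomenclature_bracket}:
\begin{itemize}
\item odd positions are $\textbf{X}$-closing or $\textbf{Y}$-opening brackets;
\item even positions are $\textbf{X}^\top$-opening or $\textbf{Y}^\top$-closing brackets.
\end{itemize}

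The first key step is a \emph{reading rule} for the matrix word produced by a non-vanishing sub-diagram, extracted from the computation in the proof of Proposition \ref{prop:existYYYYXX}. Starting from $\hat{r}$, one follows the brackets placed on the right of the dash in increasing order of the connection sequence. The final right extremity is then contracted with the final left extremity. One then follows the left brackets in decreasing order, ending at $\hat{s}^\dag$. Consequently the word equals the right-side factors in increasing position order, followed by the left-side factors in decreasing position order.

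A matching word must contain exactly one $\textbf{X}$, one $\textbf{X}^\top$, $k$ copies of $\textbf{Y}$ and $k$ copies of $\textbf{Y}^\top$, since the degree counts in \eqref{eq:XXYY} are forced. It must also begin with $\textbf{X}\textbf{X}^\top$. Since the $\textbf{X}$ and $\textbf{X}^\top$ factors come from one $\hat{\mathcal{T}}_x$ (odd position) and one $\hat{\mathcal{T}}_x^\dag$ (even position), I will split into cases according to where these two brackets sit:
\begin{itemize}
\item both on the right of the dash;
\item both on the left of the dash;
\item one on each side, with the right block ending and the left block beginning there.
\end{itemize}

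In each case I will use three facts. Corollary \ref{remark:occupancy_rs} forces an even number of R and an even number of L. Propositions \ref{prop:even_bracket} and \ref{prop:change_half_plan} fix the nature of every bracket, namely the first right bracket opening and alternation thereafter. Proposition \ref{prop:non_nullity} then fixes the undressed operator sequence from the connection sequence. From these I will derive a parity contradiction.

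For instance, in the right-side case the $\textbf{X}$-factor must be the \emph{first} right bracket, which is opening. But an opening $\textbf{X}$-bracket does not exist: $\hat{\mathcal{T}}_x$ is closing. The only way to produce an $\textbf{X}$-indexed row factor from an opening bracket is $\hat{\mathcal{T}}_x^\dag$, and this forces the word to read $\textbf{X}^\top$-type indices in the wrong order. After transposing indices, this yields $(\textbf{X}\textbf{X}^\top)$ only at the \emph{end} of the word, as in Proposition \ref{prop:existYYYYXX}. The left-side and split cases should fail similarly, because the outermost positions $2k+1,2k+2$ impose the order in which the left chain is read.

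The main obstacle is making the reading rule rigorous. That means tracking exactly which index of each bracket, occupied or virtual, becomes the row or column index of $\textbf{X}$ versus $\textbf{X}^\top$ when neighbouring brackets are contracted. Without this, "$\textbf{X}\textbf{X}^\top$ first" cannot be distinguished from "$\textbf{X}\textbf{X}^\top$ last". I will first try to prove a small lemma by induction on the length of the bracket chain: a chain of alternating brackets starting at $\hat{r}$ with an opening bracket yields, after index summation, a product whose factors appear in bracket order, with $\hat{\mathcal{T}}_x^\dag$ and $\hat{\mathcal{T}}_y$ contributing $\textbf{X}$ or $\textbf{Y}$ on the left of a pair, and $\hat{\mathcal{T}}_x$ and $\hat{\mathcal{T}}_y^\dag$ contributing $\textbf{X}^\top$ or $\textbf{Y}^\top$. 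With that lemma, the case analysis above becomes a finite check on the positions of the two $\textbf{X}$-type brackets. In every case, the resulting word has $\textbf{X}\textbf{X}^\top$ to the right of at least one $\textbf{Y}\textbf{Y}^\top$ pair, so no contribution equal to \eqref{eq:XXYY} arises and no scalar multiple of it can appear.
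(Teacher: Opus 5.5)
There is a genuine gap. The fact that actually excludes \eqref{eq:XXYY} never appears in your plan, and the one case you work out rests on a false claim.

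You write that an opening $\mathbf{X}$-bracket does not exist. But $\hat{\mathcal{T}}_x^\dag$ is exactly such a bracket: it is opening and carries $\mathbf{x}_{j_mb_m}$ (Figure~\ref{fig:nomenclature_bracket}). Your own closing lemma says that opening brackets give the untransposed letter when the chain is read from $\hat r$ to $\hat s^\dag$. The proof of Proposition~\ref{prop:existYYYYXX} confirms this: there the opening $\hat{D}^{j_2}_{b_2}$ yields $(\mathbf{X})_{j_{2k+2},b_2-N}$. So a $\hat{\mathcal{T}}_x^\dag$ placed first on the right produces precisely the leading $(\mathbf{X})_{r,\cdot}$ of the target, and nothing is read ``in the wrong order''. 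Your bullet labelling of even positions as $\mathbf{X}^\top$/$\mathbf{Y}^\top$ contradicts that lemma.

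Your concluding claim is also false. Put $\hat{\mathcal{T}}_x$ (position 1) innermost on the left, $\hat{\mathcal{T}}_y$ (position 3) outside it, and $\hat{\mathcal{T}}_x^\dag$ (position 2) innermost on the right, followed by positions $4,\dots,2k+2$. This gives the word $\mathbf{X}(\mathbf{Y}^\top\mathbf{Y})^k\mathbf{X}^\top$, with no adjacent $\mathbf{X}\mathbf{X}^\top$ at all. For $k=1$ this is the $-\tfrac{1}{2}\mathbf{X}\mathbf{Y}^\top\mathbf{Y}\mathbf{X}^\top$ term of \eqref{eq:gamma_1}.

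More fundamentally, the bracket-level constraints you list do not distinguish $\hat{\mathcal{T}}$ from $\hat{\mathcal{T}}^\dag$: even numbers of R and L, alternation, and Proposition~\ref{prop:non_nullity}. Yet the transposed word $\mathbf{Y}\mathbf{Y}^\top\cdots\mathbf{X}\mathbf{X}^\top$ does occur (Proposition~\ref{prop:existYYYYXX}). A ``parity contradiction'' about where the two $\mathbf{X}$-brackets sit therefore cannot close unless it uses the asymmetry coming from the innermost commutator.

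That asymmetry is the paper's whole argument. The first element of the operator sequence comes from $[\hat{s}^\dag\hat{r},\hat{\mathcal{T}}]$, so it is a $\hat{\mathcal{T}}$-bracket, and it is the bracket adjacent to the dash on its side (the innermost one in your reading rule). Combine this with even numbers of R and L (Corollary~\ref{remark:occupancy_rs}) and Proposition~\ref{prop:even_bracket}. If it sits on the right, it is opening, hence from $\hat{\mathcal{T}}_y$, and it supplies the first letter of the word, a $\mathbf{Y}$. If it sits on the left, it is closing, hence from $\hat{\mathcal{T}}_x$, and it supplies the last letter, an $\mathbf{X}^\top$.

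So every North-West word begins with $\mathbf{Y}$ or ends with $\mathbf{X}^\top$. By contrast, \eqref{eq:XXYY} begins with $\mathbf{X}$, which needs a $\hat{\mathcal{T}}_x^\dag$ bracket attached to $\hat r$. It ends with $\mathbf{Y}^\top$, which needs a $\hat{\mathcal{T}}_y^\dag$ bracket attached to $\hat s^\dag$. In the paper's words, both brackets attached to the dash extremities would originate from $\hat{\mathcal{T}}^\dag$, which is impossible. With this observation your reading-rule lemma is all you need, and the three-way case split on the $\mathbf{X}$-brackets becomes unnecessary.
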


\begin{proof}
Let $(r,s)$ be a couple of integers, both belonging to $\llbracket 1, N \rrbracket$. The $(\textbf{X} \textbf{X}^\top\textbf{Y}\textbf{Y}^\top \cdots \textbf{Y}\textbf{Y}^\top)_{r,s} $ matrix element can be written as a summation
\begin{equation}\label{eq:decompositionXXYYYY}
\hspace*{-.15cm}\sum_{l_1=1}^N\cdots\sum_{l_{k-1}=1}^{N}\sum_{c_1=N+1}^L \cdots \sum_{c_{k} = N+1}^{L}
(\textbf{X})_{r,c_1-N}^\wdag(\textbf{X}^\top)_{c_1-N,l_1}
(\textbf{Y})_{l_1,c_2-N}^\wdag(\textbf{Y}^\top)_{c_2-N,l_2}
\cdots
(\textbf{Y})_{l_{k-1},c_k-N}^\wdag(\textbf{Y}^\top)_{c_k-N,s}
\end{equation}
Notice that, in these conditions, the $r\times s$ element of a sub-matrix composing the occupied-occupied block of the total — i.e., $L \times L$ — matrix is also the $r\times s$ element of the total matrix.

Two matrix elements in the right-hand side of \eqref{eq:decompositionXXYYYY} share a common index only when the corresponding operators are \textit{contracted} in the chain expectation value — see Ref. \cite{etienne_comprehensive_2021}. In these conditions the two corresponding extremities in the diagrammatic simplification of the 1--CNC at stake are connected. Thus, two matrices follow each other in a term of the North-West block of $\tilde{{\gamma}}^\Delta_{k}$ if extremities of the diagrammatic representation of the associated operators are connected in the sub-diagram.

If there is a $(\textbf{X} \textbf{X}^\top\textbf{Y}\textbf{Y}^\top \cdots \textbf{Y}\textbf{Y}^\top)_{r,s}$ contribution to the $r \times s$ element of $\tilde{{\gamma}}^\Delta_{k}$, it is coming from a sub-diagram in which the extremity labeled with $\hat{s}^\dag$ is connected with the diagrammatic representation of an excitation operator associated with an $\textbf{y}$ component coefficient, and the extremity labeled with $\hat{r}$ is connected with the diagrammatic representation of a deexcitation operator associated with an $\textbf{x}$ component coefficient. The pattern is represented in Figure \ref{fig:pattern_YX}. Knowing that $r$ and $s$ can only take values in $\llbracket 1,N\rrbracket$ we have that the extremities of the dash in Figure \ref{fig:pattern_YX} must bend in the upper half-plane. Moreover, due to Proposition \ref{prop:nullity_criterion_sub_diagram} we also know that the two extremities of the sub-diagram must be in the upper half-plane for the contribution of the operator corresponding to the sub-diagram to be non-zero in the expectation value.

\begin{figure}[h!]
\centering
\begin{tikzpicture}
\pdeex{-2}{\hat{l}_p^\dag}{\hat{c}_p} 
\draw[dashed] (-1.5,-1) -- (0,-1);
\exci{\hat{j}_q}{\hat{b}_q^\dag}\adaga{deex}{\hat{s}^\dag}{exci}{\hat{r}}\deex{\hat{j}_{q'}^\dag}{\hat{b}_{q'}}
\draw[dashed] (0,-1) -- (1.5,-1);
\pexci{2}{\hat{l}_{p'}}{\hat{c}_{p'}^\dag}
\end{tikzpicture}
\caption{Sub-diagram which corresponds to an operator whose Fermi-vacuum expectation value will contribute to an element of a matrix with the \eqref{eq:XXYY} form in the occupied-occupied block of $\tilde{\bm{\gamma}}^\Delta_{k}$. Dashed lines represent a group of closing and opening bracket. The $p$, $p'$, $q$, and $q'$ values are such that $2k+2=\mathrm{max}(p,p')$ and $1=\mathrm{min}(q,q')$.}
\label{fig:pattern_YX}
\end{figure}

\noindent According to the nomenclature --- see Figure \ref{fig:nomenclature_bracket} ---, both diagrammatic representations connected to the extremities of the dash originate from $\mathcal{\hat{T}}^\dag$. However, we know that there is a $\mathcal{\hat{T}}$ operator in the innermost commutator, not a $\mathcal{\hat{T}}^\dag$. For that reason we know that a sub-diagram such as the one in Figure \ref{fig:pattern_YX} cannot be generated when diagrammatically evaluating $\mathcal{F}[\hat{\mathcal{O}}_k]$, hence no scalar multiple contribution of \eqref{eq:XXYY} can be encountered in the North-West block of $\tilde{\bm{\gamma}}^\Delta_k$.
\end{proof}
\newpage
\begin{proposition}
In the $(\tilde{\bm{\gamma}}^\Delta_k)_{k\in\mathbb{N}}^\wdag$ matrix sequence, only $\tilde{\bm{\gamma}}^\Delta_0$ is symmetric.
\end{proposition}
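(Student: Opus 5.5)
The proof splits into the three cases $k=0$, $k=1$ and $k\geq 2$, each handled by results already stated.

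\emph{Case $k=0$.} By \eqref{eq:EOMisFA} and \eqref{eq:relation_recu_RPA}, $\tilde{\boldsymbol{\gamma}}^\Delta_0=\tilde{\boldsymbol{\gamma}}^\Delta$, which is given explicitly in \eqref{eq:RPA_1DDM}. Both blocks $-\textbf{X}\textbf{X}^\top-\textbf{Y}\textbf{Y}^\top$ and $\textbf{X}^\top\textbf{X}+\textbf{Y}^\top\textbf{Y}$ are sums of Gram-type matrices, so each is symmetric. A direct sum of symmetric matrices is symmetric, so $\tilde{\boldsymbol{\gamma}}^\Delta_0$ is symmetric.

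\emph{Case $k=1$.} This is exactly Proposition \ref{prop:gamma1nonsymmetric}.

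\emph{Case $k\geq 2$.} Here I would combine Propositions \ref{prop:existYYYYXX} and \ref{prop:absenceXXYY}. Suppose, for contradiction, that $\tilde{\boldsymbol{\gamma}}^\Delta_k$ is symmetric. Then its North-West block $\textbf{N}_k$ is symmetric. In the expansion of $\textbf{N}_k$ as a linear combination of words in $\textbf{X},\textbf{X}^\top,\textbf{Y},\textbf{Y}^\top$, transposing $\textbf{N}_k$ maps each word to its reversed transposed word. The transpose of $\textbf{Y}\textbf{Y}^\top\cdots\textbf{Y}\textbf{Y}^\top\textbf{X}\textbf{X}^\top$ is $\textbf{X}\textbf{X}^\top\textbf{Y}\textbf{Y}^\top\cdots\textbf{Y}\textbf{Y}^\top$. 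By Proposition \ref{prop:existYYYYXX}, the first word appears in $\textbf{N}_k$ with a strictly negative coefficient, and no cancellation occurs. By Proposition \ref{prop:absenceXXYY}, the second word does not appear at all. Hence $\textbf{N}_k-\textbf{N}_k^\top$ contains the first word with a nonzero coefficient.

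\emph{Main obstacle.} The delicate step is turning this word-level asymmetry into genuine asymmetry of the matrix. Distinct formal words could in principle coincide as matrices for particular $\textbf{X},\textbf{Y}$, for instance if $\textbf{X}\textbf{X}^\top$ and $\textbf{Y}\textbf{Y}^\top$ commute. I would handle this by reading "not symmetric" as failing to hold identically in $(\textbf{X},\textbf{Y})$, the same sense used in Proposition \ref{prop:gamma1nonsymmetric}. The matrix entries are then polynomials in the components of $\textbf{x}$ and $\textbf{y}$, and I would exhibit one explicit admissible choice for which they differ. A natural choice is small real $\textbf{X},\textbf{Y}$ with $N=2$, $L=4$, such that $\textbf{Y}\textbf{Y}^\top$ is a rank-one projector and $\textbf{X}\textbf{X}^\top$ is a non-commuting rank-one matrix, with $\textbf{Y}$ rescaled so that $\textbf{x}^\top\textbf{x}-\textbf{y}^\top\textbf{y}=1$.

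For such a choice, the sum of all words in $\textbf{N}_k-\textbf{N}_k^\top$ must be evaluated, not only the distinguished one. I would use the homogeneity structure to isolate it: $\textbf{N}_k$ is homogeneous of degree $2k+2$ in the pair $(\textbf{X},\textbf{Y})$. The distinguished word is the unique candidate of degree $2k$ in $\textbf{Y}$ and $2$ in $\textbf{X}$ having $\textbf{X}\textbf{X}^\top$ at an end. Rescaling $\textbf{X}\to\lambda\textbf{X}$ and $\textbf{Y}\to\mu\textbf{Y}$ and extracting the $\lambda^2\mu^{2k}$ coefficient reduces the claim to comparing the $(\textbf{X},\textbf{X}^\top)$-degree-two words. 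The normalization constraint fixes only one relation between $\lambda$ and $\mu$, so polynomial identity arguments still apply. Among those words, the only non-symmetric contribution should be the distinguished one, because of Propositions \ref{prop:existYYYYXX} and \ref{prop:absenceXXYY} together with Proposition \ref{prop:non_nullity}. This gives $\textbf{N}_k\neq\textbf{N}_k^\top$ for all $k\geq 2$ and completes the proof.
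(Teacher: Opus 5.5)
Your three-case split, and in particular the $k\geq 2$ step combining Propositions \ref{prop:existYYYYXX} and \ref{prop:absenceXXYY}, is exactly the paper's proof. The paper stops at the formal-word level and never addresses the word-to-matrix issue you raise. That issue is genuine: for $N=1$, $L=2$ every block is $1\times 1$, so every $\tilde{\boldsymbol{\gamma}}^\Delta_k$ is symmetric. Your homogeneity reduction to the part quadratic in $\mathbf{X}$ is fine. The gap is in the next sentence, the claim that among those words "the only non-symmetric contribution should be the distinguished one".

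This does not follow from what you cite. Propositions \ref{prop:existYYYYXX} and \ref{prop:absenceXXYY} only concern $\mathbf{W}\coloneqq(\mathbf{Y}\mathbf{Y}^\top)^k\mathbf{X}\mathbf{X}^\top$ and $\mathbf{W}^\top$. Proposition \ref{prop:non_nullity} is about undressed operator sequences, not about how words pair with their transposes. The claim is in fact false already for $k=2$. Use the recursion of the paper's first conjecture. The authors checked it through $k=3$, and it is actually exact: every $\hatcal{O}_k$ is a one-body operator, the map $\mathbf{A}\mapsto\sum_{p,q}(\mathbf{A})_{p,q}\hat{p}^\dag\hat{q}$ preserves commutators, and the $\psi_\star$ expectation value of a one-body operator is the trace of its occupied block. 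With it, the part of the North-West block of $\tilde{\boldsymbol{\gamma}}^\Delta_2$ quadratic in $\mathbf{X}$ is
\[
-\mathbf{W}-\tfrac{1}{4}\mathbf{U}-\tfrac{1}{2}\mathbf{U}^\top-\tfrac{1}{4}\mathbf{V}+(\text{symmetric words}),\qquad
\mathbf{U}=\mathbf{Y}\mathbf{Y}^\top\mathbf{Y}\mathbf{X}^\top\mathbf{X}\mathbf{Y}^\top,\quad
\mathbf{V}=\mathbf{Y}\mathbf{Y}^\top\mathbf{X}\mathbf{Y}^\top\mathbf{Y}\mathbf{X}^\top .
\]
Neither $\mathbf{W}^\top$ nor $\mathbf{V}^\top$ occurs, so the antisymmetric part is $-(\mathbf{W}-\mathbf{W}^\top)+\tfrac14(\mathbf{U}-\mathbf{U}^\top)-\tfrac14(\mathbf{V}-\mathbf{V}^\top)$. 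Checking $\mathbf{W}\neq\mathbf{W}^\top$ at a test point therefore proves nothing on its own.

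There are two ways to close this.

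\emph{Option (i): a test point where only $\mathbf{W}$ can contribute.} Take occupied $o_0,\dots,o_{k+1}$ and virtual $v_1,\dots,v_{k+1}$, so $N\geq k+2$ and $L-N\geq k+1$. Put $\beta$ in $\mathbf{Y}$ at $(o_{m-1},v_m)$ and $(o_m,v_m)$ for $1\leq m\leq k$. Put $\alpha$ in $\mathbf{X}$ at $(o_k,v_{k+1})$ and $(o_{k+1},v_{k+1})$, zero elsewhere, with $2\alpha^2-2k\beta^2=1$. An entry of a word is a sum over occupied/virtual walks whose edges spell the word. Since $o_0$ and $o_{k+1}$ are at distance $2k+2$ on this labelled path, the $(o_0,o_{k+1})$ entry is $c_{\mathbf{W}}\alpha^2\beta^{2k}\neq 0$. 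The $(o_{k+1},o_0)$ entry is $c_{\mathbf{W}^\top}\alpha^2\beta^{2k}=0$. This uses exactly the content of the two propositions and nothing more.

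\emph{Option (ii): rescue your own rank-one point.} Take $\mathbf{Y}=uv^\top$ and $\mathbf{X}=pq^\top$ with $u^\top p>0$, $q^\top v>0$, and $u,p$ independent. Every North-West word becomes a positive scalar times $uu^\top$, $pp^\top$, $up^\top$ or $pu^\top$. So the antisymmetric part is $(A-B)(up^\top-pu^\top)$, where $A$ (respectively $B$) sums coefficient times scalar over words running from $\mathbf{Y}$ to $\mathbf{X}^\top$ (respectively from $\mathbf{X}$ to $\mathbf{Y}^\top$). You then need two facts:
\begin{itemize}
\item $B=0$. The argument of Proposition \ref{prop:absenceXXYY} applies verbatim to every word starting with $\mathbf{X}$ and ending with $\mathbf{Y}^\top$, because it only uses which $\hatcal{T}$-type operator the dash ends contract with.
\item Every North-West coefficient is negative. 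The crossing count $(-1)^{k+1}(-1)^k$ at the end of the proof of Proposition \ref{prop:existYYYYXX} holds for every North-West sub-diagram.
\end{itemize}
Then $A\leq c_{\mathbf{W}}s_{\mathbf{W}}<0$ for every $k\geq 2$, already with $N=2$, $L=4$, and the bidegree extraction becomes unnecessary.
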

\begin{proof}
We saw with \eqref{eq:EOMisFA} that $\tilde{\bm{\gamma}}^\Delta_0$ is actually $\tilde{\bm{\gamma}}^\Delta$, a symmetric matrix whose expression is given in \eqref{eq:RPA_1DDM}. Proposition \ref{prop:gamma1nonsymmetric} showed that $\tilde{\bm{\gamma}}^\Delta_1$ is not symmetric. Combining propositions \ref{prop:existYYYYXX} and \ref{prop:absenceXXYY} shows that all the $\tilde{\bm{\gamma}}^\Delta_k$ matrices are non-symmetric when $k$ is strictly greater than one.
\end{proof}

\subsection{A conjecture}

\begin{conjecture}
Let ${\normalfont\textbf{T}}$ be the $L\times L$ matrix defined as
\begin{eqnarray}\label{eq:Tmatrixconjecture}
\normalfont\textbf{T} \coloneqq 
 \left(
  \begin{array}{cc}
   \textbf{0}_{N} & \textbf{X}          \\
   \textbf{Y}^\top & \textbf{0}_{L-N} \\
  \end{array}\right).
\end{eqnarray}
We postulate that in the $(\tilde{\bm{\gamma}}^\Delta_k)_{k\in\mathbb{N}}^\wdag$ matrix sequence each element is partitioned as
\begin{equation*}
\tilde{\boldsymbol{\gamma}}^\Delta_{k} = \boldsymbol{\Delta}^{\mathrm{NW}}_{k}\oplus\boldsymbol{\Delta}^{\mathrm{SE}}_{k}.
\end{equation*}
The matrix sequence itself is recursively defined as follows: $(\tilde{\bm{\gamma}}^\Delta_0 = \tilde{\bm{\gamma}}^\Delta)$, and
\begin{align*}
\normalfont\forall k \geq 1, \,
\tilde{\boldsymbol{\gamma}}^\Delta_{k+1}
= -\dfrac{1}{2}\left[\textbf{T}^\top,\left[ \tilde{\boldsymbol{\gamma}}^\Delta_{k},\textbf{T}  \right] \right] =  
\dfrac{1}{2}&\left( \normalfont
  \boldsymbol{\Delta}^{\mathrm{NW}}_{k}\textbf{X}\textbf{X}^\top 
+ \textbf{Y}\textbf{Y}^\top\boldsymbol{\Delta}^{\mathrm{NW}}_{k} 
- \textbf{X}\boldsymbol{\Delta}^{\mathrm{SE}}_{k}\textbf{X}^\top 
- \textbf{Y}\boldsymbol{\Delta}^{\mathrm{SE}}_{k}\textbf{Y}^\top 
\right)\\
\oplus 
&\left( \normalfont
  \textbf{X}^\top\textbf{X}\boldsymbol{\Delta}^{\mathrm{SE}}_{k} 
+ \boldsymbol{\Delta}^{\mathrm{SE}}_{k}\textbf{Y}^\top\textbf{Y}
- \textbf{X}^\top\boldsymbol{\Delta}^{\mathrm{NW}}_{k}\textbf{X}
- \textbf{Y}^\top\boldsymbol{\Delta}^{\mathrm{NW}}_{k}\textbf{Y} \right).
\end{align*}
\end{conjecture}
\noindent The conjecture has been verified using a homemade code of symbolic calculation, up to $k=3$. No element of this matrix sequence is symmetric except the first.

\section{Second approach — Recasting our case in a symmetrized version of the problem}
\subsection{Re-statement of the problem}
\begin{definition}[Symmetrized double commutator]
Let $\hat{A}$, $\hat{B}$, and $\hat{C}$ be three operators. The {\normalfont symmetrized double commutator} $[\cdot\,,\cdot\,,\cdot]$ is defined as
$$[\hat{A},\hat{B},\hat{C}] \coloneqq \dfrac{1}{2}\left([\hat{A},[\hat{B},\hat{C}]] + [[\hat{A},\hat{B}],\hat{C}]\right).$$
\end{definition}
\begin{definition}[Symmetrized exact elementary functional]\label{def:FnAsym}
Let $n$ be a natural integer belonging to $\llbracket 1,M\rrbracket$. The {\normalfont symmetrized exact elementary functional} corresponding to the $\Psi_0\longrightarrow\Psi_n$ transition is the linear form, denoted $F_n^S$, and defined as
\begin{align*}
\mathrm{End}(S_\mathbb{K}) &\longrightarrow \mathbb{K}\\
\hat{A}&\longmapsto F_n^S[\hat{A}]\coloneqq \braket{\Psi_0 |[\hat{T}_{0\rightarrow n}^\dag,\hat{A},\hat{T}_{0\rightarrow n}]|\Psi_0}.
\end{align*}
\end{definition}
\noindent The ``$S\,$'' superscript in the ``$F^S_n\,$'' expression denotes the symmetrization.
\begin{proposition}\label{prop:recu_EOM_sym}
Let $n$ be a natural integer belonging to $\llbracket 1, M\rrbracket$. For every $(r,s)$ couple of integers — both belonging to $\mathbb{N}^2$ — there exists an infinite sequence of operators, $(\hat{S}_k^\wdag)_{k\in\mathbb{N}}^\wdag$, defined as
\begin{equation}\label{def:relation_recu_sym}
\hat{S}_0 := \hat{s}^\dag\hat{r} \;\, \mathrm{and} \;\, \forall k\in\mathbb{N},\, \hat{S}_{k+1} \coloneqq -\dfrac{1}{2} \left[ \hat{{T}}_{0\rightarrow n}^\dag, \hat{S}_k,  \hat{{T}}_{0\rightarrow n} \right].
\end{equation}
such that
$$\forall k \in \mathbb{N},\, (\boldsymbol{\gamma}^\Delta_{0\rightarrow n})_{r,s} = F_n^S[\hat{S}_k].$$
\end{proposition}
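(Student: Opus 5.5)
The plan is to reduce Proposition \ref{prop:recu_EOM_sym} to the same mechanism as Proposition \ref{prop:recu_EOM_simple}. I will show that the symmetrized functional coincides with the non-symmetrized one, and then run an induction on $k$.

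\textbf{Step 1: $F_n^S = F_n$.} Throughout, write $\hat{T}$ for $\hat{T}_{0\rightarrow n}$ and use only four facts, all immediate from \eqref{eq:def:exactSTO} and orthonormality with $n\neq 0$:
$$\hat{T}\ket{\Psi_0}=\ket{\Psi_n},\qquad \hat{T}^\dag\ket{\Psi_0}=0,\qquad \hat{T}^\dag\hat{T}=\ket{\Psi_0}\bra{\Psi_0},\qquad \hat{T}\hat{T}^\dag=\ket{\Psi_n}\bra{\Psi_n}.$$
Expanding $[\hat{T}^\dag,[\hat{A},\hat{T}]]$ into four products and taking the $\Psi_0$ expectation value gives
$$F_n[\hat{A}]=\braket{\Psi_n|\hat{A}|\Psi_n}-\braket{\Psi_0|\hat{A}|\Psi_0}$$
for every $\hat{A}\in\mathrm{End}(S_\mathbb{K})$, not only for $\hat{s}^\dag\hat{r}$. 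Expanding
$$[[\hat{T}^\dag,\hat{A}],\hat{T}]=\hat{T}^\dag\hat{A}\hat{T}-\hat{A}\hat{T}^\dag\hat{T}-\hat{T}\hat{T}^\dag\hat{A}+\hat{T}\hat{A}\hat{T}^\dag$$
in the same way gives the same value $\braket{\Psi_n|\hat{A}|\Psi_n}-\braket{\Psi_0|\hat{A}|\Psi_0}$. Averaging the two yields $F_n^S[\hat{A}]=F_n[\hat{A}]$ for all $\hat{A}$.

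\textbf{Step 2: the key identity.} Set $\hat{X}\coloneqq[\hat{T}^\dag,\hat{B},\hat{T}]$. I will show that $F_n[\hat{X}]=-2F_n[\hat{B}]$ for every $\hat{B}$. By Step 1,
$$F_n[\hat{X}]=\braket{\Psi_n|\hat{X}|\Psi_n}-\braket{\Psi_0|\hat{X}|\Psi_0},$$
and the second term is exactly $F_n^S[\hat{B}]=F_n[\hat{B}]$. For the first term, I evaluate each of the two double commutators between $\Psi_n$ using $\hat{T}\ket{\Psi_n}=0$ and $\hat{T}^\dag\ket{\Psi_n}=\ket{\Psi_0}$. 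Both give $-(\braket{\Psi_n|\hat{B}|\Psi_n}-\braket{\Psi_0|\hat{B}|\Psi_0})=-F_n[\hat{B}]$, so $\braket{\Psi_n|\hat{X}|\Psi_n}=-F_n[\hat{B}]$. Hence $F_n[\hat{X}]=-F_n[\hat{B}]-F_n[\hat{B}]=-2F_n[\hat{B}]$.

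\textbf{Step 3: induction on $k$.} The base case is \eqref{eq:1DDM}, since $F_n^S[\hat{S}_0]=F_n[\hat{s}^\dag\hat{r}]$. For the step case, \eqref{def:relation_recu_sym} gives
$$F_n^S[\hat{S}_{k+1}]=-\tfrac12 F_n\big[[\hat{T}^\dag,\hat{S}_k,\hat{T}]\big]=F_n[\hat{S}_k]=F_n^S[\hat{S}_k],$$
which closes the induction.

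The only delicate point is the bookkeeping of the sixteen four-operator terms in Step 2, together with checking that $n\neq 0$ makes the cross terms vanish; no other obstacle is expected.
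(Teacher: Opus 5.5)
Your proof is correct and follows the same route as the paper, whose proof simply defers to that of Proposition \ref{prop:recu_EOM_simple}. That route is an induction on $k$ in which the outer (symmetrized) double commutator is evaluated as $\langle\Psi_n|\cdot|\Psi_n\rangle-\langle\Psi_0|\cdot|\Psi_0\rangle$, with the $\Psi_n$-term recognized as $-F_n$ of the inner operator. What you add is the explicit bookkeeping the paper leaves implicit, namely $F_n^S=F_n$ and $F_n\bigl[[\hat{T}^\dag,\hat{B},\hat{T}]\bigr]=-2F_n[\hat{B}]$; both use only matrix elements between $\Psi_0$ and $\Psi_n$, so they hold for any operator, including $\hat{s}^\dag\hat{r}$, which in general is not an element of $\mathrm{End}(S_\mathbb{K})$.
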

\begin{proof}
The proof of Proposition \ref{prop:recu_EOM_sym} follows the exact same steps as the proof of {\normalfont Proposition \ref{prop:recu_EOM_simple}}.
\end{proof}
\begin{definition}[Symmetrized approached elementary functional]\label{def:FAsym}
The {\normalfont symmetrized approached elementary functional} is the linear form, denoted $\mathcal{F}^\mathcal{S}$, and defined as
\begin{align*}
\mathrm{End}(\mathcal{C}_\mathbb{R})&\longrightarrow \mathbb{R}\\
\hatcal{A}&\longmapsto\mathcal{F}^\mathcal{S}[\hatcal{A}]\coloneqq \braket{\psi_\star |[\hatcal{T}^\dag,\hatcal{A},\hatcal{T}]|\psi_\star}.
\end{align*}
\end{definition}
\noindent The ``$\mathcal{S}\, $'' superscript in the ``$\mathcal{F}^\mathcal{S}\,$'' expression denotes the symmetrization.
\begin{lemma}\label{lemma:A1dagA2A1}
Let $\hatcal{A}_1$ and $\hatcal{A}_2$ be two endomorphisms of $\mathcal{C}_\mathbb{R}$, with their adjoint being both also endomorphisms of $\mathcal{C}_\mathbb{R}$. Then,
$$[\hatcal{A}_1^\dag,\hatcal{A}_2^\wdag,\hatcal{A}_1]^\dag = [\hatcal{A}_1^\dag,\hatcal{A}_2^\dag,\hatcal{A}_1].$$
\end{lemma}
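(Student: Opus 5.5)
The plan is to unfold the symmetrized double commutator and apply the adjoint term by term, using only $(\hat{P}\hat{Q})^\dag=\hat{Q}^\dag\hat{P}^\dag$, $(\hat{P}^\dag)^\dag=\hat{P}$, and the consequence $[\hat{P},\hat{Q}]^\dag = [\hat{Q}^\dag,\hat{P}^\dag]$. The hypothesis that $\hatcal{A}_1$, $\hatcal{A}_2$ and their adjoints are endomorphisms of $\mathcal{C}_\mathbb{R}$ guarantees that all products involved are well-defined endomorphisms of the same space. It also guarantees that adjunction is an antilinear (here real-linear) anti-automorphism of $\mathrm{End}(\mathcal{C}_\mathbb{R})$.

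First, I would compute the adjoint of each of the two nested commutators separately. For the first, applying the commutator rule twice gives
\begin{align*}
[\hatcal{A}_1^\dag,[\hatcal{A}_2,\hatcal{A}_1]]^\dag
&= [[\hatcal{A}_2,\hatcal{A}_1]^\dag,\hatcal{A}_1]
= [[\hatcal{A}_1^\dag,\hatcal{A}_2^\dag],\hatcal{A}_1].
\end{align*}
For the second, similarly,
\begin{align*}
[[\hatcal{A}_1^\dag,\hatcal{A}_2],\hatcal{A}_1]^\dag
&= [\hatcal{A}_1^\dag,[\hatcal{A}_1^\dag,\hatcal{A}_2]^\dag]
= [\hatcal{A}_1^\dag,[\hatcal{A}_2^\dag,\hatcal{A}_1]].
\end{align*}

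Next, I would sum the two results and multiply by the real factor $1/2$. By linearity of the adjoint over $\mathbb{R}$, this yields
$$[\hatcal{A}_1^\dag,\hatcal{A}_2,\hatcal{A}_1]^\dag=\tfrac12\left([\hatcal{A}_1^\dag,[\hatcal{A}_2^\dag,\hatcal{A}_1]]+[[\hatcal{A}_1^\dag,\hatcal{A}_2^\dag],\hatcal{A}_1]\right).$$
This is exactly $[\hatcal{A}_1^\dag,\hatcal{A}_2^\dag,\hatcal{A}_1]$ by the definition of the symmetrized double commutator. The key observation is that adjunction exchanges the two nestings, the inner-right one becoming the inner-left one and vice versa. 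The symmetrization is precisely what makes the total invariant under this exchange. For comparison, a single nesting would not satisfy the analogous identity.

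There is no real obstacle; the only point requiring care is bookkeeping of the order reversal at each level of nesting. I would also verify explicitly that $(\hatcal{A}_1^\dag)^\dag=\hatcal{A}_1$ holds in the finite-dimensional real space $\mathcal{C}_\mathbb{R}$, which it does.
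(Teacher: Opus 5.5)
Your proof is correct and follows the same route as the paper: expand the symmetrized double commutator, then use the real-linearity of the adjoint, $(\hat{P}\hat{Q})^\dag=\hat{Q}^\dag\hat{P}^\dag$ and $[\hat{P},\hat{Q}]^\dag=[\hat{Q}^\dag,\hat{P}^\dag]$, which exchanges the two nestings. You simply write out explicitly the term-by-term computation that the paper calls immediate, and both of your intermediate identities check out.
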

\begin{proof}
If $\hatcal{Q}_1^\wdag$ and $\hatcal{Q}_2^\wdag$ are two endomorphisms of $\mathcal{C}_\mathbb{R}$, with $\hatcal{Q}_1^\dag$ and $\hatcal{Q}_2^\dag$ being also two endomorphisms of $\mathcal{C}_\mathbb{R}$, we know that
$$\forall (\alpha,\beta)\in\mathbb{R}^2,\, (\alpha \hatcal{Q}_1+\beta\hatcal{Q}_2)^\dag = \alpha \hatcal{Q}_1^\dag+\beta\hatcal{Q}_2^\dag.$$
We also have $(\hatcal{Q}_1\hatcal{Q}_2)^\dag = \hatcal{Q}_2^\dag\hatcal{Q}_1^\dag$, and $[\hatcal{Q}_1,\hatcal{Q}_2]^\dag = [\hatcal{Q}_2^\dag,\hatcal{Q}_1^\dag]$. Applying those three identities to the case of $[\hatcal{A}_1^\dag,\hatcal{A}_2^\wdag,\hatcal{A}_1]^\dag$ immediately gives the desired result.
\end{proof}
\begin{proposition}\label{prop:FSAdagFSA}
Let $\hatcal{A}$ be an endomorphism of $\mathcal{C}_\mathbb{R}$, with $\hatcal{A}^\dag$ being also an endomorphism of $\mathcal{C}_\mathbb{R}$. Then,
$$\mathcal{F}^\mathcal{S}[\hatcal{A}^\dag] = \mathcal{F}^\mathcal{S}[\hatcal{A}].$$
\end{proposition}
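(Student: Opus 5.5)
The plan is to reduce the statement to Lemma \ref{lemma:A1dagA2A1} combined with the fact that we work over the real field. First, I will apply Lemma \ref{lemma:A1dagA2A1} with $\hatcal{A}_1 = \hatcal{T}$ and $\hatcal{A}_2 = \hatcal{A}$. This is legitimate because $\hatcal{T}$ is an endomorphism of $\mathcal{C}_\mathbb{R}$ by \eqref{eq:def:Tcurved}, and its adjoint is again a real linear combination of (de)excitation operators, hence also an endomorphism of $\mathcal{C}_\mathbb{R}$. The lemma then gives
$$[\hatcal{T}^\dag,\hatcal{A},\hatcal{T}]^\dag = [\hatcal{T}^\dag,\hatcal{A}^\dag,\hatcal{T}].$$

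Second, I will use this identity inside the expectation value. For any endomorphism $\hatcal{B}$ of $\mathcal{C}_\mathbb{R}$ whose adjoint is also an endomorphism of $\mathcal{C}_\mathbb{R}$, the definition of the adjoint gives
$$\braket{\psi_\star|\hatcal{B}^\dag|\psi_\star} = \overline{\braket{\psi_\star|\hatcal{B}|\psi_\star}}.$$
The state $\psi_\star$ is built from real-valued spin-orbitals, and all operators here act on the real span $\mathcal{C}_\mathbb{R}$. So this expectation value is real, and complex conjugation acts trivially on it; equivalently, the inner product on $\mathcal{C}_\mathbb{R}$ is symmetric. Taking $\hatcal{B} = [\hatcal{T}^\dag,\hatcal{A},\hatcal{T}]$, I then chain the steps:
$$\mathcal{F}^\mathcal{S}[\hatcal{A}^\dag]=\braket{\psi_\star|[\hatcal{T}^\dag,\hatcal{A},\hatcal{T}]^\dag|\psi_\star}=\braket{\psi_\star|[\hatcal{T}^\dag,\hatcal{A},\hatcal{T}]|\psi_\star}=\mathcal{F}^\mathcal{S}[\hatcal{A}].$$

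The only delicate point is the bookkeeping needed to confirm that every operator involved stays in $\mathrm{End}(\mathcal{C}_\mathbb{R})$ together with its adjoint, so that the lemma's hypotheses hold. This follows because commutators and real linear combinations preserve that class. I would also note explicitly why the symmetrized bracket is essential: for the plain double commutator, $[\hatcal{T}^\dag,[\hatcal{A},\hatcal{T}]]^\dag$ equals $[[\hatcal{T}^\dag,\hatcal{A}^\dag],\hatcal{T}]$, which is a different nesting. It is precisely the averaging of the two nestings in $[\cdot\,,\cdot\,,\cdot]$ that makes Lemma \ref{lemma:A1dagA2A1} close the argument.
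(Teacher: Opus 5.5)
Your proof is correct and takes essentially the same route as the paper: apply Lemma \ref{lemma:A1dagA2A1} with $\hatcal{A}_1=\hatcal{T}$ and $\hatcal{A}_2=\hatcal{A}$, then use that expectation values with respect to $\psi_\star$ on the real space $\mathcal{C}_\mathbb{R}$ are real, so the complex conjugation introduced by the adjoint can be dropped. The only differences are cosmetic: you start the chain from $\mathcal{F}^\mathcal{S}[\hatcal{A}^\dag]$ where the paper starts from $\mathcal{F}^\mathcal{S}[\hatcal{A}]$, and you add a correct remark on why the unsymmetrized double commutator would not close the argument.
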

\begin{proof}
The $\mathcal{F}^\mathcal{S}[\hatcal{A}]$ expression — see Definition \ref{def:FAsym} — alternatively reads
$$\mathcal{F}^\mathcal{S}[\hatcal{A}] = \braket{\psi_\star|[\hatcal{T}^\dag,\hatcal{A},\hatcal{T}]^\dag|\psi_\star}^*.$$
Knowing that $\hatcal{A}$ and $\hatcal{A}^\dag$, as well as $\hatcal{T}$ and $\hatcal{T}^\dag$ are all endomorphisms of $\mathcal{C}_\mathbb{R}$ — which solely contains real-valued wave functions and includes $\psi_\star$ — allows us to state that $\mathcal{F}^\mathcal{S}[\hatcal{A}]$ is real valued: 
$$\mathcal{F}^\mathcal{S}[\hatcal{A}] = \braket{\psi_\star|[\hatcal{T}^\dag,\hatcal{A},\hatcal{T}]^\dag|\psi_\star},$$
which, in virtue of Lemma \ref{lemma:A1dagA2A1}, is equal to
$$\mathcal{F}^\mathcal{S}[\hatcal{A}] = \braket{\psi_\star|[\hatcal{T}^\dag,\hatcal{A}^\dag,\hatcal{T}]|\psi_\star}$$
that is $\mathcal{F}^\mathcal{S}[\hatcal{A}^\dag]$.
\end{proof}
\noindent We now build $({}^\mathcal{S}\tilde{\boldsymbol{\gamma}}^\Delta_{k})_{k\in\mathbb{N}}^\wdag$, an infinite sequence of $L\times L$ matrices that are such that, for every $(r,s)$ couple of natural integers — both belonging to $\llbracket 1, L \rrbracket$ —, there is an infinite sequence of operators, $(\hatcal{S}_\ell^\wdag)^\wdag_{\ell\in\mathbb{N}}$, defined as
\begin{equation}\label{eq:relation_recu_RPA_sym}
\hatcal{S}_0 := \hat{s}^\dag\hat{r} \;\, \mathrm{and} \;\, \forall \ell \in \mathbb{N},\, \hatcal{S}_{\ell+1} \coloneqq -\dfrac{1}{2} \left[ \hat{\mathcal{T}}^\dag, \hatcal{S}_\ell,  \hat{\mathcal{T}} \right]
\end{equation}
such that
\begin{equation*}
\forall k \in \mathbb{N},\, ({}^\mathcal{S}\tilde{\boldsymbol{\gamma}}^\Delta_{k})_{r,s} \coloneqq \mathcal{F}^\mathcal{S}[\hatcal{S}_k].
\end{equation*}
\begin{lemma}\label{lemma:jacobi}
Let $\psi_\star$ be the function defined in \eqref{eq:def:psi0}, $\hatcal{T}$ the map defined in \eqref{eq:def:Tcurved}, and $(r,s)$ any couple of integers, both belonging to $\llbracket 1, L\rrbracket$. Then,
\begin{equation*}
\braket{\psi_\star|[\hatcal{T}^\dag,[\hat{s}^\dag\hat{r},\hatcal{T}]]|\psi_\star} = \braket{\psi_\star|[[\hatcal{T}^\dag,\hat{s}^\dag\hat{r}],\hatcal{T}]|\psi_\star}.
\end{equation*}
\end{lemma}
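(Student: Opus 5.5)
The plan is to reduce the claimed equality to the vanishing of a single expectation value, via the Jacobi identity. For any three operators, $[\hat{X},[\hat{Y},\hat{Z}]] = [[\hat{X},\hat{Y}],\hat{Z}] + [\hat{Y},[\hat{X},\hat{Z}]]$. I would apply this with $\hat{X}=\hatcal{T}^\dag$, $\hat{Y}=\hat{s}^\dag\hat{r}$ and $\hat{Z}=\hatcal{T}$. The difference between the two sides of Lemma \ref{lemma:jacobi} is then exactly
$$\braket{\psi_\star|[\hat{s}^\dag\hat{r},[\hatcal{T}^\dag,\hatcal{T}]]|\psi_\star},$$
and it suffices to show that this quantity is zero for every $(r,s)\in\llbracket 1,L\rrbracket^2$.

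The second step is to compute $\hat{K}\coloneqq[\hatcal{T}^\dag,\hatcal{T}]$ explicitly, using the splittings \eqref{eq:nomenclatureT} and \eqref{eq:nomenclatureTdag}. By bilinearity it decomposes into four commutators.
\begin{itemize}
\item The cross terms $[\hatcal{T}_x^\dag,\hatcal{T}_y]$ and $[\hatcal{T}_y^\dag,\hatcal{T}_x]$ vanish identically. Each involves two deexcitations, or two excitations, built on disjoint occupied and virtual index sets. For instance $[\hat{j}^\dag\hat{b},\hat{i}^\dag\hat{a}] = \delta_{b,i}\hat{j}^\dag\hat{a} - \delta_{j,a}\hat{i}^\dag\hat{b} = 0$, since $b\neq i$ and $j\neq a$.
\item The diagonal terms follow from the elementary rule $[\hat{p}^\dag\hat{q},\hat{u}^\dag\hat{v}] = \delta_{q,u}\hat{p}^\dag\hat{v}-\delta_{p,v}\hat{u}^\dag\hat{q}$. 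They give, for example, $[\hat{j}^\dag\hat{b},\hat{a}^\dag\hat{i}] = \delta_{a,b}\hat{j}^\dag\hat{i}-\delta_{i,j}\hat{a}^\dag\hat{b}$, and the analogous expression for the $\textbf{y}$ part.
\end{itemize}
Consequently $\hat{K}$ is a one-body operator $\sum_{p,q}(\textbf{K})_{p,q}\hat{p}^\dag\hat{q}$ whose matrix is block-diagonal in the occupied/virtual partition. Up to signs, its occupied--occupied block is built from $\textbf{X}\textbf{X}^\top$ and $\textbf{Y}\textbf{Y}^\top$, and its virtual--virtual block from $\textbf{X}^\top\textbf{X}$ and $\textbf{Y}^\top\textbf{Y}$. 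The exact coefficients will not matter; only the block-diagonal structure does.

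Finally, I would evaluate $\braket{\psi_\star|[\hat{s}^\dag\hat{r},\hat{p}^\dag\hat{q}]|\psi_\star}$ with the same commutation rule. The Fermi-vacuum expectation value of $\hat{u}^\dag\hat{v}$ is $\delta_{u,v}n_u$, where $n_u$ is the occupation number, so the result is $\delta_{r,p}\delta_{s,q}(n_s-n_r)$. Summing over $p,q$ gives
$$\braket{\psi_\star|[\hat{s}^\dag\hat{r},\hat{K}]|\psi_\star} = (\textbf{K})_{r,s}\,(n_s-n_r).$$
Because $\textbf{K}$ is block-diagonal, $(\textbf{K})_{r,s}\neq 0$ only when $r$ and $s$ are both occupied or both virtual, and in either case $n_r=n_s$. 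The product therefore vanishes for every $(r,s)$, which proves the lemma.

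The main obstacle is the cancellation of the cross terms in $\hat{K}$. That is the only point where the specific structure of $\hatcal{T}$ enters, namely that it is built solely from single excitations and deexcitations between disjoint orbital sets. I would double-check it carefully, including the sign conventions inherited from the $-\textbf{y}$ terms. One could alternatively verify the result with the diagrammatic language of the preceding section, but the direct algebraic route above seems shorter.
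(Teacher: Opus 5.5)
Your proof is correct. Its first step is the paper's: the paper also uses the Jacobi identity to reduce the lemma to the vanishing of $\mathscr{D}^{r,s}=\braket{\psi_\star|[\hat{s}^\dag\hat{r},[\hatcal{T},\hatcal{T}^\dag]]|\psi_\star}$, which is your quantity up to an irrelevant sign.

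The routes diverge after that.
\begin{itemize}
\item The paper expands the outer commutator into the four strings $\hat{s}^\dag\hat{r}\hatcal{T}\hatcal{T}^\dag$, $\hat{s}^\dag\hat{r}\hatcal{T}^\dag\hatcal{T}$, $\hatcal{T}\hatcal{T}^\dag\hat{s}^\dag\hat{r}$ and $\hatcal{T}^\dag\hatcal{T}\hat{s}^\dag\hat{r}$. It evaluates each Fermi-vacuum expectation value with contraction rules, getting $\vartheta_y\delta_{r,s}n_rn_s$, $\vartheta_x\delta_{r,s}n_rn_s$, $\vartheta_y\delta_{r,s}n_rn_s$ and $\vartheta_x\delta_{r,s}n_rn_s$, which cancel pairwise. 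In that computation the mixed terms such as $\hatcal{T}_x\hatcal{T}_y^\dag$ survive at the operator level. They drop out only because they produce double excitations, whose expectation value against $\hat{s}^\dag\hat{r}$ is zero.
\item You instead reduce the inner commutator to a one-body operator first. There the cross commutators $[\hatcal{T}_x^\dag,\hatcal{T}_y]$ and $[\hatcal{T}_y^\dag,\hatcal{T}_x]$ cancel identically. You then apply the general identity $\braket{\psi_\star|[\hat{s}^\dag\hat{r},\hat{p}^\dag\hat{q}]|\psi_\star}=\delta_{r,p}\delta_{s,q}(n_s-n_r)$.
\end{itemize}

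Your version avoids six-operator vacuum expectation values. It needs no explicit coefficients and no normalization of $\textbf{x}$ and $\textbf{y}$. It also isolates the structural reason for the result: any one-body operator that does not couple occupied and virtual spin-orbitals has a commutator with $\hat{s}^\dag\hat{r}$ of zero Fermi-vacuum expectation value. The paper's computation, in exchange, gives the explicit values of the four individual contributions and stays within the string-evaluation rules it relies on elsewhere.
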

\begin{proof}
\noindent Let $\hat{A}$, $\hat{B}$, and $\hat{C}$ be three endomorphisms on a vector space $V$, and $\hat{\mathds{1}}_V$ be the identity map in $V$. We know that since commutators are Lie brackets, they must satisfy the Jacobi identity:
\begin{equation}\label{eq:Jacobi}
 [ \hat{A},  [ \hat{B}, \hat{C}  ]  ] + 
 [ \hat{C},  [ \hat{A}, \hat{B}  ]  ] + 
 [ \hat{B},  [ \hat{C}, \hat{A}  ]  ]
= {0}\hat{\mathds{1}}_V.
\end{equation}
Reorganizing \eqref{eq:Jacobi} gives
\begin{equation}\label{eq:Jacobi1}
  [ \hat{A},  [ \hat{B}, \hat{C}  ]  ] 
+   [ \hat{B},  [ \hat{C}, \hat{A}  ]  ]  
=  [  [ \hat{A}, \hat{B}  ], \hat{C}  ]  .
\end{equation}
If we take $(V=\mathcal{C}_\mathbb{R})$, $(\hat{A} = \hatcal{T}^\dag)$, $(\hat{B} = \hat{s}^\dag\hat{r})$, and $(\hat{C} = \hatcal{T})$, due to \eqref{eq:Jacobi1} it is then sufficient to prove that
\begin{equation*}
\mathscr{D}^{r,s}\coloneqq \braket{\psi_\star| [ \hat{s}^\dag\hat{r}, [ \hatcal{T}, \hatcal{T}^\dag  ]  ] |\psi_\star} = 0 
\end{equation*}
for proving our Lemma. We can develop $\mathscr{D}^{r,s}$ using four contributions:
\begin{align*}
\mathscr{D}^{r,s}_{\mathrm{I}}   &\coloneqq \braket{\psi_\star | \hat{s}^\dag\hat{r} \hatcal{T} \hatcal{T}^\dag |\psi_\star}, \\
\mathscr{D}^{r,s}_{\mathrm{II}}  &\coloneqq \braket{\psi_\star | \hat{s}^\dag\hat{r} \hatcal{T}^\dag \hatcal{T} |\psi_\star}, \\
\mathscr{D}^{r,s}_{\mathrm{III}} &\coloneqq \braket{\psi_\star | \hatcal{T} \hatcal{T}^\dag \hat{s}^\dag\hat{r} |\psi_\star}, \\
\mathscr{D}^{r,s}_{\mathrm{IV}}  &\coloneqq \braket{\psi_\star | \hatcal{T}^\dag \hatcal{T} \hat{s}^\dag\hat{r} |\psi_\star}.
\end{align*}
Then, $\mathscr{D}^{r,s}$ reads $(\mathscr{D}^{r,s} = \mathscr{D}^{r,s}_{\mathrm{I}} - \mathscr{D}^{r,s}_{\mathrm{II}} - \mathscr{D}^{r,s}_{\mathrm{III}} + \mathscr{D}^{r,s}_{\mathrm{IV}})$. We can develop $\mathscr{D}^{r,s}_{\mathrm{I}}$:
\begin{equation*}
\mathscr{D}^{r,s}_{\mathrm{I}} = 
\braket{\psi_\star | \hat{s}^\dag\hat{r} \hatcal{T}_y^\wdag \hatcal{T}^\dag_y |\psi_\star} - 
\braket{\psi_\star | \hat{s}^\dag\hat{r} \hatcal{T}_x^\wdag \hatcal{T}^\dag_y |\psi_\star},
\end{equation*}
that is to say, $\mathscr{D}^{r,s}_{\mathrm{I}}$ is equal to
\begin{align*}
  \sum_{i=1}^N\sum_{j=1}^N \sum_{a=N+1}^{L}\sum_{b=N+1}^{L} (({\textbf{Y}})_{i,a-N}({\textbf{Y}}^\top)_{b-N,j}
    \braket{\psi_\star| \hat{s}^\dag\hat{r} \hat{i}^\dag\hat{a} \hat{b}^\dag\hat{j}|\psi_\star} - ({\textbf{X}})_{i,a-N}({\textbf{Y}}^\top)_{b-N,j}
    \braket{\psi_\star| \hat{s}^\dag\hat{r} \hat{a}^\dag\hat{i} \hat{b}^\dag\hat{j}|\psi_\star}).
\end{align*}
Rules recalled in Ref. \cite{etienne_comprehensive_2021} give $\braket{\psi_\star| \hat{s}^\dag\hat{r} \hat{i}^\dag\hat{a} \hat{b}^\dag\hat{j}|\psi_\star} = 
\delta_{a,b} \delta_{i,j} \delta_{r,s} n_sn_r$ and $\braket{\psi_\star| \hat{s}^\dag\hat{r} \hat{a}^\dag\hat{i} \hat{b}^\dag\hat{j}|\psi_\star} = 0$, so that
\begin{align*}
\mathscr{D}^{r,s}_{\mathrm{I}} &= \sum_{i=1}^N\sum_{a=N+1}^{L} ({\textbf{Y}})_{i,a-N}({\textbf{Y}}^\top)_{a-N,i} \delta_{r,s}n_rn_s,\\
&= \vartheta_{y} \delta_{r,s}n_rn_s,
\end{align*}
where $\vartheta_y \coloneqq \mathrm{tr}(\textbf{YY}^\top)$. We can proceed in a similar fashion with $\mathscr{D}^{r,s}_{\mathrm{II}}$, $\mathscr{D}^{r,s}_{\mathrm{III}}$, and $\mathscr{D}^{r,s}_{\mathrm{IV}}$ and obtain
\begin{align*}
\mathscr{D}^{r,s}_{\mathrm{II}}  &= \vartheta_{x} \delta_{r,s}n_rn_s, \\
\mathscr{D}^{r,s}_{\mathrm{III}} &= \vartheta_{y} \delta_{r,s}n_rn_s, \\
\mathscr{D}^{r,s}_{\mathrm{IV}}  &= \vartheta_{x} \delta_{r,s}n_rn_s,
\end{align*}
where $\vartheta_x \coloneqq \mathrm{tr}(\textbf{XX}^\top)$. Introducing these results in $\mathscr{D}^{r,s}$ gives the desired result.
\end{proof}
\begin{proposition}\label{prop:FO0FSS0}
Let $(\hatcal{O}_\ell^\wdag)^\wdag_{\ell\in\mathbb{N}}$ be the sequence of operators defined in \eqref{eq:relation_recu_RPA}, and $(\hatcal{S}_\ell^\wdag)^\wdag_{\ell\in\mathbb{N}}$ be the sequence of operators defined in \eqref{eq:relation_recu_RPA_sym}. Let $\mathcal{F}$ (respectively, $\mathcal{F}^\mathcal{S}$) be the functional defined in {\normalfont Definition \ref{def:FA}} (respectively, {\normalfont Definition \ref{def:FAsym}}). Then,
$\mathcal{F}[\hatcal{O}_0] = \mathcal{F}^\mathcal{S}[\hatcal{S}_0]$.
\end{proposition}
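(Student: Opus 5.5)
We need $\mathcal{F}[\hatcal{O}_0]=\mathcal{F}^\mathcal{S}[\hatcal{S}_0]$, where $\hatcal{O}_0=\hatcal{S}_0=\hat{s}^\dag\hat{r}$.

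The plan is to unfold the definition of the symmetrized double commutator and then apply Lemma \ref{lemma:jacobi}.

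By Definition \ref{def:FAsym} and the definition of $[\cdot\,,\cdot\,,\cdot]$, we have
\begin{equation*}
\mathcal{F}^\mathcal{S}[\hat{s}^\dag\hat{r}] = \dfrac{1}{2}\left(\braket{\psi_\star|[\hatcal{T}^\dag,[\hat{s}^\dag\hat{r},\hatcal{T}]]|\psi_\star} + \braket{\psi_\star|[[\hatcal{T}^\dag,\hat{s}^\dag\hat{r}],\hatcal{T}]|\psi_\star}\right).
\end{equation*}
This step uses only the linearity of the expectation value with respect to $\psi_\star$.

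Lemma \ref{lemma:jacobi} states that the two expectation values on the right-hand side coincide for every $(r,s)\in\llbracket 1,L\rrbracket^2$. The half-sum therefore collapses to $\braket{\psi_\star|[\hatcal{T}^\dag,[\hat{s}^\dag\hat{r},\hatcal{T}]]|\psi_\star}$, which is $\mathcal{F}[\hat{s}^\dag\hat{r}]=\mathcal{F}[\hatcal{O}_0]$ by Definition \ref{def:FA}.

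There is no genuine obstacle: all the technical work, namely showing that $\braket{\psi_\star|[\hat{s}^\dag\hat{r},[\hatcal{T},\hatcal{T}^\dag]]|\psi_\star}$ vanishes through the Jacobi identity, is already contained in Lemma \ref{lemma:jacobi}. The only point to check is that the operators involved ($\hatcal{T}$, $\hatcal{T}^\dag$, $\hat{s}^\dag\hat{r}$) are all endomorphisms of $\mathcal{C}_\mathbb{R}$, so that Lemma \ref{lemma:jacobi} applies directly. As a corollary, combining this result with \eqref{eq:EOMisFA} gives ${}^\mathcal{S}\tilde{\boldsymbol{\gamma}}^\Delta_0=\tilde{\boldsymbol{\gamma}}^\Delta$.
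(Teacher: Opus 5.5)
Your proof is correct and is essentially the paper's argument. The paper's one-line proof likewise unfolds the symmetrized double commutator in $\mathcal{F}^\mathcal{S}[\hat{s}^\dag\hat{r}]$ into the half-sum of the two nested-commutator expectation values, then uses Lemma \ref{lemma:jacobi} to collapse that half-sum to $\mathcal{F}[\hat{s}^\dag\hat{r}]=\mathcal{F}[\hatcal{O}_0]$.
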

\begin{proof}
Proof of Proposition \ref{prop:FO0FSS0} directly follows from Lemma \ref{lemma:jacobi} and the definition of the symmetrized double commutator.
\end{proof}
\noindent When dealing with the exact case, we saw with Proposition \ref{prop:recu_EOM_sym} that the choice for the argument of $F_n^S$ — as long as we use an element of the infinite sequence built in the proposition — is arbitrary. For building the \textit{reference} one-body reduced difference density matrix for the TDHF and TDDFRT methods we have shown with Proposition \ref{prop:FO0FSS0} that we can take a symmetrized version of \eqref{eq:EOM_1DDM} and replace $\Psi_0$ by $\psi_\star$, and $\hat{T}_{0\rightarrow n}$ by $\hatcal{T}$. That is to say, according to Proposition \ref{prop:FO0FSS0}, we actually see that each \textit{reference} one-body reduced difference density matrix element value for the TDHF and TDDFRT methods is the $\mathcal{F}^\mathcal{S}$ functional value with the first element of each $\hatcal{S}$ operator sequence built in \eqref{eq:relation_recu_RPA_sym} as argument for $\mathcal{F}^\mathcal{S}$. We wonder whether this choice is arbitrary, i.e., if replacing $\Psi_0$ by $\psi_\star$, and $\hat{T}_{0\rightarrow n}$ by $\hatcal{T}$ in other functionals than $F_n^S[\hat{O}_0]$ — which is equivalent to selecting other possible choices for the argument of $\mathcal{F}^\mathcal{S}$ — would lead to matrices that one cannot reject based on physical criteria. This would question — again — the uniqueness of the representation of TDHF and/or TDDFRT molecular electronic transitions.
\begin{proposition}
All the elements of the $({}^\mathcal{S}\tilde{\boldsymbol{\gamma}}^\Delta_{k})_{k\in\mathbb{N}}^\wdag$ sequence are $L\times L$ symmetric matrices.
\end{proposition}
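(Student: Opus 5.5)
The plan is to show that, for every $k$, the operator $\hatcal{S}_k$ built from $\hat{s}^\dag\hat{r}$ has as adjoint the operator built in the same way from $\hat{r}^\dag\hat{s}$. Proposition \ref{prop:FSAdagFSA} then gives $({}^\mathcal{S}\tilde{\boldsymbol{\gamma}}^\Delta_{k})_{r,s} = ({}^\mathcal{S}\tilde{\boldsymbol{\gamma}}^\Delta_{k})_{s,r}$. To make the dependence on the seed explicit, write $\hatcal{S}_k[\hatcal{A}]$ for the $k$-th element of the sequence \eqref{eq:relation_recu_RPA_sym} started at $\hatcal{S}_0 = \hatcal{A}$. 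With this notation, $({}^\mathcal{S}\tilde{\boldsymbol{\gamma}}^\Delta_{k})_{r,s} = \mathcal{F}^\mathcal{S}[\hatcal{S}_k[\hat{s}^\dag\hat{r}]]$.

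The key step is the identity
\begin{equation*}
\forall k\in\mathbb{N},\quad \hatcal{S}_k[\hatcal{A}]^\dag = \hatcal{S}_k[\hatcal{A}^\dag],
\end{equation*}
which I prove by induction on $k$. The base case $k=0$ is trivial. For the step, assume the identity at rank $k$. Then $\hatcal{S}_{k+1}[\hatcal{A}]^\dag = -\tfrac{1}{2}[\hatcal{T}^\dag,\hatcal{S}_k[\hatcal{A}],\hatcal{T}]^\dag$. Lemma \ref{lemma:A1dagA2A1}, applied with $\hatcal{A}_1=\hatcal{T}$ and $\hatcal{A}_2=\hatcal{S}_k[\hatcal{A}]$, turns this into $-\tfrac{1}{2}[\hatcal{T}^\dag,\hatcal{S}_k[\hatcal{A}]^\dag,\hatcal{T}]$. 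The scalar $-\tfrac12$ is real, so it passes through the adjoint unchanged. By the induction hypothesis this equals $-\tfrac{1}{2}[\hatcal{T}^\dag,\hatcal{S}_k[\hatcal{A}^\dag],\hatcal{T}] = \hatcal{S}_{k+1}[\hatcal{A}^\dag]$. The only hypotheses to check are that every $\hatcal{S}_k[\hatcal{A}]$ and its adjoint are endomorphisms of $\mathcal{C}_\mathbb{R}$. This holds because $\hat{s}^\dag\hat{r}$, $\hatcal{T}$ and $\hatcal{T}^\dag$ are (real-coefficient, particle-number-conserving) endomorphisms of $\mathcal{C}_\mathbb{R}$, and the set of such endomorphisms is closed under products, real linear combinations and adjoints.

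To conclude, take $\hatcal{A}=\hat{s}^\dag\hat{r}$, so that $\hatcal{A}^\dag=\hat{r}^\dag\hat{s}$. Then
\begin{equation*}
({}^\mathcal{S}\tilde{\boldsymbol{\gamma}}^\Delta_{k})_{r,s} = \mathcal{F}^\mathcal{S}[\hatcal{S}_k[\hat{s}^\dag\hat{r}]] = \mathcal{F}^\mathcal{S}[\hatcal{S}_k[\hat{s}^\dag\hat{r}]^\dag] = \mathcal{F}^\mathcal{S}[\hatcal{S}_k[\hat{r}^\dag\hat{s}]] = ({}^\mathcal{S}\tilde{\boldsymbol{\gamma}}^\Delta_{k})_{s,r}.
\end{equation*}
The second equality is Proposition \ref{prop:FSAdagFSA} and the third is the induction identity. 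Hence each ${}^\mathcal{S}\tilde{\boldsymbol{\gamma}}^\Delta_{k}$ is symmetric. Its size is $L\times L$ because $(r,s)$ ranges over $\llbracket 1,L\rrbracket^2$.

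The main obstacle is the bookkeeping in Lemma \ref{lemma:A1dagA2A1}, namely confirming that taking the adjoint of the symmetrized double commutator really sends the middle argument to its adjoint while keeping the outer arguments in place. This works only because the symmetrized bracket averages $[\hatcal{T}^\dag,[\cdot,\hatcal{T}]]$ with $[[\hatcal{T}^\dag,\cdot],\hatcal{T}]$, and the adjoint exchanges these two forms. This exchange is exactly what fails for the unsymmetrized $\mathcal{F}$ and explains why Proposition \ref{prop:gamma1nonsymmetric} holds there. Since the lemma is already established, I only need to check the reality of $\mathcal{F}^\mathcal{S}$, which Proposition \ref{prop:FSAdagFSA} already used, so I expect no further difficulty.
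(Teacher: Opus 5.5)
Your proposal is correct and matches the paper's first proof: the paper also applies Lemma \ref{lemma:A1dagA2A1} recursively to obtain ${}^{rs}\hatcal{S}_k^\dag = {}^{sr}\hatcal{S}_k$, then concludes with Proposition \ref{prop:FSAdagFSA}. The paper additionally gives a second, diagrammatic proof that pairs each sub-diagram with its left--right mirror image, but your algebraic argument is the one it presents first.
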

\begin{proof}
For this proof we need to make explicit the $(r,s)$--dependence of the $\hatcal{S}$ operators in \eqref{eq:relation_recu_RPA_sym}:
$${}^{rs}\hatcal{S}_0^\wdag\coloneqq \hat{s}^\dag\hat{r}\;\mathrm{and}\; \forall \ell\in\mathbb{N},\,{}^{rs}\hatcal{S}_{\ell+1}^\wdag\coloneqq -\dfrac{1}{2}\left[\hatcal{T}^\dag,{}^{rs}\hatcal{S}_\ell^\wdag,\hatcal{T}\right].$$
A recursive application of Lemma \ref{lemma:A1dagA2A1} allows to write
\begin{equation}\label{eq:SkrsdagSksr}
\forall k\in\mathbb{N},\, \forall (r,s)\in\llbracket 1,L\rrbracket^2, {}^{rs}\hatcal{S}_k^\dag = {}^{sr}\hatcal{S}_k^\wdag.
\end{equation}
Then, knowing that
$$\forall k \in \mathbb{N},\,\forall (r,s)\in\llbracket 1,L\rrbracket^2,\, ({}^\mathcal{S}\tilde{\bm{\gamma}}^\Delta_k)_{r,s}\coloneqq\mathcal{F}^\mathcal{S}[{}^{rs}\hatcal{S}_k^\wdag],$$
and applying Proposition \ref{prop:FSAdagFSA} together with \eqref{eq:SkrsdagSksr} allows to write
$$\forall k \in \mathbb{N},\,\forall (r,s)\in\llbracket 1,L\rrbracket^2,\, ({}^\mathcal{S}\tilde{\bm{\gamma}}^\Delta_k)_{r,s}=\mathcal{F}^\mathcal{S}[{}^{sr}\hatcal{S}_k^\wdag],$$
i.e.,
$$\forall k \in \mathbb{N},\,\forall (r,s)\in\llbracket 1,L\rrbracket^2,\, ({}^\mathcal{S}\tilde{\bm{\gamma}}^\Delta_k)_{r,s}= ({}^\mathcal{S}\tilde{\bm{\gamma}}^\Delta_k)_{s,r},$$
that is to say 
$$\forall k \in \mathbb{N},\,({}^\mathcal{S}\tilde{\bm{\gamma}}^\Delta_k)^\top = {}^\mathcal{S}\tilde{\bm{\gamma}}^\Delta_k,$$
which is the desired result.
\end{proof}
\noindent We also provide an alternative proof that is in line with the previous derivations for the non-symmetrized double commutators.
\begin{proof}
We have seen with Proposition \ref{prop:FO0FSS0} that ${}^\mathcal{S}\tilde{\boldsymbol{\gamma}}^\Delta_{0}$ and $\tilde{\boldsymbol{\gamma}}^\Delta_{0}$ are identical — and symmetric. Let $k$ be a natural integer superior or equal to one. Consider a $(2k+2)$-list of matrices, $(\textbf{Z}_1, \ldots, \textbf{Z}_{2k+2})$, each of them being an $\textbf{X}$ or a $\textbf{Y}$ matrix.  We show below that if, in the North-West block of ${}^\mathcal{S}\tilde{\bm{\gamma}}_k^\Delta$, there is a term that is a real multiple of the
\begin{equation}\label{eq:occoccmat}
\textbf{Z}_1^\wdag \textbf{Z}_2^\top\cdots \textbf{Z}_{2k+1}^\wdag \textbf{Z}_{2k+2}^\top
\end{equation}
matrix, then there is also another term that is its symmetric: Let $(r,s)$ be a couple of natural integers, both belonging to $\llbracket 1, N \rrbracket$. Consider the $(\textbf{Z}_1^\wdag \textbf{Z}_2^\top\cdots \textbf{Z}_{2k+1}^\wdag \textbf{Z}_{2k+2}^\top)_{r,s}$ matrix element, i.e.,
\begin{equation*}
 \sum_{l_1=1}^N \cdots \sum_{l_{k-1}=1}^{N}\sum_{c_1=N+1}^L \cdots \sum_{c_{k} = N+1}^{L}
(\textbf{Z}_1)_{r,c_1-N}^\wdag
(\textbf{Z}_2^\top)_{c_1-N,l_1}
\cdots
(\textbf{Z}_{2k-1})_{l_{k-1},c_k-N}^\wdag
(\textbf{Z}_{2k}^\top)_{c_k-N,s}.
\end{equation*}
Its evaluation actually involves an $S$ sub-diagram described by a connection sequence $C$ and an operator sequence $O$. 
Consider the $C'$ connection sequence obtained by transforming each ``$L$'' (respectively, ``$R$'') in $C$ by an ``$R$'' (respectively, ``$L$'').
Consider also the $O'$ operator sequence obtained from the transformation described in Table \ref{tab:OOprime}.

\begin{table}[h!]
\begin{center}
\begin{tabular}{l|cccc}
$m^{\text{th}}$ operator in $O$ & $\hat{E}_{i_m}^{a_m}$ & $\hat{D}_{a_m}^{i_m}$ & $\hat{E}_{j_m}^{b_m}$ & $\hat{D}_{b_m}^{j_m}$ \\ 
\hline
$m^{\text{th}}$ operator in $O'$& $\hat{D}_{b_m}^{j_m}$ & $\hat{E}_{j_m}^{b_m}$ & $\hat{D}_{a_m}^{i_m}$ & $\hat{E}_{i_m}^{a_m}$    \\ 
\end{tabular} 
\caption{Operator transformation from the $O$ operator sequence to the $O'$ operator sequence. If an operator from the top right of the table is met at the $m^\text{th}$ place in $O$, it is substituted by the corresponding operator in the bottom right of the table in $O'$. The $m$ index value ranges from $1$ to $2k+2$.}
\label{tab:OOprime}
\end{center}
\end{table}

\noindent In this way the $m^\text{th}$ operator in $O$ and $O'$ are both associated with the same matrix contribution, i.e., an $\textbf{X}$ matrix contribution or a $\textbf{Y}$ matrix contribution — see Figure \ref{fig:nomenclature_bracket}.

We now consider the $S'$ sub-diagram generated by the $C'$ connection sequence and the $O'$ connection sequence. If the diagrammatic representation of the $m^\text{th}$ element of $O$ is the $l^\text{th}$ bracket on the right (respectively, left) of the dash in $S$, then the diagrammatic representation of the $m^\text{th}$ element of $O'$ in $S'$ is the $l^\text{th}$ bracket that is at the left (respectively, right) of the dash. A direct consequence of this is that the operator corresponding to the $S'$ sub-diagram is obtained by exchanging all the $r$ and $s$ symbols and, for every value of $m$, all the $i_m$ and $j_m$ symbols, and all the $a_m$ and $b_m$ symbols -- for operators and matrix indices in the Hermitian conjugate of the operator corresponding to the $S$ sub-diagram, leading to the $(\textbf{Z}_{2k+2}^\wdag \textbf{Z}_{2k+1}^\top \cdots \textbf{Z}_2^\wdag\textbf{Z}_1^\top)_{r,s}$ matrix element contribution, i.e.,
\begin{equation*}
\sum_{l_1=1}^N \cdots \sum_{l_{k-1}=1}^{N}\sum_{c_1=N+1}^L\cdots \sum_{c_{k} = N+1}^{L}
(\textbf{Z}_{2k+2})_{r,c_1-N}^\wdag
(\textbf{Z}_{2k+1})_{c_1-N,l_1}^\top
\ldots
(\textbf{Z}_2)_{l_{k-1},c_k-N}^\wdag
(\textbf{Z}_1)_{c_k-N,s}^\top.
\end{equation*}
This contribution is the symmetric of the contribution in \eqref{eq:occoccmat}. Hence,
\begin{equation*}
\left[\textbf{Z}_1^\wdag \textbf{Z}_2^\top\cdots \textbf{Z}_{2k+1}^\wdag \textbf{Z}_{2k+2}^\top\right]^\top = \textbf{Z}_{2k+2}^\wdag \textbf{Z}_{2k+1}^\top \cdots \textbf{Z}_2^\wdag\textbf{Z}_1^\top.
\end{equation*}
Since there will always be an $S'$ that can be built corresponding to a given $S$, the two resulting contributions will always have the same coefficient in ${}^\mathcal{S}\tilde{\bm{\gamma}}_k^\Delta$.

The same reasoning holds for the South-East block, which is then also symmetric. Finally, due to Corollary \ref{remark:occupancy_rs}, the anti-diagonal blocks are zero matrices. This concludes the proof.
\end{proof}
\noindent We provide here the expression of ${}^\mathcal{S}\tilde{\boldsymbol{\gamma}}^\Delta_{1}$, that can be compared with $\tilde{\bm{\gamma}}^\Delta_1$ given in \eqref{eq:gamma_1}:
\begin{align*}
{}^\mathcal{S}\tilde{\boldsymbol{\gamma}}^\Delta_{1} =
&\dfrac{1}{8}\left(
-8 \textbf{X}\textbf{X}^\top\textbf{X}\textbf{X}^\top
-4 \textbf{X}\textbf{X}^\top\textbf{Y}\textbf{Y}^\top
-4 \textbf{X}\textbf{Y}^\top\textbf{Y}\textbf{X}^\top
-4 \textbf{Y}\textbf{X}^\top\textbf{X}\textbf{Y}^\top
-4 \textbf{Y}\textbf{Y}^\top\textbf{X}\textbf{X}^\top
-8 \textbf{Y}\textbf{Y}^\top\textbf{Y}\textbf{Y}^\top
\right)\\
&\oplus \left(
 8 \textbf{X}^\top\textbf{X}\textbf{X}^\top\textbf{X}
+4 \textbf{X}^\top\textbf{X}\textbf{Y}^\top\textbf{Y}
+4 \textbf{X}^\top\textbf{Y}\textbf{Y}^\top\textbf{X}
+4 \textbf{Y}^\top\textbf{X}\textbf{X}^\top\textbf{Y}
+4 \textbf{Y}^\top\textbf{Y}\textbf{X}^\top\textbf{X}
+8 \textbf{Y}^\top\textbf{Y}\textbf{Y}^\top\textbf{Y}
\right).
\end{align*}

\subsection{Special case — The Tamm-Dancoff limit}

\begin{proposition}\label{prop:gammaDeltaTDA0}
The ${}^\mathcal{S}\tilde{\bm{\gamma}}^\Delta_0$ matrix in the ${\normalfont\textbf{Y}}\rightarrow {\normalfont\textbf{0}}_{N\times(L-N)}$ limit is the $\tilde{\bm{\gamma}}^\Delta$ matrix from \eqref{eq:RPA_1DDM} in the ${\normalfont\textbf{Y}}\rightarrow {\normalfont\textbf{0}}_{N\times(L-N)}$ limit, i.e.,
\begin{equation*}
{\normalfont\left(- 
\textbf{X}\textbf{X}^\top
\right) 
\oplus  
\textbf{X}^\top\textbf{X}
 }.
\end{equation*}
\end{proposition}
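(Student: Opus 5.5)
The plan is to reduce the statement to results already established and then take the limit. By definition, $({}^\mathcal{S}\tilde{\bm{\gamma}}^\Delta_0)_{r,s} = \mathcal{F}^\mathcal{S}[\hatcal{S}_0]$ with $\hatcal{S}_0 = \hat{s}^\dag\hat{r}$. Proposition \ref{prop:FO0FSS0} gives $\mathcal{F}^\mathcal{S}[\hatcal{S}_0]=\mathcal{F}[\hatcal{O}_0]$. By \eqref{eq:EOMisFA}, this equals $(\tilde{\bm{\gamma}}^\Delta)_{r,s}$. Hence ${}^\mathcal{S}\tilde{\bm{\gamma}}^\Delta_0 = \tilde{\bm{\gamma}}^\Delta$ identically, for every $(\textbf{X},\textbf{Y})$, and not merely in the limit.

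Next I would take the limit directly in the closed form \eqref{eq:RPA_1DDM}. Each block is a polynomial, hence continuous, in the entries of $\textbf{Y}$. Setting $\textbf{Y}=\textbf{0}_{N\times(L-N)}$ kills $\textbf{Y}\textbf{Y}^\top$ and $\textbf{Y}^\top\textbf{Y}$. This leaves $(-\textbf{X}\textbf{X}^\top)\oplus(\textbf{X}^\top\textbf{X})$, which is the claimed expression. One should also note that the normalization $\textbf{x}^\top\textbf{x}-\textbf{y}^\top\textbf{y}=1$ becomes $\textbf{x}^\top\textbf{x}=1$ in this limit. So $\textbf{X}$ remains a legitimate (Tamm-Dancoff) amplitude matrix and the limit is well posed.

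As an optional cross-check, I would set $\hatcal{T}=\hatcal{T}_x$ from the start. Lemma \ref{lemma:jacobi} still holds there, since its proof only involves $\vartheta_x,\vartheta_y$ terms that cancel pairwise. The diagrammatic evaluation then retains only the sub-diagrams built from $\textbf{x}$-coefficients, reproducing $-\textbf{X}\textbf{X}^\top$ and $\textbf{X}^\top\textbf{X}$.

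The only point needing care is whether one should worry that taking $\textbf{Y}\to\textbf{0}$ commutes with the substitution defining the functional. Because everything is a finite polynomial identity valid for all $\textbf{Y}$, this is immediate, and I expect no real obstacle.
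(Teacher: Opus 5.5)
Your proposal is correct and follows essentially the same route as the paper. You identify ${}^\mathcal{S}\tilde{\bm{\gamma}}^\Delta_0$ with $\tilde{\bm{\gamma}}^\Delta$ for all $\textbf{Y}$ via Proposition~\ref{prop:FO0FSS0} and \eqref{eq:EOMisFA} (using $\hatcal{O}_0=\hatcal{S}_0=\hat{s}^\dag\hat{r}$), and then set $\textbf{Y}=\textbf{0}_{N\times(L-N)}$ in the closed form \eqref{eq:RPA_1DDM}; your remark that this limit is just evaluating a polynomial makes explicit a step the paper leaves implicit.
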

\begin{proof}
Considering \eqref{eq:EOMisFA}, together with \eqref{eq:RPA_1DDM} and \eqref{eq:relation_recu_RPA}, and Proposition \ref{prop:FO0FSS0} gives the desired result.
\end{proof}
\begin{proposition}\label{prop:gammaDeltaTDA}
Let $k$ be a natural integer different from zero. The ${}^\mathcal{S}\tilde{\bm{\gamma}}^\Delta_k$ matrix in the ${\normalfont\textbf{Y}}\rightarrow {\normalfont\textbf{0}}_{N\times(L-N)}$ limit is
\begin{equation*}
{\normalfont\left(-\left(
\textbf{X}\textbf{X}^\top
\right)^{k+1}\right)
\oplus \left(
\textbf{X}^\top\textbf{X}
\right)^{k+1}}.
\end{equation*}
\end{proposition}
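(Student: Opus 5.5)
The plan is to exploit the Tamm--Dancoff simplification. In the $\textbf{Y}\rightarrow\textbf{0}_{N\times(L-N)}$ limit, $\hatcal{T}$ reduces to $\hatcal{T}_x$, a pure excitation one-body operator, and $\hatcal{T}^\dag$ to $\hatcal{T}_x^\dag$. I would then leave the diagrammatic machinery aside and work at the level of one-body matrices. For any real $L\times L$ matrix $\textbf{G}$, write $\hat{G}\coloneqq\sum_{p,q}(\textbf{G})_{p,q}\,\hat{p}^\dag\hat{q}$.

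The first ingredient is the standard identity $[\hat{G}_1,\hat{G}_2]=\widehat{[\textbf{G}_1,\textbf{G}_2]}$, so that nested (symmetrized) commutators of one-body operators are again one-body operators. Write $\hatcal{T}_x=\hat{\textbf{t}}$ with
$$\textbf{t}\coloneqq\left(\begin{array}{cc}\textbf{0}_N&\textbf{0}\\ \textbf{X}^\top&\textbf{0}_{L-N}\end{array}\right),$$
whose only nonzero block is the virtual--occupied one, up to the transpose convention, which I will fix at the start. Then every ${}^{rs}\hatcal{S}_k$ is a one-body operator $\hat{\textbf{G}}_k^{rs}$. Its matrix obeys the linear recursion $\textbf{G}^{rs}_{k+1}=\mathcal{L}(\textbf{G}^{rs}_k)$, with
$$\mathcal{L}(\textbf{G})=-\tfrac14\left([\textbf{t}^\top,[\textbf{G},\textbf{t}]]+[[\textbf{t}^\top,\textbf{G}],\textbf{t}]\right),$$
starting from $\textbf{G}_0^{rs}=\textbf{e}_s\textbf{e}_r^\top$ (the transposition being dictated by the convention for $\hat{s}^\dag\hat{r}$).

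The second ingredient is to make $\mathcal{F}^\mathcal{S}$ explicit on one-body operators. By linearity, and since ${}^\mathcal{S}\tilde{\bm{\gamma}}^\Delta_0$ is known, Proposition \ref{prop:gammaDeltaTDA0} gives $\mathcal{F}^\mathcal{S}[\hat{\textbf{G}}]=\mathrm{tr}(\textbf{G}^\top\,\bm{\Gamma}_0)$ with $\bm{\Gamma}_0=(-\textbf{X}\textbf{X}^\top)\oplus\textbf{X}^\top\textbf{X}$. Combining the two ingredients yields $({}^\mathcal{S}\tilde{\bm{\gamma}}^\Delta_k)_{r,s}=\mathrm{tr}\big((\mathcal{L}^k(\textbf{e}_s\textbf{e}_r^\top))^\top\bm{\Gamma}_0\big)$. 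Transferring $\mathcal{L}^k$ onto $\bm{\Gamma}_0$ through the trace pairing then shows that ${}^\mathcal{S}\tilde{\bm{\gamma}}^\Delta_k=(\mathcal{L}^\ast)^k(\bm{\Gamma}_0)$. Here $\mathcal{L}^\ast$ is the adjoint of $\mathcal{L}$ for $\langle\textbf{A},\textbf{B}\rangle=\mathrm{tr}(\textbf{A}^\top\textbf{B})$. It is computed with the cyclic identity $\mathrm{tr}(\textbf{A}^\top[\textbf{B},\textbf{C}])=\mathrm{tr}([\textbf{B}^\top,\textbf{A}]^\top\textbf{C})$, and is again a symmetrized double commutator with $\textbf{t}$ and $\textbf{t}^\top$. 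This reduces the proposition to a matrix induction: assuming ${}^\mathcal{S}\tilde{\bm{\gamma}}^\Delta_k=(-(\textbf{X}\textbf{X}^\top)^{k+1})\oplus(\textbf{X}^\top\textbf{X})^{k+1}$, I will show that $\mathcal{L}^\ast$ maps it to the same form with exponent $k+2$.

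For the induction step, take a block-diagonal $\textbf{A}\oplus\textbf{B}$. Products with the single off-diagonal block of $\textbf{t}$ or $\textbf{t}^\top$ move between blocks, and a double commutator involving one $\textbf{t}$ and one $\textbf{t}^\top$ returns to block-diagonal form. The four products coming from each double commutator give, in the North-West block, terms $\textbf{X}\textbf{X}^\top\textbf{A}$, $\textbf{A}\textbf{X}\textbf{X}^\top$ and $\textbf{X}\textbf{B}\textbf{X}^\top$ with definite coefficients, and symmetrically in the South-East block. Off-diagonal blocks would require two factors of $\textbf{t}$ or two of $\textbf{t}^\top$, so they cannot appear; this is consistent with Corollary \ref{remark:occupancy_rs}.

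Substituting $\textbf{A}=-(\textbf{X}\textbf{X}^\top)^{k+1}$ and $\textbf{B}=(\textbf{X}^\top\textbf{X})^{k+1}$, and using $\textbf{X}(\textbf{X}^\top\textbf{X})^{k+1}\textbf{X}^\top=(\textbf{X}\textbf{X}^\top)^{k+2}$, every North-West term becomes a multiple of $(\textbf{X}\textbf{X}^\top)^{k+2}$, and likewise every South-East term a multiple of $(\textbf{X}^\top\textbf{X})^{k+2}$. What remains is to check that the coefficients sum to $-1$ and $+1$ respectively.

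I expect the main obstacle to be exactly this coefficient bookkeeping: the $-\tfrac12$ from the recursion, the $\tfrac12$ from the symmetrization, the signs of the inner commutators, and the transpose conventions hidden in $\hat{s}^\dag\hat{r}\leftrightarrow\textbf{e}_s\textbf{e}_r^\top$ and in the definition of $\textbf{t}$. To pin these down, I would first check the base case $k=1$. There the result must agree with the $\textbf{Y}\rightarrow\textbf{0}$ limit of the explicit ${}^\mathcal{S}\tilde{\bm{\gamma}}^\Delta_1$ given above, namely $\tfrac18(-8\textbf{X}\textbf{X}^\top\textbf{X}\textbf{X}^\top)\oplus\tfrac18(8\textbf{X}^\top\textbf{X}\textbf{X}^\top\textbf{X})$. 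Matching this fixes the sign conventions once and for all, after which the general step is identical in form.
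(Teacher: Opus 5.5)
Your argument is correct, and it takes a genuinely different route from the paper's. The paper stays inside its diagrammatic language. It expands the symmetrized nested commutator into nested commutators carrying a prefactor $2^{-2k-1}$, and invokes Proposition~\ref{prop:non_nullity} to show that each of the $2^{2k+2}$ connection sequences yields exactly one surviving sub-diagram. It then splits these by the parity of the number of ``R'' entries: $2^{2k+1}$ identical contributions $(\mathbf{X}\mathbf{X}^\top)^{k+1}$ go to the North-West block and $2^{2k+1}$ contributions $(\mathbf{X}^\top\mathbf{X})^{k+1}$ to the South-East block, and this multiplicity cancels the prefactor. The signs come from the crossing count of Corollary~\ref{remark:number_crossing} together with the $(-1)^k$ of the recursion. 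You instead use that $\mathbf{G}\mapsto\hat{\mathbf{G}}$ is a Lie-algebra homomorphism, so the recursion is carried by $L\times L$ matrices, and you transfer it onto the seed $\boldsymbol{\Gamma}_0$ by duality; multiplicities and signs then take care of themselves. Your route also gives more than this proposition. Nothing in it uses $\mathbf{Y}=\mathbf{0}$ until the final substitution, because $\hatcal{T}=\hatcal{T}_x-\hatcal{T}_y$ is itself one-body. Its matrix $\boldsymbol{\tau}$ has only two nonzero blocks, $-\mathbf{Y}$ (occupied--virtual) and $\mathbf{X}^\top$ (virtual--occupied). The same argument then gives ${}^\mathcal{S}\tilde{\boldsymbol{\gamma}}^\Delta_{k+1}=\tfrac14(\mathrm{ad}_{\boldsymbol{\tau}^\top}\mathrm{ad}_{\boldsymbol{\tau}}+\mathrm{ad}_{\boldsymbol{\tau}}\mathrm{ad}_{\boldsymbol{\tau}^\top})\,{}^\mathcal{S}\tilde{\boldsymbol{\gamma}}^\Delta_k$. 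For the non-symmetrized sequence it gives $\tilde{\boldsymbol{\gamma}}^\Delta_{k+1}=\tfrac12[\boldsymbol{\tau},[\boldsymbol{\tau}^\top,\tilde{\boldsymbol{\gamma}}^\Delta_k]]$. These reproduce the displayed $\tilde{\boldsymbol{\gamma}}^\Delta_1$ and ${}^\mathcal{S}\tilde{\boldsymbol{\gamma}}^\Delta_1$, and they are exactly the paper's two conjectures, which the paper only checks symbolically up to $k=3$ (the sign of $\mathbf{Y}$ in $\mathbf{T}$ is immaterial on block-diagonal inputs). What the paper's route buys is continuity with the diagrammatic machinery it develops and reuses elsewhere, e.g.\ for Propositions~\ref{prop:existYYYYXX} and \ref{prop:absenceXXYY}.

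Two small points. First, the duality actually gives $({}^\mathcal{S}\tilde{\boldsymbol{\gamma}}^\Delta_k)_{r,s}=((\mathcal{L}^\ast)^k\boldsymbol{\Gamma}_0)_{s,r}$, i.e.\ a transpose. This is harmless: $\mathrm{ad}_{\mathbf{C}}^\ast=\mathrm{ad}_{\mathbf{C}^\top}$ for the trace pairing, so $\mathcal{L}^\ast=\mathcal{L}$, and $\mathcal{L}$ commutes with transposition, so $\mathcal{L}^k\boldsymbol{\Gamma}_0$ is symmetric. Second, there is nothing to calibrate against the displayed ${}^\mathcal{S}\tilde{\boldsymbol{\gamma}}^\Delta_1$. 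The paper states that matrix without derivation, and its $\mathbf{Y}\rightarrow\mathbf{0}$ limit is precisely the $k=1$ case you are proving. The conventions are already fixed by $\hat{\mathbf{G}}=\sum_{p,q}(\mathbf{G})_{p,q}\hat{p}^\dag\hat{q}$, and a direct computation gives the two blocks of $\mathcal{L}(\mathbf{A}\oplus\mathbf{B})$:
\[
\text{North-West: } \tfrac14(\mathbf{X}\mathbf{X}^\top\mathbf{A}+\mathbf{A}\mathbf{X}\mathbf{X}^\top-2\mathbf{X}\mathbf{B}\mathbf{X}^\top), \qquad \text{South-East: } \tfrac14(\mathbf{X}^\top\mathbf{X}\mathbf{B}+\mathbf{B}\mathbf{X}^\top\mathbf{X}-2\mathbf{X}^\top\mathbf{A}\mathbf{X}).
\]
The coefficients are therefore $\tfrac14(-1-1-2)=-1$ and $\tfrac14(1+1+2)=1$, and the induction can start at $k=0$ from Proposition~\ref{prop:gammaDeltaTDA0}.
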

\begin{proof}
According to the recurrence relation {\normalfont \eqref{def:relation_recu_sym}}, the 
\begin{equation*}
\left[ \hat{\mathcal{T}}^\dag_x, \hatcal{S}_k,  \hat{\mathcal{T}}^\wdag_x \right]
\end{equation*}
symmetrized double commutator can be decomposed into a sum of $2^{2k+2}$ nested commutators involving $(k+1)$ times the $\mathcal{\hat{T}}_x$ operator and $(k+1)$ times the $\mathcal{\hat{T}}^\dag_x$ operator. A coefficient $2^{-2k-1}$ appears in front of each nested commutator. Each nested commutator is associated with a unique $(2k+2)$--long operator sequence. Therefore, any sub-diagram obtained in the diagrammatic development of any nested commutator is described by a $(2k+2)$--long connection sequence. Each entry of a connection sequence being an ``$L$'' or an ``$R$'', we have a total of $2^{2k+2}$ possible connection sequences. 

Combining {\normalfont Proposition \ref{prop:non_nullity}} with our nomenclature choice — see Figure \ref{fig:nomenclature_bracket} —, for a given connection sequence there exists a unique operator sequence that generates, together with the connection sequence, a sub-diagram corresponding to an operator which has an expectation value respectively to the Fermi vacuum that is susceptible of being non-zero. The number of sub-diagrams fulfilling these conditions is $2^{2k+2}$.

Among the $2^{2k+2}$ possible connection sequences, $2^{2k+1}$ contain an odd number of ``R'' entries and $2^{2k+1}$ contain an even number of ``R'' entries. 

\noindent In order to respect each block dimension, we know that non-zero matrix contributions in the North-West block must have the
\begin{equation*}
\left(\textbf{X}\textbf{X}^\top\right)^{k+1}
\end{equation*}
shape, and that non-zero matrix contributions in the South-East block must have the
\begin{equation*}
\left(\textbf{X}^\top\textbf{X}\right)^{k+1}
\end{equation*}
shape. According to {\normalfont Corollary \ref{remark:occupancy_rs}}, there are $2^{2k+1}$ identical matrix contributions in the North-West block and $2^{2k+1}$ identical matrix contributions in the South-East block.

The number of identical matrix contributions in a block ($2^{2k+1}$) cancels out the coefficient in front of the matrix ($2^{-2k-1}$, \textit{vide supra}). We also know, due to Corollary \ref{remark:occupancy_rs}, that the anti-diagonal blocks are zero matrices. Finally, following the same reasoning as in the proof of Proposition \ref{prop:existYYYYXX}, we can involve Corollary \ref{remark:number_crossing} and recurrence relation {\normalfont \eqref{def:relation_recu_sym}} for providing the final expression — with a ($-$) sign in the North-West block, and a ($+$) sign in the South-East block. This concludes the proof.
\end{proof}
\begin{lemma}\label{lemma:AspectrumAk}
Let {\normalfont \textbf{A}} be an $L\times L$ Hermitian matrix, and ${\normalfont\sigma(\textbf{A}) = \{\lambda_s(\textbf{A})\,:\, s\in \llbracket 1,L\rrbracket\}}$ its spectrum. Then,
$${\normalfont\forall k \in \mathbb{N}^*, \, \forall \lambda \in \sigma(\textbf{A}),\, \lambda^k\in \sigma(\textbf{A}^k).}$$
\end{lemma}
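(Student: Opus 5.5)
Every eigenvalue of $\textbf{A}$ comes with an eigenvector, so I will take an eigenvector $\textbf{v}$ of $\lambda$ and apply $\textbf{A}$ to it $k$ times to see that it is also an eigenvector of $\textbf{A}^k$, with eigenvalue $\lambda^k$. Hermiticity is not needed for this argument; it only guarantees that the spectrum is real, which the paper will presumably use afterwards when the lemma is applied to $\textbf{X}\textbf{X}^\top$ and $\textbf{X}^\top\textbf{X}$.

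The steps are as follows. Fix $\lambda\in\sigma(\textbf{A})$. By definition of the spectrum of a finite-dimensional matrix, $\det(\textbf{A}-\lambda\textbf{I}_L)=0$, so there is a nonzero $\textbf{v}\in\mathbb{C}^{L}$ with $\textbf{A}\textbf{v}=\lambda\textbf{v}$. I then prove by induction on $k\in\mathbb{N}^*$ that $\textbf{A}^k\textbf{v}=\lambda^k\textbf{v}$.

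The base case $k=1$ is the eigenvalue equation itself. For the step case, assume $\textbf{A}^k\textbf{v}=\lambda^k\textbf{v}$. Then
\begin{equation*}
\textbf{A}^{k+1}\textbf{v}=\textbf{A}(\textbf{A}^k\textbf{v})=\lambda^k\textbf{A}\textbf{v}=\lambda^{k+1}\textbf{v},
\end{equation*}
where the middle equality uses the linearity of $\textbf{A}$. Since $\textbf{v}\neq\textbf{0}_{L\times 1}$, $\lambda^k$ is an eigenvalue of $\textbf{A}^k$, so $\lambda^k\in\sigma(\textbf{A}^k)$.

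No step is a real obstacle. The one point to state carefully is the notation: $\sigma(\textbf{A})$ is written as a list of $L$ entries, so one should note that the statement is about set membership and not about multiplicities. If the later argument needs the stronger fact that $\sigma(\textbf{A}^k)=\{\lambda_s(\textbf{A})^k\}$ counted with multiplicity, I would obtain it from the unitary diagonalization $\textbf{A}=\textbf{U}\textbf{D}\textbf{U}^\dag$. This diagonalization exists because $\textbf{A}$ is Hermitian, and it gives $\textbf{A}^k=\textbf{U}\textbf{D}^k\textbf{U}^\dag$, whose diagonal entries are exactly the $\lambda_s(\textbf{A})^k$.
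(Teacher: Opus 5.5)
Your proof is correct. From an eigenvector $\textbf{v}$ of $\lambda$, the induction $\textbf{A}^{k+1}\textbf{v}=\textbf{A}(\lambda^{k}\textbf{v})=\lambda^{k+1}\textbf{v}$ gives $\lambda^k\in\sigma(\textbf{A}^k)$. You are also right that Hermiticity plays no role in this direction.

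The paper states this lemma without proof and treats it as a standard fact, so there is no argument of the paper's to compare yours with.

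Your closing remark is the part worth keeping. The lemma is used in the proof of Lemma~\ref{lemma:detailedspectrumTDA}, where the paper needs \emph{every} eigenvalue of $(\textbf{X}\textbf{X}^\top)^{k+1}$ and $(\textbf{X}^\top\textbf{X})^{k+1}$ to lie in $[0,1]$. That requires the reverse inclusion $\sigma(\textbf{A}^k)\subseteq\{\lambda^k : \lambda\in\sigma(\textbf{A})\}$, which the lemma as stated does not give. Your diagonalization $\textbf{A}^k=\textbf{U}\textbf{D}^k\textbf{U}^\dag$ supplies exactly this, multiplicities included. Including it makes your version more useful than the statement the paper actually wrote down.
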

\begin{lemma}\label{lemma:detailedspectrumTDA}
Let $k$ be a natural integer. Let $({}^k\lambda^\uparrow_s)^\wdag_{s\in\llbracket 1,L\rrbracket}$ be the list of eigenvalues of ${}^\mathcal{S}\tilde{\bm{\gamma}}^\Delta_k$ in the ${\normalfont\textbf{Y}}\rightarrow {\normalfont\textbf{0}}_{N\times(L-N)}$ limit, sorted in the increasing order. Then,
$$(\forall s \in \llbracket 1, N \rrbracket, \, -1 \leq {}^k\lambda_{s}^\uparrow \leq 0) \, \land \, (\forall s \in \llbracket (N+1), L \rrbracket, \,  0 \leq {}^k\lambda_{s}^\uparrow \leq 1).$$
\end{lemma}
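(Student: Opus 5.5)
The plan is to compute the spectrum of ${}^\mathcal{S}\tilde{\bm{\gamma}}^\Delta_k$ explicitly in the Tamm-Dancoff limit, block by block. By Proposition \ref{prop:gammaDeltaTDA0} for $k=0$, and Proposition \ref{prop:gammaDeltaTDA} for $k\geq 1$, this matrix equals $\left(-(\textbf{X}\textbf{X}^\top)^{k+1}\right)\oplus(\textbf{X}^\top\textbf{X})^{k+1}$ in both cases. The spectrum of a direct sum is the union, with multiplicities, of the spectra of its blocks. So it suffices to locate the $N$ eigenvalues of the North-West block and the $(L-N)$ eigenvalues of the South-East block, and then check that sorting in increasing order puts the North-West ones first.

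First I would bound the spectrum of $\textbf{X}\textbf{X}^\top$. It is real symmetric and positive semidefinite, since $\textbf{v}^\top\textbf{X}\textbf{X}^\top\textbf{v}=\|\textbf{X}^\top\textbf{v}\|^2\geq 0$, so its eigenvalues are nonnegative. In the $\textbf{Y}\rightarrow\textbf{0}_{N\times(L-N)}$ limit the normalization $\textbf{x}^\top\textbf{x}-\textbf{y}^\top\textbf{y}=1$ becomes $\mathrm{tr}(\textbf{X}\textbf{X}^\top)=\textbf{x}^\top\textbf{x}=1$. Nonnegative eigenvalues summing to one each lie in $[0,1]$.

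Next, $\textbf{X}^\top\textbf{X}$ is also positive semidefinite, and its nonzero eigenvalues coincide, with multiplicities, with those of $\textbf{X}\textbf{X}^\top$ (the standard $\textbf{AB}$/$\textbf{BA}$ fact, or the singular value decomposition of $\textbf{X}$). Its remaining eigenvalues are zero, so its spectrum also lies in $[0,1]$.

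Then I pass to powers. Lemma \ref{lemma:AspectrumAk} as stated only gives an inclusion, so I would strengthen it in the symmetric case. Diagonalizing orthogonally, $\textbf{A}=\textbf{Q}\boldsymbol{\Lambda}\textbf{Q}^\top$ gives $\textbf{A}^{k+1}=\textbf{Q}\boldsymbol{\Lambda}^{k+1}\textbf{Q}^\top$, so the spectrum of $\textbf{A}^{k+1}$ is exactly $\{\lambda^{k+1}\}$ with multiplicities. Since $\mu\in[0,1]$ implies $\mu^{k+1}\in[0,1]$, the North-West block $-(\textbf{X}\textbf{X}^\top)^{k+1}$ has $N$ eigenvalues in $[-1,0]$. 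The South-East block $(\textbf{X}^\top\textbf{X})^{k+1}$ has $L-N$ eigenvalues in $[0,1]$.

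The only step needing care is the sorting. The full list consists of $N$ values that are $\leq 0$ and $L-N$ values that are $\geq 0$. Suppose some South-East eigenvalue appeared among the first $N$ sorted entries. Then some North-West eigenvalue would appear at a position $>N$, and it would be $\geq$ that South-East value, which is $\geq 0$. Both values must then equal $0$, and exchanging their labels changes nothing. Hence ${}^k\lambda^\uparrow_s\in[-1,0]$ for $s\leq N$ and ${}^k\lambda^\uparrow_s\in[0,1]$ for $s>N$, which is the statement.
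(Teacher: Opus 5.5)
Your proof is correct and follows the paper's route. Like the paper, you use positive semidefiniteness and unit trace of $\textbf{X}\textbf{X}^\top$ and $\textbf{X}^\top\textbf{X}$ in the Tamm-Dancoff limit, combined with the block form $\left(-(\textbf{X}\textbf{X}^\top)^{k+1}\right)\oplus(\textbf{X}^\top\textbf{X})^{k+1}$. Your version is in fact more complete than the paper's: it covers $k=0$ through Proposition~\ref{prop:gammaDeltaTDA0}, supplies the reverse inclusion that Lemma~\ref{lemma:AspectrumAk} alone does not give (every eigenvalue of the power is a power of an eigenvalue), and justifies that the increasing ordering places the North-West eigenvalues first.
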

\begin{proof}
We know that $\left(\textbf{X}\textbf{X}^\top\right)$ is positive semidefinite since it is the matrix product of a matrix and its transpose. Since $(\textbf{x}\tvec\textbf{x} = 1)$ in the Tamm-Dancoff limit, we also know that the sum of the eigenvalues of $\left(\textbf{X}\textbf{X}^\top\right)$ is equal to one. For the same reasons the $\left(\textbf{X}^\top\textbf{X}\right)$ matrix is also positive semidefinite, with the sum of its eigenvalues being equal to one. Combining Lemma \ref{lemma:AspectrumAk} and Proposition \ref{prop:gammaDeltaTDA} then gives the desired result.
\end{proof}
\begin{proposition}
If more than one component of ${\normalfont\textbf{x}}$ in $\hatcal{T}$ is different from zero, then all the eigenvalues of ${}^\mathcal{S}\tilde{\bm{\gamma}}^\Delta_k$ in the ${\normalfont\textbf{Y}}\rightarrow {\normalfont\textbf{0}}_{N\times(L-N)}$ limit tend to zero when $k$ tends to infinity.
\end{proposition}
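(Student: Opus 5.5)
By Proposition \ref{prop:gammaDeltaTDA}, in the $\textbf{Y}\rightarrow\textbf{0}_{N\times(L-N)}$ limit the matrix ${}^\mathcal{S}\tilde{\bm{\gamma}}^\Delta_k$ equals $\left(-(\textbf{X}\textbf{X}^\top)^{k+1}\right)\oplus(\textbf{X}^\top\textbf{X})^{k+1}$. The statement therefore reduces to the claim that every eigenvalue of $\textbf{X}\textbf{X}^\top$ and of $\textbf{X}^\top\textbf{X}$ lies in $[0,1)$. Indeed, because these matrices are real symmetric positive semidefinite, the spectrum of $(\textbf{X}\textbf{X}^\top)^{k+1}$ is exactly the set of $(k+1)$th powers of the eigenvalues of $\textbf{X}\textbf{X}^\top$. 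This follows from orthogonal diagonalization, which is a sharpened form of Lemma \ref{lemma:AspectrumAk}, and the same holds for $\textbf{X}^\top\textbf{X}$. Since $\lambda^{k+1}\to 0$ whenever $0\le\lambda<1$, and the direct sum has as its spectrum the union of the two block spectra (with the sign flip in the North-West block), all eigenvalues then tend to zero.

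The key step is to show that no eigenvalue equals $1$. As in the proof of Lemma \ref{lemma:detailedspectrumTDA}, the eigenvalues $\mu_1,\ldots,\mu_N$ of $\textbf{X}\textbf{X}^\top$ are nonnegative, and they sum to $\mathrm{tr}(\textbf{X}\textbf{X}^\top)=\textbf{x}\tvec\textbf{x}=1$. Hence each $\mu_s\le 1$, and if some $\mu_s=1$ then all the others vanish, so $\textbf{X}\textbf{X}^\top$ has rank one. The same argument applies to $\textbf{X}^\top\textbf{X}$, whose nonzero eigenvalues coincide with those of $\textbf{X}\textbf{X}^\top$ (singular values of $\textbf{X}$).

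I expect the main obstacle to be linking the hypothesis ``more than one component of $\textbf{x}$ is nonzero'' with ``$\textbf{X}$ has rank at least two''. These are not equivalent: a rank-one $\textbf{X}=\textbf{u}\textbf{v}^\top$ can have several nonzero entries, for example two excitations sharing the same occupied index $i$. In that case $\textbf{X}\textbf{X}^\top$ has the single nonzero eigenvalue $1$, and the corresponding eigenvalues of ${}^\mathcal{S}\tilde{\bm{\gamma}}^\Delta_k$ stay at $\pm1$ for all $k$. I would therefore prove the proposition under the hypothesis that $\textbf{X}$ is not of rank one, i.e. that the singular value decomposition of $\textbf{X}$ has at least two nonzero singular values. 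I would remark that this is the intended reading of the hypothesis, since it holds exactly when the transition cannot be written as a single natural-transition-orbital pair. With that hypothesis the largest singular value squared is strictly less than $1$, which gives the contraction needed above. Finally, to conclude that all $L$ eigenvalues tend to zero, I would note that the zero eigenvalues remain zero and that the nonzero ones are bounded in absolute value by $\mu_{\max}^{k+1}$ with $\mu_{\max}<1$, which converges to zero.
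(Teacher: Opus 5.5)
Your argument follows the paper's route. The paper's proof is a one-line appeal to Proposition~\ref{prop:gammaDeltaTDA}, to the positivity-plus-trace argument behind Lemma~\ref{lemma:detailedspectrumTDA}, and to Lemma~\ref{lemma:AspectrumAk}.

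Your objection to the hypothesis is correct. It pinpoints the step the paper leaves implicit: those lemmas only place the eigenvalues of $\textbf{X}\textbf{X}^\top$ and $\textbf{X}^\top\textbf{X}$ in $[0,1]$. Having more than one nonzero component of $\textbf{x}$ does not push the largest one strictly below $1$.

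The simplest counterexample is $N=1$ with $L\geq 3$. There $\textbf{X}\textbf{X}^\top=\textbf{x}^\top\textbf{x}=1$ and $\textbf{X}^\top\textbf{X}=\textbf{x}\textbf{x}^\top$ is a rank-one projector. So the eigenvalues are $-1$, $1$ and $0$ for every $k$, however many components of $\textbf{x}$ are nonzero.

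There is also a structural reason the stated hypothesis cannot be the right one. The spectrum of $\left(-(\textbf{X}\textbf{X}^\top)^{k+1}\right)\oplus(\textbf{X}^\top\textbf{X})^{k+1}$ is invariant under $\textbf{X}\mapsto\textbf{U}\textbf{X}\textbf{V}^\top$ with $\textbf{U}$, $\textbf{V}$ orthogonal, i.e., under rotations within the occupied and within the virtual spaces. The number of nonzero components of $\textbf{x}$ is not invariant under these rotations. So the correct hypothesis must be rotation-invariant, such as your $\mathrm{rank}\,\textbf{X}\geq 2$.

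Under that hypothesis your proof is complete. Your use of orthogonal diagonalization to get equality of spectra is genuinely needed. The inclusion $\lambda^k\in\sigma(\textbf{A}^k)$ stated in Lemma~\ref{lemma:AspectrumAk} does not by itself control every eigenvalue of the powers.

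The same correction applies to the paper's remark in Section~\ref{subsub:TDHFTDAdiscussion}. For any rank-one $\textbf{X}$, $(\textbf{X}\textbf{X}^\top)^{k+1}=\textbf{X}\textbf{X}^\top$, so every member of the sequence reproduces the CIS matrix. This holds beyond the case where only one component of $\textbf{x}$ is nonzero.
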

\begin{proof}
Knowing that in the Tamm-Dancoff limit $(\textbf{x}\tvec\textbf{x} = 1)$, the proposition is a direct consequence of Lemma \ref{lemma:detailedspectrumTDA} and Lemma \ref{lemma:AspectrumAk}.
\end{proof}
\noindent Endowed with Lemma \ref{lemma:detailedspectrumTDA} we know that no eigenvalue of any element of $({}^\mathcal{S}\tilde{\bm{\gamma}}^\Delta_k)_{k\in\mathbb{N}}$ is lower than $(-1)$ or greater than one. Moreover, propositions \ref{prop:gammaDeltaTDA0} and \ref{prop:gammaDeltaTDA} tell us that (i) each element of $({}^\mathcal{S}\tilde{\bm{\gamma}}^\Delta_k)_{k\in\mathbb{N}}$ is symmetric, and (ii) the trace of each element of $({}^\mathcal{S}\tilde{\bm{\gamma}}^\Delta_k)_{k\in\mathbb{N}}$ is zero.
\subsubsection{The case of TDHF}\label{subsub:TDHFTDAdiscussion}
The TDHF one-body reduced difference density matrix in the Tamm-Dancoff limit is known: It is the Configuration Interaction Singles (CIS) one-body reduced difference density matrix — see Ref. \cite{etienne_comprehensive_2021}. It is ${\normalfont\left(-\textbf{X}\textbf{X}^\top\right)\oplus\textbf{X}^\top\textbf{X}}$. In that case $\psi_\star$ is the Hartree-Fock wave function — denoted $\psi^\text{HF}$ in the following — and seen as the CIS approximation of the ground state wave function while $\psi^\text{CIS}=\hatcal{T}_x\psi^\text{HF}$ is the CIS approximation of an excited state wave function \cite{dreuw_single-reference_2005}, so that
$$\forall (r,s)\in\llbracket 1,L\rrbracket^2,\,\braket{\psi^\text{CIS}|\hat{s}^\dag\hat{r}|\psi^\text{CIS}} - \braket{\psi^\text{HF}|\hat{s}^\dag\hat{r}|\psi^\text{HF}} = \left({\normalfont\left(-\textbf{X}\textbf{X}^\top\right)\oplus\textbf{X}^\top\textbf{X}}\right)_{r,s}.$$
We know with Proposition \ref{prop:gammaDeltaTDA} that, unless exactly one component of $\textbf{x}$ is different from zero, no element in the $({}^\mathcal{S}\tilde{\bm{\gamma}}^\Delta_k)^\wdag_{k\in\mathbb{N}}$ matrix sequence — except the first — will give the correct TDHF one-body reduced difference density matrix when $\textbf{Y}$ will tend to zero. This lack of universality leads us to conclude that we cannot state about any element of the $({}^\mathcal{S}\tilde{\bm{\gamma}}^\Delta_k)^\wdag_{k\in\mathbb{N}}$ matrix sequence except the first that it possesses the required properties for being considered as a TDHF approximation to the one-body reduced difference density matrix of the system.
\subsubsection{The case of TDDFRT}
In the case of TDDFRT in the Tamm-Dancoff limit (TDA--TDDFRT) we are not aware of a universal consensus regarding whether $\psi_\star$ — here the Kohn-Sham wave function, denoted $\psi^\text{KS}$ in the following — can be seen as the TDA--TDDFRT approximation of the ground-state wave function while $\psi_\textbf{x}\coloneqq \hatcal{T}_x\psi^\text{KS}$ can be seen as the TDA--TDDFRT approximation of an excited-state wave function — see for example Figure 1 and section 3.3 of Ref. \cite{dreuw_single-reference_2005}. For instance, one could interpret $\psi_\textbf{x}$ as the TDDFRT--TDA approximation to one excited-state wave function that is unknowable, without admitting $\psi^\text{KS}$ as the TDA--TDDFRT ground-state wave function. Another possibility is to reject the possibility that there exists a (knowable) TDA--TDDFRT excited-state wave function, and to interpret $\psi_\textbf{x}$ as an auxiliary wave function which, when substituted into the expression of reduced quantities — Green's functions, density matrices, density functions, etc. —, or only in some of them, gives the TDA--TDDFRT approximation to these reduced quantities. Another possibility is to give $\psi_\textbf{x}$ no interpretation. We ignore whether the constitution, structure or derivation of the TDA--TDDFRT central problem may allow to determine unequivocally the interpretation of $\psi_\textbf{x}$. Hence, due to the possibility of a non-universal interpretation of $\psi_\textbf{x}$ (and $\psi^\text{KS}$) in the TDA--TDDFRT framework, we feel unable to directly transfer here the interpretation and conclusions we have drawn in the previous paragraph when TDHF was concerned.
\subsection{{Another conjecture}}
\begin{conjecture}
Let ${\normalfont\textbf{T}}$ be the $L\times L$ matrix defined in \eqref{eq:Tmatrixconjecture}. We postulate that in the $\normalfont({}^\mathcal{S}\tilde{\bm{\gamma}}^\Delta_k)_{k\in\mathbb{N}}^\wdag$ matrix sequence each element is partitioned as
\begin{equation*}
\normalfont{}^\mathcal{S}\tilde{\boldsymbol{\gamma}}^\Delta_{k} = {}^\mathcal{S}\!\boldsymbol{\Delta}^{\mathrm{NW}}_{k}\oplus{}^\mathcal{S}\!\boldsymbol{\Delta}^{\mathrm{SE}}_{k}.
\end{equation*}
The matrix sequence itself is recursively defined as follows: $({}^\mathcal{S}\tilde{\bm{\gamma}}^\Delta_0 = \tilde{\bm{\gamma}}^\Delta)$, and
\begin{equation*}
\normalfont\forall k \geq 1, \,
{}^\mathcal{S}\tilde{\boldsymbol{\gamma}}^\Delta_{k+1}
= -\dfrac{1}{2}
\left[\textbf{T}^\top,{}^\mathcal{S}\tilde{\boldsymbol{\gamma}}^\Delta_{k}, \textbf{T} \right],
\end{equation*}
i.e.,
\begin{align*}
\normalfont{}^\mathcal{S}\tilde{\boldsymbol{\gamma}}^\Delta_{k+1} = 
\dfrac{1}{4}&
\left( \normalfont
   \textbf{X}\textbf{X}^\top{}^\mathcal{S}\!\boldsymbol{\Delta}^{\mathrm{NW}}_{k}
+  \textbf{Y}\textbf{Y}^\top{}^\mathcal{S}\!\boldsymbol{\Delta}^{\mathrm{NW}}_{k}
+  {}^\mathcal{S}\!\boldsymbol{\Delta}^{\mathrm{NW}}_{k}\textbf{X}\textbf{X}^\top
+  {}^\mathcal{S}\!\boldsymbol{\Delta}^{\mathrm{NW}}_{k}\textbf{Y}\textbf{Y}^\top
-2 \textbf{X}{}^\mathcal{S}\!\boldsymbol{\Delta}^{\mathrm{SE}}_{k}\textbf{X}^\top 
-2 \textbf{Y}{}^\mathcal{S}\!\boldsymbol{\Delta}^{\mathrm{SE}}_{k}\textbf{Y}^\top 
\right)\\
\oplus 
&\left( \normalfont
  \textbf{X}^\top\textbf{X}{}^\mathcal{S}\!\boldsymbol{\Delta}^{\mathrm{SE}}_{k} 
    +  {}^\mathcal{S}\!\boldsymbol{\Delta}^{\mathrm{SE}}_{k} \textbf{X}^\top\textbf{X}
+ {}^\mathcal{S}\!\boldsymbol{\Delta}^{\mathrm{SE}}_{k}\textbf{Y}^\top\textbf{Y}
+ \textbf{Y}^\top\textbf{Y}{}^\mathcal{S}\!\boldsymbol{\Delta}^{\mathrm{SE}}_{k}
-2 \textbf{X}^\top{}^\mathcal{S}\!\boldsymbol{\Delta}^{\mathrm{NW}}_{k}\textbf{X}
-2 \textbf{Y}^\top{}^\mathcal{S}\!\boldsymbol{\Delta}^{\mathrm{NW}}_{k}\textbf{Y} \right).
\end{align*}
\end{conjecture}
\noindent All the elements of the matrix sequence in the conjecture are symmetric.

\section{Third approach — Recasting our problem into a more general framework: the equation-of-motion framework}

For a thorough and extensive review of the equations-of-motion framework, see Ref. \cite{herman1981analysis}.

\subsection{Generalities}

We define
$$T \coloneqq \left\lbrace \ket{\Psi_i}\bra{\Psi_j} \, : \, \forall (i,j)\in \llbracket 0,M\rrbracket ^2, \, (\Psi_i,\Psi_j)\in S^2\right\rbrace = \left\lbrace \hat{T}_i \, : \, i \in\llbracket 1, (M+1)^2\rrbracket\right\rbrace,$$
and $T_\mathbb{K} \coloneqq \mathrm{span}_\mathbb{K} T.$
\begin{lemma}\label{lemma:TitoAmap}
Let $\hat{A}$ be a linear map on $S_\mathbb{K}$. The $\hat{A}$ map is an endomorphism of $S_\mathbb{K}$ if and only if it is an element of $T_\mathbb{K}$:
$$\hat{A}\in\mathrm{End}(S_\mathbb{K})\Longleftrightarrow \hat{A}\in T_\mathbb{K}.$$
\end{lemma}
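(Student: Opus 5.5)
The lemma is really a statement about finite-dimensional linear algebra. $S_\mathbb{K}$ is an $(M+1)$-dimensional space with orthonormal basis $S=\{\Psi_0,\ldots,\Psi_M\}$, and $T$ is the set of $(M+1)^2$ matrix units $\ket{\Psi_i}\bra{\Psi_j}$ relative to that basis. I will prove the two inclusions $T_\mathbb{K}\subseteq \mathrm{End}(S_\mathbb{K})$ and $\mathrm{End}(S_\mathbb{K})\subseteq T_\mathbb{K}$ separately.

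\emph{First inclusion.} Each $\ket{\Psi_i}\bra{\Psi_j}$ sends $\Phi\in S_\mathbb{K}$ to $\braket{\Psi_j|\Phi}\Psi_i$. This map is linear in $\Phi$, because the inner product is linear in its second slot. Its image lies in $\mathrm{span}_\mathbb{K}\{\Psi_i\}\subseteq S_\mathbb{K}$. Hence every element of $T$ is an endomorphism of $S_\mathbb{K}$. Since $\mathrm{End}(S_\mathbb{K})$ is a $\mathbb{K}$-vector space (closed under sums and scalar multiples), it contains $\mathrm{span}_\mathbb{K}T=T_\mathbb{K}$.

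\emph{Second inclusion.} Let $\hat{A}\in\mathrm{End}(S_\mathbb{K})$. I will use the resolution of the identity on $S_\mathbb{K}$, namely $\hat{\mathds{1}}_{S_\mathbb{K}}=\sum_{i=0}^M\ket{\Psi_i}\bra{\Psi_i}$. This holds because $S$ is an orthonormal basis: any $\Phi=\sum_j\lambda_j\Psi_j$ has $\lambda_j=\braket{\Psi_j|\Phi}$. Inserting this identity on both sides of $\hat{A}$ gives
$$\hat{A}=\sum_{i=0}^M\sum_{j=0}^M\braket{\Psi_i|\hat{A}|\Psi_j}\,\ket{\Psi_i}\bra{\Psi_j}.$$
The left identity insertion is legitimate precisely because $\hat{A}\Psi_j\in S_\mathbb{K}$, which is where the endomorphism hypothesis is used. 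The coefficients $\braket{\Psi_i|\hat{A}|\Psi_j}$ lie in $\mathbb{K}$, since $\hat{A}\Psi_j$ is a $\mathbb{K}$-combination of the $\Psi_i$ and those are its coordinates. Therefore $\hat{A}\in T_\mathbb{K}$.

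The argument is essentially routine. The one step needing care is the coefficient field when $\mathbb{K}=\mathbb{R}$: the coefficients $\braket{\Psi_i|\hat{A}|\Psi_j}$ must be real. I would handle this by defining them directly as the coordinates of $\hat{A}\Psi_j$ in the basis $S$, rather than as inner products computed in an ambient complex space, so that membership in $\mathbb{K}$ is automatic.
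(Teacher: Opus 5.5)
Your proof is correct and follows the paper's argument. The reverse inclusion uses that each $\ket{\Psi_i}\bra{\Psi_j}$ is an endomorphism and that $\mathrm{End}(S_\mathbb{K})$ is closed under linear combinations; the forward inclusion inserts the resolution of the identity on both sides of $\hat{A}$. Your extra remarks on where the endomorphism hypothesis enters and on the coefficients lying in $\mathbb{K}$ are sound refinements that the paper leaves implicit.
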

\begin{proof}
We first prove $(\hat{A}\in\mathrm{End}(S_\mathbb{K})\Longleftarrow \hat{A}\in T_\mathbb{K})$. The elements of $T$ are obviously endomorphisms of $S_\mathbb{K}$. We know that any linear combination of endomorphisms of a vector space $V$ is an endomorphism of $V$. Therefore, if $\hat{A}$ belongs to $T_\mathbb{K}$ it is an endomorphism of $S_\mathbb{K}$.

We now prove $(\hat{A}\in\mathrm{End}(S_\mathbb{K})\Longrightarrow \hat{A}\in T_\mathbb{K})$. If the $\hat{A}$ map belongs to $\mathrm{End}(S_\mathbb{K})$, it can be rewritten as
$$\hat{A} = \hat{\mathds{1}}_{S_\mathbb{K}} \hat{A} \hat{\mathds{1}}_{S_\mathbb{K}},$$
where $\hat{\mathds{1}}_{S_\mathbb{K}}$ is the identity map on $S_\mathbb{K}$. Since $S$ is a basis of $S_\mathbb{K}$, and since $S$ is orthonormal, the resolution of identity in $S_\mathbb{K}$ reads
$$\hat{\mathds{1}}_{S_\mathbb{K}} = \sum _{i=0}^M \ket{\Psi_i}\bra{\Psi_i}.$$
For that reason, we can rewrite $\hat{A}$ as
$$\hat{A} = \sum _{i=0}^M\sum_{j=0}^M \braket{\Psi_i |\hat{A}|\Psi_j} \ket{\Psi_i}\bra{\Psi_j},$$
which is an element of $T_\mathbb{K}$.
\end{proof}
\subsection{Transition properties with the EOM formalism}
\noindent We can order the eigenvalues of $\hat{H}$ corresponding to the elements of $S$ into a list: $(E_i)_{i\in\llbracket 0, M\rrbracket}$. The problem of finding, for a given $n$ in $\llbracket 1, M\rrbracket$ an operator, $\hat{O}^\dag_n$, such that for every $i$ in $\llbracket 1, (M+1)^2\rrbracket$ we have
\begin{equation}\label{eq:EOMprob_BASIC}
\braket{\Psi_{0}|[\hat{T}_i^\dag,[\hat{H},\hat{O}_n^\dag]]|\Psi_{0}} 
= (E_n-E_0)\braket{\Psi_{0}|[\hat{T}_i^\dag,\hat{O}_n^\dag]|\Psi_{0}}
\end{equation}
or, in a symmetrized version,
\begin{equation}\label{eq:EOMprob_SYMMETRIZED}
\braket{\Psi_{0}|[\hat{T}_i^\dag,\hat{H},\hat{O}_n^\dag]|\Psi_{0}} 
= (E_n-E_0)\braket{\Psi_{0}|[\hat{T}_i^\dag,\hat{O}_n^\dag]|\Psi_{0}},
\end{equation}
is known as the problem of solving the so-called \textit{equations-of-motion} for $\hat{H}$. Elements of 
$$\{ \hat{T}_{0\rightarrow m} \, : \, m \in \llbracket 1,M\rrbracket\}$$
are operator solutions of both \eqref{eq:EOMprob_BASIC} and \eqref{eq:EOMprob_SYMMETRIZED}, but it has been proved that the most general operator solution to both \eqref{eq:EOMprob_BASIC} and \eqref{eq:EOMprob_SYMMETRIZED} is a special type of element of $T_\mathbb{K}$ of the form
\begin{equation}\label{eq:def:Odagn}
{\hat{O}}^\dag_n :=
\ket{\Psi_{n}}\!\bra{\Psi_{0}} 
+ \alpha_{0,0} \ket{\Psi_{0}}\!\bra{\Psi_{0}} 
+ \sum_{p=1}^M\sum_{q=1}^M \alpha_{p,q}\ket{\Psi_{p}}\!\bra{\Psi_{q}}
\end{equation}
where the $\alpha$ scalars are arbitrary coefficients \cite{herman1981analysis,lynch_excited_nodate}. From this most general operator it is possible to express the transition expectation value of any $\hat{A}$ operator — assumed here to belong to $\mathrm{End}(S_\mathbb{K})$ — relatively to any couple of non-necessarily distinct excited states \cite{lynch_excited_nodate}:
\begin{equation}\label{eq:APsinPsim}
\forall (n,m)\in \llbracket 1,M\rrbracket^2,\,\bra{\Psi_n}\hat{A}\ket{\Psi_m} = \braket{\Psi_{0}|[{\hat{O}_{n}^\wdag},\hat{A},{\hat{O}}_m^\dag]|\Psi_{0}}
+\dfrac{1}{2}\braket{\Psi_{0}|[[{\hat{O}}_{n}^\wdag,{\hat{O}}_{m}^\dag ],\hat{A}]_+|\Psi_{0}}
\end{equation}
where the anticommutator ``$[\cdot,\cdot]_+$'' is defined as
\begin{equation*}
[\hat{A},\hat{B}] _+
\coloneqq \hat{A}\hat{B} + \hat{B}\hat{A}.
\end{equation*}
In \eqref{eq:APsinPsim}, when $n$ is equal to $m$, the relation reduces to
\begin{equation}\label{eq:lynch24modified}
\bra{\Psi_n}\hat{A}\ket{\Psi_n} = \braket{\Psi_{0}|[{\hat{O}_{n}^\wdag},\hat{A},{\hat{O}}_n^\dag]|\Psi_{0}}
+\dfrac{1}{2}\braket{\Psi_{0}|[[{\hat{O}}_{n}^\wdag,{\hat{O}}_{n}^\dag ],\hat{A}]_+|\Psi_{0}}.
\end{equation}
\subsection{The case of TDHF}
Due to Lemma \ref{lemma:TitoAmap} and the bilinearity of commutators, if $\hat{O}_n^\dag$ is an operator solution to \eqref{eq:EOMprob_BASIC} it is also an operator solution to
\begin{equation} 
\braket{\Psi_{0}|[\hat{A},[\hat{H},\hat{O}_n^\dag]]|\Psi_{0}} 
= (E_n-E_0)\braket{\Psi_{0}|[\hat{A},\hat{O}_n^\dag]|\Psi_{0}}
\end{equation}
where $\hat{A}$ is an element of $\mathrm{End}(S_\mathbb{K})$.
\noindent We consider the
\begin{equation*}
\hat{\mathcal{T}}_{0\rightarrow n} = \sum_{i=1}^{N}\sum_{a=N+1}^{L} 
  (\bold{x}^n_{ia}\hat{a}^\dag\hat{i}^\wdag 
- \bold{y}^n_{ia}\hat{i}^\dag\hat{a}^\wdag)
\end{equation*}
map. This operator is inspired by the operator introduced in \eqref{eq:def:Tcurved} in which we identify the transition of interest. According to Ref. \cite{lynch_excited_nodate} — see discussion of formula (24) —, substituting $\Psi_0$ by the $\psi_\star$ that is the Hartree-Fock wave function — denoted again $\psi^{\mathrm{HF}}$ below —, and $\hat{O}_n^\dag$ by $\hat{\mathcal{T}}_{0\rightarrow n}$ is the RPA--TDHF approximation, where ``RPA'' stands for ``Random Phase Approximation''. In that case, $\hat{H}$ must be substituted by an $\hatcal{H}$ operator corresponding to the finite-dimensional $\mathcal{B}$ ordered basis, and $\hat{A}$ must be substituted by an $\hatcal{A}$ operator that is an element of $\mathrm{End}(\mathcal{C}_\mathbb{R})$. When choosing $(\hatcal{A} = \hat{i}^\dag\hat{a}$), we get
\begin{equation}\label{eq:EOM-deex-HF}
\left\langle \psi^{\mathrm{HF}} \left|
\left[ \hat{i}^\dag\hat{a}^\wdag, \left[ \hatcal{H}, \hat{\mathcal{T}}_{0\rightarrow n} \right] \right] 
\right|\psi^{\mathrm{HF}} \right\rangle  
= \omega^\mathrm{TDHF}_n
\left\langle \psi^{\mathrm{HF}} \left|
\left[ \hat{i}^\dag\hat{a}^\wdag, \hat{\mathcal{T}}_{0\rightarrow n} \right] 
\right|\psi^{\mathrm{HF}} \right\rangle 
\end{equation}
\noindent In order to recast this equation into a matrix equation, we introduce the $\bold{A}$ and $\bold{B}$ matrices
\begin{align*}
\forall (i,j)\in\llbracket 1,N\rrbracket,\, \forall (a,b)\in\llbracket (N+1),L\rrbracket,\, \bold{A}_{ia,jb} 
&\coloneqq  
\left\langle\psi^{\mathrm{HF}}\left|
\left[ \hat{i}^\dag\hat{a}^\wdag, \left[ \hatcal{H}, \hat{b}^\dag\hat{j}^\wdag \right] \right]
\right|\psi^{\mathrm{HF}} \right\rangle,\\
-\bold{B}_{ia,jb}
&\coloneqq 
\left\langle\psi^{\mathrm{HF}}\left|
\left[ \hat{i}^\dag\hat{a}^\wdag, \left[ \hatcal{H}, \hat{j}^\dag\hat{b}^\wdag \right] \right]
\right|\psi^{\mathrm{HF}} \right\rangle
.
\end{align*}
\noindent We highlight that the definition of the $\bold{B}$ includes a minus sign. Equation \eqref{eq:EOM-deex-HF} reads
\begin{equation*}
\sum_{j=1}^N \sum_{b=N+1}^L (\bold{A}_{ia,jb} \bold{x}_{jb}^n + \bold{B}_{ia,jb} \bold{y}_{jb}^n) = \omega^\mathrm{TDHF}_n\bold{x}_{ia}^n
\end{equation*}
which is equivalent to the matrix equation
\begin{equation}\label{eq:casida_1}
\bold{A}\bold{x}^n + \bold{B}\bold{y}^n = \omega^\mathrm{TDHF}_n \bold{x}^n.
\end{equation}
\noindent If, rather than choosing $(\hatcal{A} = \hat{i}^\dag\hat{a})$ one chooses $(\hatcal{A} = \hat{a}^\dag\hat{i})$, the result is
\begin{equation}\label{eq:casida_2}
-\bold{B}\bold{x}^n - \bold{A}\bold{y}^n = \omega^\mathrm{TDHF}_n \bold{y}^n.
\end{equation}
\noindent Equations \eqref{eq:casida_1} and \eqref{eq:casida_2} can be jointly recast into a generalized eigenvalue problem, namely
\begin{eqnarray}\label{eq:TDHFCASIDA} 
 \left(
  \begin{array}{cc}
    \bold{A}   & \bold{B}    \\
   \bold{B} & \bold{A} \\
  \end{array}\right)
 \left(
  \begin{array}{cc}
   \bold{x}^n \\
   \bold{y}^n \\
  \end{array}\right) = \omega^\mathrm{TDHF}_n
 \left(
  \begin{array}{cc}
    \bold{I}_d   &  \bold{0}_{d} \\
   \bold{0}_{d}  &  -\bold{I}_d   \\
  \end{array}\right)
 \left(
  \begin{array}{cc}
   \bold{x}^n \\
   \bold{y}^n \\
  \end{array}\right)
\end{eqnarray}
with $d=(N\times(L-N))$. We see that $\hatcal{T}_{0\rightarrow n}$ — hence, $\hatcal{T}$ when we consider \textit{any} transition — is regarded as an approximation to the most general form of $\hat{O}^\dag_n$ given in \eqref{eq:def:Odagn}, involved in the search for approximate solutions to \eqref{eq:EOMprob_BASIC}. Due to that interpretation of $\hatcal{T}_{0\rightarrow n}$ we can write the TDHF approximate version of \eqref{eq:lynch24modified}:
$$\braket{\psi^{\mathrm{TDHF}}_n |\hat{A}|\psi_n^{\mathrm{TDHF}}} - \braket{\psi^{\mathrm{HF}} |\hat{A}|\psi^{\mathrm{HF}}} = \braket{\psi^{\mathrm{HF}} |[\hatcal{T}_{0\rightarrow n}^\dag,\hat{A},\hatcal{T}_{0\rightarrow n}^\wdag]|\psi^{\mathrm{HF}}},$$
that is an adapted version of formula (26) in Ref. \cite{lynch_excited_nodate}. We can use Lemma \ref{lemma:jacobi} to show that, for $(\hatcal{A} = \hat{s}^\dag\hat{r})$ with $r$ and $s$ both belonging to $\llbracket 1, L \rrbracket$,
$$\braket{\psi^{\mathrm{HF}} |[\hatcal{T}_{0\rightarrow n}^\dag,\hat{s}^\dag\hat{r},\hatcal{T}_{0\rightarrow n}^\wdag]|\psi^{\mathrm{HF}}} = \braket{\psi^{\mathrm{HF}} |[\hatcal{T}_{0\rightarrow n}^\dag,[\hat{s}^\dag\hat{r},\hatcal{T}_{0\rightarrow n}^\wdag]]|\psi^{\mathrm{HF}}}.$$
Hence, according to \eqref{eq:EOMisFA} the TDHF one-body reduced difference density matrix \textit{is} the $\tilde{\bm{\gamma}}^\Delta$ matrix given in \eqref{eq:RPA_1DDM}. We believe it is important to notice that, though the TDHF one-body reduced difference density matrix derived in the EOM framework coincides with the one given in the original formulation of our problem — i.e., $\tilde{\bm{\gamma}}^\Delta$ —, the two results are obtained by substituting different operators by ``$\hatcal{T}\,$''. Indeed, the substitution operator replaces an operator that has a more general form in the EOM derivations, while in the original formulation of our problem it is $\ket{\Psi_n}\bra{\Psi_0}$ that is substituted by $\hatcal{T}$, though the latter cannot be regarded as an approximation to the former — the $\hatcal{T}$ map is, in the general case, neither nilpotent nor number- or norm-conserving, and in the Tamm-Dancoff limit it is number- and norm-conserving but not nilpotent:
$$((\braket{\psi^\text{HF}|\psi^\text{CIS}} = 0) \land \neg(\hatcal{T}_x\hatcal{T}_x = 0\hat{\mathds{1}}_{\mathcal{C}_\mathbb{R}})) \Longrightarrow \neg (\hatcal{T}_x\ket{\psi^\text{HF}} = \ket{\psi^\text{CIS}} \Longrightarrow \hatcal{T}_x = \ket{\psi^\text{CIS}}\bra{\psi^\text{HF}})$$
where ``$\land$'' (respectively, ``$\neg$'') is here the logical conjunction (respectively, negation) connective. This makes us think that, though the result is identical, the correct framework for deriving the TDHF one-body reduced difference density matrix is the EOM framework. These conclusions actually dissolve the central problem of our paper when TDHF is concerned.

\subsection{The case of TDDFRT}

We have seen in the previous paragraph that the TDHF one-body reduced difference density matrix can be \textit{derived} in the EOM framework because the RPA--TDHF solutions are considered as ``approximate solutions to the EOM'' problem — see again Ref. \cite{lynch_excited_nodate}. However, to our knowledge, this is not the case for TDDFRT. Though the TDDFRT central equation has a structure that is strictly identical to \eqref{eq:TDHFCASIDA}, we are not certain that there exists a universal consensus — or even an actual debate — about whether this equation in the TDDFRT case can be regarded as the result of inserting approximate objects in either \eqref{eq:EOMprob_BASIC} or \eqref{eq:EOMprob_SYMMETRIZED}, i.e., that the TDDFRT solutions can be regarded as approximate solutions to the EOM problem. We believe that it is because the components of \textbf{x} and \textbf{y} in the solutions of \eqref{eq:TDHFCASIDA} usually have the same \textit{interpretation} in TDHF and in TDDFRT that the structure of the one-body reduced difference density matrix is identical for the two methods. From what we know we feel unable to declare that the TDDFRT one-body reduced density matrix is \textit{derived} as an approximate object in the EOM framework, and to dissolve the central problem of our paper with such a conclusion. 

If the TDDFRT one-body reduced density matrix has to be \textit{defined} rather than being \textit{derived}, and if its definition must come from a parallel drawn with TDHF, such a parallel must lie in the TDHF derivation of its one-body reduced difference density matrix in the TDHF--EOM framework rather than in the framework built in the original or symmetrized formulation of the problem in this paper. That is to say, the parallel should not be drawn from the TDHF approximation of \eqref{eq:EOM_1DDM}, but rather a few steps before, i.e., the TDHF approximation of \eqref{eq:lynch24modified}. This is also true when the Bethe-Salpeter Equation is concerned, for the same reasons. 

\section{Conclusions and perspectives}

We have started our discussion with a problem given in a (nested) double-commutator formulation: In the exact case the choice of the operator used for deriving the one-body reduced difference density matrix elements is arbitrary since any choice leads to the same result; in the approximate case different choices of operators — with their structure being identical to those used in the exact case — lead to different matrices. We have proved that only one of them is symmetric, which is a strict requirement. 

\noindent When generalizing our initial problem with a symmetrized (nested) double-commutator formulation, this argument does not hold anymore, because all the possible candidate matrices are symmetric. However, we have seen for that framework that (i) in the TDHF case only one of such matrices can be regarded as a potential candidate because all the others miss the important property that in the Tamm-Dancoff limit they become the required object, and (ii) in the TDDFRT case the same conclusion holds only for strong interpretations of the Tamm-Dancoff solutions.

Recasting our problem in the equations-of-motion framework lead us to make an additional distinction between the TDHF and the TDDFRT cases. The solutions to the former can be regarded as approximate solutions to the EOM problem, and in fact the EOM framework readily provides a procedure for deriving the object that was of interest to us in this study. For that reason the central problem of our paper can be regarded as a non-problem when the TDHF method is concerned. On the other hand we actually miss this interpretation for the TDDFRT solutions — in other words we do not recast the TDDFRT problem into the EOM framework.

This study focused on \textit{one} object in the representation of a molecular electronic transition. More precisely we focused on one property of one object in the representation of a molecular electronic transition — is the candidate one-body reduced difference density matrix symmetric or not? In the text we have also mentioned the fact that such a matrix must have a trace that is equal to zero, but this property was readily fulfilled by all the forms that have been derived, and by all the forms that have been conjectured. A perspective for overcoming the non-universality of the interpretation of the Tamm-Dancoff solutions when the problem is given for the one-body reduced difference density matrix in the symmetrized (nested) double-commutator formulation would be to look at other acceptability criteria for that matrix — e.g., the boundary values for the eigenvalues of the candidate matrices: If those values can be found below $(-1)$ or above $1$, this would question the $N$--representability of the (possibly unknowable) states at stake in the representation. Another perspective would be to look at other objects — e.g., the one-body reduced \textit{transition} density matrix —, or to see if the use of some candidate matrices violates some fundamental \textit{relations}. 

\section*{Acknowledgements}

The authors wish to acknowledge their colleagues from the LPCT research unit for their support and for very fruitful discussions on the topic.

\end{document}